\newcommand{\setR}{\mathbbmss{R}}
\newcommand{\setS}{\mathbbmss{S}}
\newcommand{\setJ}{\mathbbmss{J}}
\newcommand{\setC}{\mathbbmss{C}}
\newcommand{\setI}{\mathbbmss{I}}
\newcommand{\setF}{\mathbbmss{F}}
\newcommand{\Ex}[2]{ \mathbbm{E}_{#2} \left\lbrace #1 \right\rbrace }
\newcommand{\rmj}{\mathrm{j}}
\newcommand{\rmp}{\mathrm{p}}
\newcommand{\argmin}{\mathop{\mathrm{argmin}}}
\newcommand{\mah}{\mathcal{H}}
\newcommand{\maf}{\mathcal{F}}
\newcommand{\mae}{\mathcal{E}}
\newcommand{\maz}{\mathcal{Z}}
\newcommand{\mas}{\mathcal{S}}
\newcommand{\mav}{\mathcal{V}}
\newcommand{\man}{\mathcal{N}}
\newcommand{\mam}{\mathcal{M}}
\newcommand{\mao}{\mathcal{O}}
\newcommand{\maq}{\mathcal{Q}}
\newcommand{\mai}{\mathcal{I}}
\newcommand{\mac}{\mathcal{C}}
\newcommand{\mal}{\mathcal{L}}
\newcommand{\bxx}{\mathbf{x}}
\newcommand{\bww}{\mathbf{w}}
\newcommand{\buu}{\mathbf{u}}
\newcommand{\bvv}{\mathbf{v}}
\newcommand{\byy}{\mathbf{y}}
\newcommand{\mPi}{\boldsymbol{\Pi}}
\newcommand{\rmw}{\mathrm{w}}
\newcommand{\bx}{{\boldsymbol{x}}}
\newcommand{\vv}{\mathrm{v}}
\newcommand{\xx}{\mathrm{x}}
\newcommand{\yy}{\mathrm{y}}
\newcommand{\zz}{\mathrm{z}}
\newcommand{\set}[1]{\left\lbrace#1\right\rbrace}
\newcommand{\brc}[1]{\left( #1 \right) }
\newcommand{\inner}[1]{\left\langle #1 \right\rangle }
\newcommand{\dbc}[1]{\left[ #1 \right] }
\newcommand{\bs}{{\boldsymbol{s}}}
\newcommand{\br}{{\mathbf{r}}}
\newcommand{\bu}{{\mathbf{u}}}
\newcommand{\bv}{{\mathbf{v}}}
\newcommand{\bzz}{{\mathbf{z}}}
\newcommand{\baa}{{\mathbf{a}}}
\newcommand{\dif}{\mathrm{d}}
\newcommand{\by}{{\boldsymbol{y}}}
\newcommand{\trp}{\mathsf{T}}
\newcommand{\mA}{\mathbf{A}}
\newcommand{\mI}{\mathbf{I}}
\newcommand{\mone}{\mathbf{1}}
\newcommand{\mJ}{\mathbf{J}}
\newcommand{\mG}{\mathbf{G}}
\newcommand{\mQ}{\mathbf{Q}}
\newcommand{\mU}{\mathbf{U}}
\newcommand{\mX}{\mathbf{X}}
\newcommand{\mV}{\mathbf{V}}
\newcommand{\mL}{\mathbf{L}}
\newcommand{\mC}{\mathbf{C}}
\newcommand{\mY}{\mathbf{Y}}
\newcommand{\md}{\mathrm{D}}
\newcommand{\norm}[1]{\lVert #1 \rVert}
\newcommand{\abs}[1]{\lvert #1 \rvert}
\newcommand{\tr}[1]{\mathrm{tr} \{ #1 \}}
\newtheoremstyle{mystyle}%                % Name
{}%                                     % Space above
{}%                                     % Space below
{\it}%                                     % Body font
{}%                                     % Indent amount
{\bfseries}%                            % Theorem head font
{:}%                                     % Punctuation after theorem head
{ }%                                    % Space after theorem head, ' ', or \newline
{}%                                     % Theorem head spec (can be left empty, meaning `normal')
\theoremstyle{mystyle}
\newtheorem{definition}{Definition}
\newtheorem{proposition}{Proposition}
\newtheorem{remark}{Remark}
\newtheorem{conjecture}{Conjecture}
\newtheorem{example}{Example}[section]
\newtheorem{corollary}{Corollary}
\newtheorem{result}{Result}
\newtheorem{lemma}{Lemma}
\newtheorem{conclusion}{Conclusion}
\newcounter{bar}
\begin{document}

\begin{acronym}
	\acro{mimo}[MIMO]{multiple-input multiple-output}
	\acro{csi}[CSI]{channel state information}
	\acro{awgn}[AWGN]{additive white Gaussian noise}
	\acro{iid}[i.i.d.]{independent and identically distributed}
	\acro{uts}[UTs]{user terminals}
	\acro{bs}[BS]{base station}
	\acro{tas}[TAS]{transmit antenna selection}
	\acro{glse}[GLSE]{generalized least square error}
	\acro{rhs}[r.h.s.]{right hand side}
	\acro{lhs}[l.h.s.]{left hand side}
	\acro{wrt}[w.r.t.]{with respect to}
	\acro{rs}[RS]{replica symmetry}
	\acro{rsb}[RSB]{replica symmetry breaking}
	\acro{np}[NP]{non-deterministic polynomial-time}
	\acro{papr}[PAPR]{peak-to-average power ratio}
	\acro{rzf}[RZF]{regularized zero forcing}
	\acro{snr}[SNR]{signal-to-noise ratio}
	\acro{rem}[REM]{random energy model}
	\acro{mf}[MF]{matched filtering}
	\acro{gamp}[GAMP]{generalized AMP}
	\acro{amp}[AMP]{approximate message passing}
	\acro{vamp}[VAMP]{vector AMP}
	\acro{map}[MAP]{maximum-a-posterior}
	\acro{ml}[ML]{maximum likelihood}
	\acro{mmse}[MMSE]{minimum mean squared error}
	\acro{ap}[AP]{average power}
	\acro{ldgm}[LDGM]{low density generator matrix}
	\acro{tdd}[TDD]{time division duplexing}
	\acro{rss}[RSS]{residual sum of squares}
	\acro{rls}[RLS]{regularized least-squares}
	\acro{ls}[LS]{least-squares}
	\acro{erp}[ERP]{encryption redundancy parameter}
	\acro{ra}[RA]{reflect-array}
	\acro{ta}[TA]{transmit-array}
	\acro{had}[HAD]{hybrid analog-digital}
\end{acronym}

\title{Bayesian Inference with Nonlinear Generative Models: Comments on Secure Learning}

\author{
\IEEEauthorblockN{
Ali Bereyhi,
Bruno Loureiro,
Florent Krzakala,
Ralf R. M\"uller,\\ and
Hermann Schulz-Baldes
%\begin{small}
%	\today	
%\end{small}
\thanks{This work has been presented in part in the 2022 IEEE International Symposium on Information Theory (ISIT) in Espoo, Finland \cite{bereyhi2022isit}.}
\thanks{Ali Bereyhi and Ralf R. M\"uller are with the Institute for Digital Communications (IDC) at Friedrich-Alexander Universit\"at (FAU) Erlangen-N\"urnberg; \textit{Emails:} \texttt{\{ali.bereyhi,ralf.r.mueller\}@fau.de}. Bruno Loureiro and Florent Krzakala are with the Information, Learning and Physics Lab (IdePHICS) at \'Ecole Polytechnique F\'ed\'erale de Lausanne (EPFL); \textit{Emails}: \texttt{\{bruno.loureiro,florent.krzakala\}@epfl.ch}. Hemann Schulz-Baldes is with the Department of Mathematics at FAU Erlangen-N\"urnberg; \textit{Email:}: \texttt{schuba@mi.uni-erlangen.de}.}
}
%\IEEEauthorblockA{
%Friedrich-Alexander Universit\"at Erlangen-N\"urnberg, \\
%École Polytechnique Fédérale de Lausanne
%%\{
%%ali.bereyhi@fau.de
%%, saba.asaad, ralf.r.mueller\}@fau.de, schuba@mi.uni-erlangen.de
\thanks{This work was supported by the Emerging Talents Initiative (ETI) at  FAU Erlangen-N\"urnberg.}
%}
}

%\IEEEspecialpapernotice{(Invited Paper)}

\IEEEoverridecommandlockouts

\maketitle

\begin{abstract}
Unlike the classical linear model, nonlinear generative models have been addressed sparsely in the literature of statistical learning. This work aims to bringing attention to these models and their secrecy potential. To this end, we invoke the replica method to derive the asymptotic normalized cross entropy in an inverse probability problem whose generative model is described by a Gaussian random field with a generic covariance function. Our derivations further demonstrate the asymptotic statistical decoupling of the Bayesian estimator and specify the decoupled setting for a given nonlinear model.

The replica solution depicts that \textit{strictly nonlinear} models establish an \textit{all-or-nothing} phase transition: There exists a critical load at which the optimal Bayesian inference changes from \textit{perfect} to an \textit{uncorrelated} learning. Based on this finding, we design a new secure coding scheme which achieves the secrecy capacity of the wiretap channel. This interesting result implies that \textit{strictly nonlinear} generative models are perfectly secured without any secure coding. We justify this latter statement through the analysis of an illustrative model for \textit{perfectly secure and reliable} inference.%We show that the asymptotic maximal load predicted by the replica solution for perfectly secure and reliable inference in these models meets the information-theoretic limit derived from Wyner's secure coding theorem for wiretap settings. This interesting finding indicates that a nonlinear Gaussian random field describes an optimal secure coding scheme for the wiretap channel. The proposed coding has a \textit{significantly} smaller codebook size compared to the random coding scheme of Wyner.
\end{abstract}

\begin{IEEEkeywords}
Nonlinear generative models, Bayesian inference, random Gaussian fields, information-theoretically secure learning, replica method, decoupling principle.
\end{IEEEkeywords}

\IEEEpeerreviewmaketitle

\section{Introduction}
Inference over linear models is a widely-investigated problem in the literature of statistical learning \cite{bereyhi2020thesis,bereyhi4,bereyhi1Extension,barbier2016mutual,barbier2018mutual,barbier2020mutual,barbier2021performance,tramel2014statistical}. The popularity of linear models comes from a simple fact: They have a large scope of applications from communications and information theory \cite{tanaka2002statistical,guo2005randomly,bereyhi3,bereyhi10,tulino2013support,barbier2014replica,bereyhi2021detection} to signal processing and machine learning \cite{bereyhi7,barbier2021statistical,maillard2021perturbative,yang2014compressive,lelarge2017fundamental,lelarge2019asymptotic,mignacco2020role,thrampoulidis2020theoretical}. Though linear models have been extensively studied, the problem of inference from observations given by a nonlinear model has been touched very sparsely \cite{sourlas1989spin,kabashima1999statistical,fyodorov2019spin}. This work bridges this gap by demonstrating the potential gains which can be achieved by nonlinear models.

Unlike in the statistical learning literature, nonlinear models are well known in physics. Indeed, various spin glass models from statistical mechanics are described by nonlinear fields \cite{edwards1975theory,mezard1987spin,panchenko2013sherrington,talagrand2006free}. Following the close bonds between the applied concepts in information theory, such as channel and source coding, and theoretical concepts in statistical mechanics, such as Gibbs' inequality, \cite{jaynes1957information,jaynes1957information2,nishimori2001statistical,merhav2010statistical}, it is natural to ask how nonlinear models of spin glasses are connected to information-theoretic concepts. This connection has been studied only sparsely in the literature of statistical learning. The most seminal line of work in this direction was considered by Sourlas in \cite{sourlas1989spin}, where it is shown that by designing an error-correcting code based on a nonlinear spin glass model, one can achieve an \textit{excellent} performance which, under some circumstances, approaches Shannon's limit\footnote{Sourlas' seminal result is discussed in more detail later in this manuscript.}. This seminal work opened the door to a key application of nonlinear models in statistical learning and led to a fruitful synergy between concepts in information theory and statistical physics; see for instance \cite{sourlas1994spin,kabashima1999statistical}. Nevertheless, the development of these study lines into a general framework for learning from a nonlinear generative model has been left unaddressed in the literature of statistical learning.

%Inspired by earlier studies in \cite{sourlas1989spin,kabashima1999statistical,fyodorov2019spin} and \cite{sourlas1994spin}, 

In this work, we study the problem of inferring model parameters from observations given by a Gaussian nonlinear generative model. Our derivations reveal an intuitive result: while Bayesian inference from linear observations always contain information about the ground truth, the output of a nonlinear model can become completely uncorrelated to the true model parameters. This behavior is analytically described through asymptotic characterization of the joint statistics of the optimal Bayesian estimate and the ground truth via the replica method. Our derivations show that the overlap between the optimal Bayesian inference and the true model parameters exhibits a first-order phase transition from \textit{perfect learning} to \textit{uncorrelated learning}, when the generative model is \textit{strictly} nonlinear. In the light of this fundamental finding, we illustrate the application of nonlinear models to the problem of secure coding. It is shown that using a strictly nonlinear generative model as an encoder, the secrecy capacity of the wiretap channel \cite{wyner1975wire} is asymptotically achieved.

Our findings on the bonds between nonlinear models and the problem of secure transmission further answers the question concluded by Fyodorov in his recent study \cite{fyodorov2019spin}. In this work, a nonlinear Gaussian field is used to encrypt data for transmission over an \ac{awgn} channel. The results of this work lead to an interesting conclusion: Using strictly nonlinear fields, the average error of recovery shows a second-order phase transition with respect to the noise variance. Following this finding, Fyodorov states this question: can a nonlinear encryption model lead to secure transmission with a publicly known codebook\footnote{This questions is stated by Fyodorov through an intuitive short discussion on the main results in \cite[Secion 1.2.1]{fyodorov2019spin}.}? We answer to this question by linking Fyodorov's encryption with the seminal work of Sourlas and establishing a coding scheme for secure transmission over a wiretap channel\footnote{We show this connection in Section~\ref{sec:Secrecy2}.}. Our derivations demonstrate that although non-zero secrecy rate is achievable by combining Fyodorov's framework and that of Sourlas, it does not achieve the secrecy capacity. This is in contrast to our proposed secure coding scheme which achieves the secrecy capacity of the wiretap channel. 

%\bl{In my opinion this paragraph reads weak in the introduction. The true motivation for the work (the connection between Fyodorov and Sourlas) could be explained later (or at least after in the paragraph). I would invert the order and start instead with something more to the point of the work and results. I don't want to risk writing something inaccurate, but I would go on the lines of: "In this work we are interested in the important problem of securing a channel against an eavesdropper. Inspired by the connection with spin glasses, we show that a nonlinear Gaussian encoder achieves the secrecy capacity of Wyner's wiretap channel. This result follows from the observation that [...]"}

\subsection{Contributions and Tools}
We study Bayesian inference over a nonlinear generative model which is described by a Gaussian random field. %Though related, our setting is different from the one considered by Fyodorov in \cite{fyodorov2019spin}. 
We investigate the inverse probability problem for a generally mismatched Bayesian inference algorithm. For the considered generic model, we derive the normalized cross entropy as a generic performance metric, from which various information metrics can be derived as special cases, e.g., information rate. We further show the decoupling principle for this model and derive explicitly the decoupled scalar setting. This latter result enables us to evaluate a larger class of asymptotic metrics, such as a generic average input-output distortion and the average cross correlation, known as the \textit{overlap}. Our derivations lead to the following key findings:
\begin{itemize}
	\item With linear models, inference via the optimal Bayesian algorithm \textit{always} contains \textit{non-zero} correlation with the ground truth; however, the achievable information rate never touches Shannon's limit.
	\item Strictly nonlinear models achieve the maximal information rate given by Shannon's limit; however, they show a first-order \textit{phase transition} with respect to the system load at which the learned parameters become \textit{uncorrelated} with the ground truth.
\end{itemize}

The above findings lead us to an interesting result: To achieve the secure capacity of a wiretap channel, one can directly encode the information bits by passing them through a strictly nonlinear Gaussian random field. This finding not only re-discovers Wyner's result on the wiretap channel, but also suggests a new secure coding scheme which achieves the secure capacity of the wiretap channel. An intuitive conclusion from this result is that strictly nonlinear generative models are secure by nature. We give a heuristic proof to this fact by considering the problem of \textit{perfectly secure and reliable} inference in a wiretap setting. For strictly nonlinear models, the maximum number of securely inferred labels per dimension meets the information-theoretic limit given by Wyner in \cite{wyner1975wire}. %Our secure model is then compared to an alternative one designed based on Fyodorov's result. The comparison shows the sub-optimality of the scheme based on Fyodorov's encryption technique. 
We conclude this work by discussing some new directions for further investigations.

The key tool for large-system analysis in this work is the replica method from the theory of spin glasses and commonly utilized as an analytical tool in the context of information theory and signal processing; see for example \cite{tanaka2002statistical,guo2005randomly,mezard2009information,rangan2012asymptotic,bereyhi3,zaidel2012vector}. A comprehensive introduction to the replica method and its application to the analysis of linear models can be found in \cite{bereyhi2020thesis}. The derivations in this work consider the \textit{replica symmetric} ansatz. This follows from the fact that for Bayes-optimal inference the corresponding spin glass exhibits replica symmetry\footnote{This property is discussed in detail later on.} \cite{zdeborova2016statistical}. However, deviation from the indicated conditions can lead to the so-called \textit{replica symmetry breaking}. More detailed discussions are given in Section~\ref{Sec:Spin} and Appendix~\ref{app:2}.

\subsection{Notation}
Scalars, vectors and matrices are represented with non-bold, bold lower-case and bold upper-case letters, respectively. The transposed of $\mA$ is indicated by $\mA^{\trp}$, $\mI_N$ is an $N\times N$ identity matrix, and $\mone_N$ is an $N\times N$ all-one matrix. The Euclidean norm of $\bx$ is denoted by $\norm{\bx}$. For two $K$-dimensional vectors $\bxx$ and $\byy$, we define the \textit{normalized inner-product} as
\begin{align}
	\inner{\bxx;\byy} = \frac{\bxx^\trp \byy}{K}. \label{eq:inner}
\end{align}
We extend this notation to matrix arguments with valid dimensions as
\begin{align}
	\inner{\mX;\mY} = \frac{\mX^\trp \mY}{K}, \label{eq:innerM}
\end{align}
where $K$ denotes the number of rows  in $\mX$ and $\mY$. The notation $\log\brc{\cdot}$ indicates the \textit{natural} logarithm. We use $\Ex{}{X}$ to refer to expectation with respect to $X$ using the true distribution of $X$. The notation $\Ex{}{q_X}$ indicates expectation with respect to $X$ when the distribution of $X$ is replaced by $q_X$.  By using $\Ex{}{}$, we denote expectation with respect to all random variables involved in the argument. The entropy of a discrete random variable $X$ and the differential entropy of a continuous random variable $Y$ are shown by $H\brc{X}$ and $h\brc{Y}$, respectively. The notation $\mathcal{N}\brc{\boldsymbol{\eta},\mC}$ denotes the multivariate Gaussian distribution with mean vector $\boldsymbol{\eta}$ and covariance matrix $\mC$. To refer to the real axis, we use the notation $\setR$. We further denote the unit $\brc{D-1}$-sphere by $\setS^{D-1}$ , i.e., 
\begin{align}
\setS^{D-1} = \set{	\bxx\in\setR^D: \norm{\bxx} = 1}.
\end{align}
For brevity, $\set{1,\ldots,N}$ and $\set{N_0,\ldots,N_1}$ are abbreviated as  $\dbc{N}$ and $\dbc{N_0:N_1}$, respectively.

\section{Statement of the Problem}
Consider the classical regression problem: An \textit{unknown} sequence of $K$ symbols
\begin{align}
	\bs = \dbc{s_1,\ldots,s_K}
\end{align}
is projected onto $\setR^N$ via a randomly generated but \textit{known} function $\mav\brc{\cdot}: \setR^K \mapsto \setR^N$. The output of this transform, i.e.,  
\begin{align}
	\bxx = \mav\brc{\bs} \in \setR^N \label{Mod:1}
\end{align}
is observed through an \ac{awgn} channel. This means that we are provided with
\begin{align}
	\byy = \bxx + \bww, \label{Mod:2}
\end{align}
where $\bww\in \setR^N$ is an \ac{iid} zero-mean Gaussian vector with variance $\sigma^2$, i.e., $\rmw_n \sim \man\brc{0,\sigma^2}$. Our objective is to learn $\bs$ from $\byy$ considering the fact that $\mav\brc{\cdot}$ is known as a side-information. The most well-known case of this setting is the linear regression model in which $\mav\brc{\cdot}$ is considered to be a linear function, i.e., $\mav\brc{\bs} = \mA \bs$ for some $\mA\in \setR^{N\times K}$.

We intend to deviate from this classical assumption in this work and investigate a larger class of generative models; namely, the higher-order Gaussian random fields defined as follows:
\begin{definition}[Gaussian Random Field]
	Let $\mav\brc{\cdot}: \setR^K \mapsto \setR^N$ be a random mapping whose output entries for an input $\bs \in \setR^K$ are denoted as
	\begin{align}
		\mav\brc{\bs} = \dbc{ \mav_1\brc{\bs} , \ldots , \mav_N\brc{\bs} }^\trp.
	\end{align}
	The mapping $\mav\brc{\cdot}$ is a Gaussian random field with covariance function $\Phi\brc\cdot$, if for any $\bs_1$ and $\bs_2 \in \setR^K$, the entries of $\mav\brc{\bs_i}$ for $i\in \set{1,2}$ are distributed Gaussian and satisfy\footnote{Note that $\inner{\bs_1;\bs_2}$ is the normalized inner-product defined in \eqref{eq:inner}.}
	\begin{align}
		\Ex{\mav_m\brc{\bs_1} \mav_n\brc{\bs_2} }{ } = \mone\set{ m = n} \Phi\brc{ \inner{\bs_1;\bs_2} } 
	\end{align}
	for any $n,m\in\dbc{N}$.
\end{definition} 

The above definition becomes clear if we consider some particular examples: 
\begin{example}[Linear model]
	The most basic Gaussian field is the linear field, for which $\Phi\brc{x} = x$. In this case,  $\mav\brc{\bs} = \mA\bs$ for an $N\times K$ \ac{iid} Gaussian matrix $\mA$ whose elements are zero-mean with variance $1/K$.
\end{example}

\begin{example}[Pure quadratic model]
	As a simple extension of the linear Gaussian field, we consider the quadratic Gaussian random field with $\Phi\brc{x} = x^2$. In this case, a particular output entry is determined as  
	\begin{align}
		\mav_n\brc{\bs} =  \bs^\trp \mJ_n \bs
	\end{align}
	where $\mJ_n$ for $n\in\dbc{N}$ are independent \ac{iid} $K\times K$ random matrices whose entries are zero-mean Gaussian random variables with variance $1/K^2$.
\end{example}

\begin{example}[General quadratic model]
	A more general form of a quadratic field contains also a linear term. This means that a particular output entry is determined as
	\begin{align}
		\mav_n\brc{\bs} =  \bs^\trp \mJ_n \bs + \baa_n^\trp \bs
	\end{align}
	where $\mJ_n\in \setR^{K\times K}$ for $n\in\dbc{N}$ are mutually independent \ac{iid} Gaussian random matrices whose entries are zero-mean with variance $1/K^2$, and $\baa_n\in \setR^K$ are independent \ac{iid} Gaussian vectors with mean zero and covariance $ \mI_K/K$.
\end{example}

In the sequel, we consider a generic covariance function which can be represented as
\begin{align}
	\Phi\brc{ u } = \sum_{\ell=1}^\infty c_\ell u^\ell
\end{align}
for coefficients $c_\ell \geq 0$.

\subsection{Origin of the Higher-Order Model}
The interest on higher-order Gaussian random fields comes from various sources. More relevant to the scope of this work is the so-called \textit{nonlinear Gaussian encryption} method, discussed in \cite{fyodorov2019spin}.  In this method, a nonlinear Gaussian random field is used to encrypt a data sequence $\bs$ into an $\bxx$ of larger length. As shown in \cite{fyodorov2019spin}, the non-linearity adds confusion to the system which prevents a degraded eavesdropper from overhearing the data sequence. As we show in the coming sections, this result can be observed as a non-trivial alternative derivation of Wyner's wiretap result \cite{wyner1975wire}. We discuss this aspect with more details later on in this manuscript.

Besides nonlinear Gaussian encryption, there are various other models which bring interest on higher-order Gaussian fields. An example is the \textit{multi-edge-type parity check code ensembles}, in particular the \ac{ldgm} codes, whose generator can be seen as a field with a polynomial covariance function $\Phi\brc{x}$ \cite{richardson2008modern,kudekar2008proof,kumar2014threshold}. This is the most relevant aspect connected to Sourlas' seminal work in \cite{sourlas1989spin}.

An alternative source of interest comes from a mathematical viewpoint. In fact, a polynomial order Gaussian field is the superposition of multiple multi-dimensional tensors with random entries. This can be seen as a natural extension of a linear field\footnote{A linear field can be seen as a two-dimensional tensor.} into higher dimensions. This viewpoint in particular is of interest in several machine learning applications \cite{chen2021statistical,luneau2019mutual,barbier2019optimal,loureiro2021learning,sellke2021optimizing,ElAlaoui2021spin}.

As mentioned in the introduction, Gaussian random fields are well-known in statistical mechanics; namely, in the theory of spin glasses, in which random fields are used to model the energy level of a spin glass. In the particular case of a linear field, the problem of analyzing the energy model in the thermodynamic limit is homomorphic to the asymptotic analysis of linear regression model\footnote{See for example \cite[Chapter 3]{bereyhi2020thesis}.}. Considering this homomorphism, one can interpret a higher-order Gaussian field as a natural extension of linear regression to the problem of regression with a nonlinear Gaussian generative model. More precisely, the inference from the output of a linear model is mathematically equivalent to the free-energy calculation for a quadratic Hamiltonian being described via the Sherrington–Kirkpatrick model \cite{panchenko2013sherrington,mezard1987spin}
\begin{align}
	\mae \brc{\bs} = -\sum_{k,\ell} J_{k,\ell} s_k s_\ell
\end{align}
Here, $\mae \brc{\bs}$ denotes the Hamiltonian of the spin glass which corresponds to the linear regression problem; see \cite{panchenko2013sherrington} for the formal definition. The entries $J_{k,\ell}$ are further random coefficients. This model determines the energy of a spin glass in terms of pair-wise interactions. A natural extension of this model is the so-called $P$-spin model in which the Hamiltonian is considered to be 
\begin{align}
	\mae \brc{\bs} = -\sum_{\ell_1,\ldots\ell_P} J_{\ell_1,\ldots,\ell_P} \prod_{p=1}^P s_{\ell_p}.
\end{align}
Here, the energy is determined as the sum of $P$-tuple interactions among the particles. This model in the asymptotic case of $P\uparrow\infty$ describes the \ac{rem} in which the energy levels are assigned \ac{iid} at random  and do not depend on the configuration of the system particles anymore \cite{kirkpatrick1987dynamics,gardner1985spin}. The particular case of \ac{rem} has been investigated extensively in the literature; see for example \cite{derrida1980random}.

\subsection{Bayesian Inference}
Let us now get back to our main objective, i.e., reconstructing $\bs$ from $\byy$ given $\mav\brc{\cdot}$. To this end, we estimate $\hat{\bs}$ via a Bayesian estimator as
\begin{align}
	\hat{\bs} = \Ex{\bs \vert \byy , \mav}{} = \int  \dif Q \brc{ \bs \vert \byy , \mav} \bs,
\end{align}
where $Q \brc{ \bs \vert \byy , \mav}$ denotes the posterior cumulative distribution of $\bs$ determined for  some postulated \ac{iid} prior belief
\begin{align}
	\bs \sim q_S\brc{\bs} = \prod_{k=1}^K q_S\brc{s_k},
\end{align}
and some postulated noise variance $\hat{\sigma}^2$. This estimator minimizes the expected error when it is calculated via the Euclidean distance for the given prior belief. 

To evaluate the performance of the Bayesian estimator, we determine the normalized \textit{cross entropy} between the following two distributions:
\begin{enumerate}
	\item The true output distribution, i.e., 
\begin{align}
	p_Y\brc{\byy \vert \mav} = \brc{2\pi \sigma^2}^{-N/2} \Ex{\exp\set{ - \frac{\norm{\byy-\mav\brc{\bs}}^2}{2\sigma^2}}}{p_S},
\end{align}
with the expectation being taken with respect to the true prior distribution $p_S\brc{\bs}$.
\item The output distribution \textit{induced} by the postulated prior and noise variance, i.e., 
\begin{align}
q_Y\brc{\byy \vert \mav} = \brc{2\pi \hat{\sigma}^2}^{-N/2} \Ex{\exp\set{ - \frac{\norm{\byy-\mav\brc{\bs} }^2 }{2\hat{\sigma}^2}}}{q_S}. \label{eq:q_Y}
\end{align}
We refer to this distribution in the sequel as the \textit{induced distribution}.
\end{enumerate}
Hence, the performance metric for fixed $\sigma$, $\hat{\sigma}$, $p_S$ and $q_S$ is given by
\begin{align}
	\Delta_N \brc{ \sigma, \hat{\sigma}, p_S, q_S } = - \frac{1}{N} \Ex{ \log q_Y\brc{\byy \vert \mav}  }{ \byy, \mav }.
\end{align}
Here, the expectation is taken with respect to the true output distribution. It is worth noting that in the definition of $\Delta_N$ ,we further determine the expectation with respect to the field. This means that the determined metric represents the \textit{expected} cross entropy.  The \textit{self-averaging} property indicates that a particular \textit{realization} of the cross entropy converges to this expectation in the large-system limit \cite{de2009self,krzakala2021statistical} .%\bl{do you need to assume? Maybe it can be shown like in \href{https://sphinxteam.github.io/EPFLDoctoralLecture2021/Notes.pdf}{theorem 7 here}}.

In general, various metrics can be directly derived from the cross entropy. Some examples are as follows:
\begin{itemize}
	\item The output \textit{differential entropy} is directly determined by setting $\sigma = \hat{\sigma}$ and $p_S = q_S$, i.e.,
	\begin{align}
		\frac{1}{N}	h\brc{\byy \vert \mav} =  \Delta_N \brc{ \sigma, \sigma, p_S, p_S }. \label{eq:H_Delta}
	\end{align}
\item Given the fact that the conditional differential entropy $h\brc{\byy\vert \bs, \mav}$ in the end-to-end setting is determined via the differential entropy of the Gaussian noise process, the input-output information rate is readily written in terms of the cross entropy as
\begin{subequations}
		\begin{align}
	\frac{1}{N}I\brc{\bs;\byy\vert \mav } &= \frac{1}{N} h\brc{\byy \vert \mav} - \frac{1}{N} h\brc{\byy\vert \bs, \mav},\\
	&= \Delta_N \brc{ \sigma, \sigma, p_S, p_S } - \frac{1}{2} \log 2\pi e \sigma^2. \label{eq:MI}
\end{align}
\end{subequations}
\item The average \textit{Kullback–Leibler divergence} between the true distribution and the induced distribution is given by
\begin{subequations}
		\begin{align}
	\Ex{D_{\rm KL} \brc{p_Y \Vert q_Y}}{} &=h\brc{\byy \vert \mav} -  N\Delta_N \brc{ \sigma, \hat{\sigma}, p_S, q_S },\\
	&=N\brc{\Delta_N \brc{ \sigma, \sigma, p_S, p_S }  -  \Delta_N \brc{ \sigma, \hat{\sigma}, p_S, q_S }},
\end{align}
where the expectation is taken with respect to the random field.
\end{subequations}
\end{itemize}

Given the performance metric, our ultimate goal is to evaluate the normalized cross entropy function in the large-system limit, i.e., $N\uparrow \infty$. In particular, we are interested in the regime in which the data-set size, i.e., $K$, and the number of observations, i.e., $N$, grow significantly large with a bounded ratio. Such a regime can be studied by considering a sequence of problems indexed by $N$, in which the data-set size is determined as function of index $N$, i.e., $K\brc{N}$. We are then interested in the limit 
\begin{align}
	\Delta \brc{ \sigma, \hat{\sigma}, p_S, q_S }  = \lim_{N\uparrow \infty} \Delta_N \brc{ \sigma, \hat{\sigma}, p_S, q_S },
\end{align}
assuming that
\begin{align}
	\lim_{N\uparrow\infty} = \frac{K\brc{N}}{N} = R,
\end{align}
for some bounded $R$. In the remaining of this manuscript, we refer to $R$ as the \textit{load} factor of the generative model.

As shown in Section~\ref{Sec:Spin}, the asymptotic limit $\Delta \brc{ \sigma, \hat{\sigma}, p_S, q_S } $ is derived from the free energy of a corresponding spin-glass whose Hamiltonian is described by the $P$-spin model.

\subsection{An Initial Setup}
\label{sec:initial_Setup}
To start with the derivations, we focus on a special case in which the data symbols are simply labels, i.e., $s_k\in \set{\pm 1}$. We further assume that the labels are uniform \ac{iid} labels, and that the prior belief is perfectly matched with the true prior, i.e., 
\begin{align}
	p_S\brc{\bs} = q_S\brc{\bs} = \prod_{k=1}^K \mathrm{B} \brc{s_k; 0.5}
\end{align}
with $\mathrm{B} \brc{s; f}$ describing the Bernoulli distribution at $S=s$ for some $S$ with 
\begin{align}
	\Pr\set{S=+1} = 1-\Pr\set{S=-1} = f.
\end{align}

The interest on this particular prior follows from the connection of this generative model with the nonlinear Gaussian encryption being studied in a related but different setting in \cite{fyodorov2019spin}. We talk about this connection with details in the forthcoming section. It is however worth mentioning that the extension of the current derivations to more general priors is straightforward and is considered for future work. 

\subsection{Some Earlier Studies}
\label{sec:Fyod}

As mentioned, the most related earlier studies to this work are Sourlas' coding scheme explained in \cite{sourlas1989spin,sourlas1994spin} and the nonlinear encryption technique proposed by Fyodorov in \cite{fyodorov2019spin}. The former study by Sourlas utilizes a particular class of Gaussian random fields for channel coding and shows that this scheme sketches a cost-efficiency trade-off which is close to the information-theoretic limit given by Shannon in \cite{shannon1948mathematical}. This seminal idea will be further illustrated throughout our investigations in Section~\ref{Sec:LinvsNonLin}.

The recent related study by Fyodorov is also related to this work in its core idea; however, the setting in this work is different in various respects. For instance, in \cite{fyodorov2019spin}, Fyodorov considers the following setting: 
\begin{enumerate}
	\item The data symbols are supposed to be drawn from a \textit{uniform} distribution on a hypersphere, i.e., $\bs = \sqrt{DP_S} \bs_0$ where $\bs_0 \in \setS^{D-1}$ with $\setS^{D-1}$ denotes the $\brc{D-1}$-sphere.
	\item The learning task is performed via the method of least-squares, i.e., 
	\begin{align}
		\hat{\bs} = \argmin_{ \frac{\bu}{\sqrt{{DP_S}}} \; \in \; \setS^{D-1} } \norm{\byy  - \mav\brc{\bu}}^2. %\qquad\text{subject to} \; {\frac{\bu}{\sqrt{DP_S}} \in \setS^{D-1}}.
	\end{align}
In the Bayesian framework, this describes the \textit{maximum likelihood} estimation which due to the uniform prior is of the similar form as the \textit{\ac{map}} estimation.
\end{enumerate}

Fyodorov determines the asymptotic limit of the average normalized inner product between the ground truth $\bs$ and the estimated parameter $\hat{\bs}$, i.e., he finds the so-called \textit{overlap} $m^\star$ which is defined as\footnote{We later give more discussions on the meaning of $m^\star$.}
\begin{align}
	m^\star = \lim_{N\uparrow \infty} \frac{\inner{\bs;\hat{\bs}}}{P_S}, \label{eq:m_Fyo}
\end{align}
%the  limit of $p_K$ for encryption with a generic random Gaussian field, when the receiver receives its sequence through an \ac{awgn} channel whose noise power is $\sigma^2$. The random field considered by Fyodorov in his work is basically an entry-wise vectorized polynomial with random isotropic Gaussian components. Its generic form, i.e., $\mav\brc{\cdot}$, is specified as follows:\\
assuming constant ratio $\mu = D/N$. The result of Fyodorov contains both \ac{rs} and \ac{rsb} solutions. Despite its complicated look, the key conclusions are straightforwardly implied from the main result: For \textit{strictly} nonlinear fields, $m^\star$ observes a second-order phase transition at a threshold \ac{snr}, i.e., $P_S/\sigma^2$. Although this second-order phase transition makes a connection to the secure channel coding of Wyner \cite{wyner1975wire}, this connection was left unaddressed in \cite{fyodorov2019spin}, due to the fundamental differences in the inverse problem and the system model considered by Fyodorov. More details on Fyodorov's work can be found in Appendix~\ref{app:Fyodorov}.

\section{Corresponding Spin Glass with $P$-Spin Model}
\label{Sec:Spin}
Our key objective is to determine the limiting normalized cross entropy. We address this goal by using statistical mechanics: We define a corresponding spin glass corresponding to the generative model whose particle interactions are described via a $P$-spin model. We then show that the desired metrics in the inference problem are derived from the  free energy of the corresponding spin glass and utilize the replica method to determine the free energy. 

Let us now start with defining the corresponding spin glass:

%These two points can be shown using statistical mechanical approach. To see this, let us define a \textit{Hamiltonian} as follows:
\begin{definition}[Corresponding spin glass]
	For a given vector of observations $\byy$ and random field $\mav\brc{\cdot}$, the spin glass corresponding to the nonlinear Gaussian generative model is a thermodynamic system with $K$ particles. The microstate\footnote{Remember that in Section~\ref{sec:initial_Setup} we restricted the labels to be uniformly generated from $\set{\pm 1}$.} of this spin glass is described by $\bu\in \set{\pm 1}^K$ and its energy for a given realization of the microstate $\bu$ is determined via the following Hamiltonian:
	\begin{align}
	\mae \brc{\bu\vert \byy , \mav} = \frac{ \norm{\byy - \mav\brc{\bu}}^2}{2}.
\end{align}
\end{definition}

Starting from this definition, we now follow standard derivations to describe the \textit{macroscopic features} of the corresponding spin glass: The conditional Boltzmann distribution of the microstate  given the \textit{quenched random variables}\footnote{By \textit{quenched} random variables, we refer to $\byy$ and $\mav\brc{\cdot}$ which have different orders of randomness compared to the microstate. More details can be followed in \cite{bereyhi2020thesis}.} at the inverse temperature $\beta$ is given by
\begin{align}
	\rmp_{\beta} \brc{\bu \vert \byy , \mav} = \dfrac{ \exp\set{- \beta \mae\brc{\bu \vert \byy , \mav}} }{ \maz_{\beta} \brc{ \byy , \mav} },
\end{align}
where $\maz_{\beta} \brc{ \byy , \mav}$ is the \textit{partition function}, i.e.,
\begin{align}
	\maz_{\beta} \brc{ \byy , \mav} = \sum_{\bu\in\set{\pm 1}^K} \exp\set{- \beta \mae\brc{\bu\vert \byy , \mav} },
\end{align}

The macroscopic parameters of this spin glass at inverse temperature $\beta$ are fully described via the \textit{normalized average free energy} which is given by\footnote{Note the subscript $N$ refers to the fact that the free energy expression depends on the model dimensions. We however drop $K$, as we assume $K$ and $N$ are related by $K/N=R$.}
\begin{align}
	\maf_N \brc{\beta} = - \frac{1}{N\beta} \Ex{ \log \maz_{\beta} \brc{ \byy , \mav} }{}.
\end{align}
where the expectation is taken over the quenched random variables, i.e., $\byy$ and $\mav\brc{\cdot}$. We now show that the free energy determines our target metric, i.e., the cross entropy. 

\subsection{The Variational Problem}
Let us get back to our main objective, i.e., determining the cross entropy. Since in the current setting $p_S=q_S$, we further drop the arguments $p_S$ and $q_S$ in the cross entropy expression and write
\begin{subequations}
	\begin{align}
	\Delta_N \brc{ \sigma, \hat{\sigma} } &= - \frac{1}{N} \Ex{ \log q_Y\brc{\byy \vert \mav}  }{ },\\
	&= - \frac{1}{N} \Ex{ \log \brc{2\pi \hat{\sigma}^2}^{-N/2} \Ex{\exp\set{ - \frac{\norm{\byy-\mav\brc{\bs} }^2 }{2\hat{\sigma}^2}}}{\bs} }{ },\\
	&=  \frac{1}{2} \log 2\pi \hat{\sigma}^2- \frac{1}{N} \Ex{ \log \Ex{\exp\set{ - \frac{\norm{\byy-\mav\brc{\bs}}^2 }{2\hat{\sigma}^2} } }{\bs}   }{}. \label{Delta_1}
\end{align}
\end{subequations}
Here, we drop the subscript $\byy ,\mav$ for the outer expectation, as it is taken over the \textit{true} distribution of the quenched random variables. Keeping in mind the definition of the partition function and the assumption on the prior distribution, i.e.,
	\begin{align}
	p_{S}\brc{\bs} = 2^{-K}= 2^{-NR},
\end{align}
we can further write
\begin{subequations}
	\begin{align}
 \Ex{\exp\set{ - \frac{\norm{\byy-\mav\brc{\bs}}^2 }{2\hat{\sigma}^2}}}{\bs} &= 2^{-NR} \sum_{\bu\in\set{\pm 1}^K} \exp\set{ - \frac{\norm{\byy-\mav\brc{\bu} }^2 }{2\hat{\sigma}^2}},\\
 &= 2^{-NR} \maz_{1/\hat{\sigma}^{2}} \brc{ \byy , \mav}. \label{eq:Exp}
\end{align}
\end{subequations}
Finally, by replacing \eqref{eq:Exp} into \eqref{Delta_1}, we conclude that
\begin{subequations}
	\begin{align}
	\Delta_N \brc{ \sigma, \hat{\sigma} } &= \frac{1}{2} \log 2\pi \hat{\sigma}^2 + R \log 2 - \frac{1}{N} \Ex{ \log \maz_{1/\hat{\sigma}^{2}} \brc{ \byy , \mav}  }{  },\\
	&=  \frac{1}{2} \log 2\pi \hat{\sigma}^2 + R \log 2 + \frac{1}{\hat{\sigma}^2} 	\maf_N \brc{\frac{1}{\hat{\sigma}^2}} 
\end{align}
\label{eq:Delta_F}
\end{subequations}

The above identity clarifies the connection between the nonlinear generative model and the  corresponding spin glass. As a direct result of this identity, we consider the variational problem of calculating the free energy of the corresponding spin glass. The variational problem is of the same level of hardness as the original problem. We hence follow the standard approach based on the \textit{replica method}; see \cite{bereyhi2020thesis} and the references therein for details.

\subsection{Main Results}
We now present the final solution for the \textit{asymptotic} normalized free energy, i.e.,
\begin{align}
	\maf \brc{\beta} = \lim_{N\uparrow \infty} \maf_N \brc{\beta},
\end{align}
given by the replica method under the \ac{rs} assumption. The derivations are skipped at this point and presented in Appendix~\ref{app:2}. The solution is referred to as the \ac{rs} solution. This is known that for a generic setting, the \ac{rs} assumption is not necessarily valid and hence the corresponding solution is not always reliable. However, for some particular cases the validity of \ac{rs} is guaranteed. We discuss one of these cases, i.e., optimal Bayesian scenario, later on.
\begin{proposition}[RS solution]
	\label{proposition:RS}
	Consider the generative model described by \eqref{Mod:1} and \eqref{Mod:2}. Let $\Phi\brc\cdot$ denote the covariance function of $\mav \brc\cdot$, and $Q$ and $m $ be two scalars in $\dbc{0,1}$. Define $E\brc{Q,m}$ and $L\brc{Q,m}$ in terms of $Q$ and $m$ as
	\begin{subequations}
		\begin{align}
		L_\beta \brc{Q,m} &= \frac{\beta}{R} \Phi'\brc{Q} \frac{f_\beta\brc{Q,m}}{g_\beta^2\brc{Q}}, \\
		E_\beta\brc{Q,m} &=  \frac{\beta}{R} \Phi'\brc{m} \frac{1}{g_\beta\brc{Q}},
	\end{align}
	\end{subequations}
	for functions 
\begin{subequations}
	\begin{align}
	g_\beta\brc{Q} &= 1+\beta \dbc{ \Phi \brc{1} - \Phi \brc{Q}},\\
	f_\beta\brc{Q,m} &= \beta \dbc{ \sigma^2 + \Phi \brc{1} + \Phi \brc{Q} - 2 \Phi\brc{m} }.
\end{align}
\end{subequations}
Let the function $\Pi\brc{Q,m}$ for a given $Q$ and $m$ be
	\begin{align}
	\Pi_\beta\brc{Q,m}= &R \dbc{m E_\beta\brc{Q,m} + \frac{\brc{1-Q}L_\beta\brc{Q,m}}{2}} +  \frac{1}{2} \dbc{ \frac{f_\beta\brc{Q,m}}{ g_\beta\brc{Q}} + \log g_\beta\brc{Q}}.
\end{align}
Assuming the \ac{rs} ansatz, the normalized average free energy of the corresponding spin glass at inverse temperature $\beta$ is given by
	\begin{align}
	\maf \brc{\beta}  = \frac{F_\beta \brc{Q^\star,m^\star}}{\beta },
\end{align}
with $F_\beta \brc{Q,m}$ being defined on $ \dbc{0,1}^2$ as
	\begin{align}
	 F_\beta\brc{Q,m}  \coloneqq -R  \Ex{ \log  \cosh \brc{ \sqrt{L_\beta\brc{Q,m}} Z +  E_\beta\brc{Q,m} }  }{Z}  + \Pi_\beta\brc{Q,m} - R \log 2 .
	\end{align}
Here, $Z$ is a zero-mean Gaussian random variable with unit variance, i.e., $Z\sim \man\brc{0,1}$, and 
\begin{align}
\brc{Q^\star , m^\star} = \argmin_{\brc{Q,m}\in\setF} F_\beta \brc{Q,m}
\end{align}
with $\setF$ being the union of the boundary points of the feasible set $ \dbc{0,1}^2$ and the set of all solutions to the following fixed-point equations 
\begin{subequations}
		\begin{align}
		Q&= \Ex{ \tanh^2 \brc{ \sqrt{L_\beta\brc{Q,m}} Z +  E_\beta\brc{Q,m} }  }{Z},\\
		m &=  \Ex{ \tanh \brc{ \sqrt{L_\beta\brc{Q,m}} Z +  E_\beta\brc{Q,m} }  }{Z}.
	\end{align}
\end{subequations}
%and  In the case of multiple or no solutions, $Q^\star$ and $m^\star$ are specified by minimizing $F_\beta\brc{Q,m}$ over the feasible set of $\brc{Q,m} \in \dbc{0,1}^2$.
\end{proposition}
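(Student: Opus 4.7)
The plan is to compute the quenched free energy via the replica identity
\begin{align}
\maf(\beta) = -\lim_{n\downarrow 0}\lim_{N\uparrow\infty}\frac{1}{N\beta n}\log \E\!\left[\maz_\beta^n(\byy,\mav)\right],
\end{align}
which requires evaluating $\E[\maz_\beta^n]$ for integer $n$ and analytically continuing to $n\to 0$. The first step is to treat the planted signal $\bs$ as a zeroth replica drawn from the matched uniform prior and to introduce $n$ annealed replicas $\bu^1,\ldots,\bu^n\in\{\pm 1\}^K$. Substituting $\byy=\mav(\bs)+\bw$, the $n$-th moment takes the form
\begin{align}
\E[\maz_\beta^n] = 2^{-(n+1)K}\!\!\sum_{\bu^0,\ldots,\bu^n}\E_{\mav,\bw}\!\left[\exp\!\left(-\frac{\beta}{2}\sum_{a=1}^n\|\bw + \mav(\bu^0) - \mav(\bu^a)\|^2\right)\right].
\end{align}
Since the Gaussian field has independent output components with covariance $\Phi(\inner{\bu^a;\bu^b})$, the inner expectation factorizes over the $N$ coordinates into identical single-coordinate Gaussian integrals depending only on the overlaps $Q_{ab}=\inner{\bu^a;\bu^b}$ and $m_a=Q_{0a}$.

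The next step is to convert the replica sum into an integral over these order parameters by inserting Fourier delta functions for each $Q_{ab}$ and $m_a$, which introduces conjugate fields $\hat{Q}_{ab}$, $\hat{m}_a$ and decouples the spin sum into $K$ independent single-site sums over $u\in\{\pm 1\}$. In the large-$N$ limit with $K/N\to R$, the resulting integrals are evaluated by the saddle-point method, reducing the problem to an extremization over $(\mQ,\hat{\mQ})$ of an action comprising a Gaussian determinant arising from integrating out $\mav$ and $\bw$, a linear coupling $\mathrm{tr}(\hat{\mQ}\mQ)$, and a single-site log-partition induced by $\hat{\mQ}$.

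I would then impose the replica-symmetric ansatz $Q_{aa}=1$, $Q_{0a}=m$, and $Q_{ab}=Q$ for $1\leq a\neq b\leq n$, with a parallel structure on $\hat{\mQ}$. Under this ansatz, the effective covariance on the $n$ replicas reduces to a scaled identity plus a rank-one perturbation, so the Gaussian determinant admits the closed form
\begin{align}
\det\!\brc{g_\beta(Q)\mI_n + f_\beta(Q,m)\mone_n\mone_n^\trp} = g_\beta^{n-1}\brc{g_\beta + n f_\beta},
\end{align}
where $g_\beta$ and $f_\beta$ appear exactly as defined in the statement. Extracting the $O(n)$ coefficient as $n\downarrow 0$ yields $\tfrac{1}{2}\dbc{f_\beta/g_\beta + \log g_\beta}$, which is the last summand of $\Pi_\beta$. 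A Hubbard–Stratonovich transformation then decouples the remaining quadratic term in the conjugate field via a standard Gaussian $Z$, after which the single-site sum over $u\in\{\pm 1\}$ collapses to $\log\cosh\!\brc{\sqrt{L_\beta}Z + E_\beta}$, with $L_\beta$ and $E_\beta$ emerging through $\Phi'(Q)$ and $\Phi'(m)$ once the conjugate saddle equations are used to eliminate $\hat{Q}$ and $\hat{m}$. Stationarity of the resulting $F_\beta(Q,m)$ in $Q,m$ then produces the fixed-point equations via $\partial_x\log\cosh x=\tanh x$ and $\partial_x\tanh x=1-\tanh^2 x$.

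The principal obstacle lies in the bookkeeping of the analytic continuation $n\downarrow 0$: while the block structure of the replica covariance under RS is tractable, distilling the result into the compact functions $g_\beta, f_\beta, L_\beta, E_\beta$ requires tracking Taylor expansions of $\Phi$ around $Q=1$ and $m$, together with the differentiation of $\log\det$ with respect to the RS entries, and matching the $O(n)$ terms through the exponential. A secondary subtlety is characterizing the extremizer set $\setF$: one must verify that interior stationarity in $(Q,m)$ reduces exactly to the stated fixed-point system and separately account for boundary extremizers of $[0,1]^2$, which in particular will play a role in the all-or-nothing phase transition discussed later. Finally, the validity of the RS ansatz is not automatic in the mismatched regime and must be justified via the stability analysis deferred to Appendix~\ref{app:2}.
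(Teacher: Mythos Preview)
Your proposal is correct and follows essentially the same route as the paper's Appendix~\ref{app:2}: replica trick with the planted signal treated as an extra replica, Gaussian integration over the field yielding the $\det\mG$ form, Fourier delta functions for the overlaps with conjugate variables, saddle-point, RS ansatz with the identity-plus-rank-one structure producing $g_\beta$ and $f_\beta$, and Hubbard--Stratonovich to decouple the single-site sum into the $\log\cosh$ form. One minor bookkeeping slip: the prefactor in your moment should be $2^{-K}$ rather than $2^{-(n+1)K}$, since only the average over the planted $\bu^0$ carries a normalization while the annealed replicas in $\maz_\beta^n$ are unnormalized sums; the $-R\log 2$ in $F_\beta$ instead emerges from the $(t{+}1)\log 2$ term in the single-site log-partition $\mam(\mL)$ after the $\partial_t|_{t=0}$ extraction.
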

\begin{proof}
	See Appendix~\ref{app:2}.
\end{proof}

As mentioned, the above asymptotic result is derived under the \ac{rs} ansatz, assuming that the corresponding spin glass shows the so-called \textit{replica symmetry} at the thermal equilibrium. This is however not necessarily the case. At this point, we assume that for the settings of interest, the \ac{rs} ansatz is valid. Further discussions can be followed in Appendix~\ref{app:2}. More detailed discussions in this respect can be found in \cite{bereyhi2020thesis,bereyhi1,bereyhi4}.

Considering the \ac{rs} solution, the asymptotic normalized cross entropy term is readily derived from \eqref{eq:Delta_F} as follows
\begin{subequations}
		\begin{align}
	\Delta \brc{ \sigma, \hat{\sigma} } = & \phantom{+} \frac{1}{2} \log 2\pi \hat{\sigma}^2 + R \log 2 + \frac{1}{\hat{\sigma}^2} 	\maf \brc{\frac{1}{\hat{\sigma}^2}}, \\
	= &  \phantom{+} \frac{1}{2} \log 2\pi \hat{\sigma}^2 -R  \Ex{ \log  \cosh \brc{ \sqrt{L_{1/\hat{\sigma}^2 } \brc{Q^\star,m^\star}} Z +  E_{1/\hat{\sigma}^2 }\brc{Q^\star,m^\star} }  }{Z} \nonumber \\
	&+ \Pi_{1/\hat{\sigma}^2 } \brc{Q^\star,m^\star}.
\end{align}
\end{subequations}
In next sections, we mainly focus on the particular case of \textit{matched} Bayesian estimation, for which it is known that the \ac{rs} ansatz is valid; see for example \cite{iba1999nishimori,antenucci2019approximate,krzakala2021statistical}. Before starting with the optimal Bayesian case, let us discuss the asymptotic decoupling property of the nonlinear model.

\subsection{Decoupling Principle}
The asymptotic result derived by the replica method leads to the \textit{decoupled principle} for the original high-dimensional setting. This is a well-known principle developed for various inference problems with linear model \cite{muller2004capacity,bereyhi1,rangan2012asymptotic,guo2005randomly,vehkapera2010analysis}. The \ac{rs} solution extends this principle to higher-order fields. To state the decoupling principle, consider the following scalar inference problem: The label $S\in \set{\pm 1}$ distributed as $S\sim \mathrm{B} \brc{s; 0.5}$ is observed through a Gaussian channel whose additive noise term is $\rho Z$ with $Z\sim\man\brc{0,1}$. The received symbol is hence represented by
\begin{align}
	Y = S + \rho Z.
\end{align}
Given a particular observation $Y=y$, the scalar Bayesian estimator determines the expectation of $S$ via its postulated posterior distribution $p_{\hat{\rho}}\brc{s\vert y}$ which assumes the noise variance to be $\hat{\rho}$. The estimated symbol is hence given by
\begin{align}
	\hat{S} = \Ex{S \vert Y }{p_{\hat{\rho}} } = \tanh\brc{\frac{Y}{\hat{\rho}^2 }},
\end{align}
where the latter identity is found after few standard lines of derivations.

Considering the above scalar problem, an initial observation is that by setting
\begin{subequations}
	\begin{align}
		\rho &= \frac{\sqrt{ L_{1/\hat{\sigma}^2} \brc{Q^\star,m^\star}  }}{E_{1/\hat{\sigma}^2} \brc{Q^\star,m^\star} },\\
		\hat{\rho} &= \frac{1}{ \sqrt{ E_{1/\hat{\sigma}^2} \brc{Q^\star,m^\star}} },
	\end{align}
\label{eq:dec_eq}
\end{subequations}
the average pair-wise distortion determined by the \ac{rs} solution is given by the expected distortion between $S$ and $\hat{S}$ in the decoupled setting. More precisely, for a given distortion function $D\brc{\cdot; \cdot}: \set{\pm 1} \times \setR \mapsto \setR$, the following identity under the \ac{rs} ansatz exists:
\begin{align}
	\lim_{K\uparrow \infty} \frac{1}{K} \Ex{\sum_{ k=1}^K  D\brc{s_k;\hat{s}_k}}{} = \Ex{D\brc{S;\hat{S}}}{ }.
\end{align}
This finding is however the result of a more general principle: 

\begin{proposition}[Decoupling principle]
	\label{Prop:2}
	Assume \ac{rs} is a valid ansatz. For any $k\in \dbc{K}$, the pair $\brc{s_k,\hat{s}_k}$ converges in distribution to $\brc{S,\hat{S}}$ as $K\uparrow \infty$, when we set $\rho$ and $\hat{\rho}$ according to \eqref{eq:dec_eq} in the decoupled setting.
\end{proposition}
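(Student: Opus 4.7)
The approach is to show convergence of joint moments $\E[(s_k)^a (\hat{s}_k)^b]$ to the corresponding moments of $(S, \hat{S})$ in the scalar setting, and then to conclude convergence in distribution from the uniqueness of the moment problem on the compact set $\set{\pm 1}\times\dbc{-1,1}$ (note that $\hat{s}_k$ lies in $\dbc{-1,1}$ as a Boltzmann average of a binary variable). To extract these joint moments from the replica formalism, I would augment the Hamiltonian of the corresponding spin glass with single-site source terms. Concretely, I would consider
\begin{align}
\mae_\lambda \brc{\bu \vert \byy , \mav} = \mae \brc{\bu \vert \byy , \mav} - \lambda_1 u_k - \lambda_2 s_k,
\end{align}
and study the resulting free energy at inverse temperature $\beta = 1/\hat{\sigma}^2$. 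Since $\hat{s}_k$ is exactly the Boltzmann average of $u_k$ at this inverse temperature, mixed derivatives of $-\E\set{\log \maz_\beta}$ with respect to $\lambda_1$ and $\lambda_2$ evaluated at zero deliver precisely the desired joint moments of $(s_k, \hat{s}_k)$.

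The key step is then to rerun the replica computation of Proposition~\ref{proposition:RS} in the presence of the perturbation. Because the source term lives at a single site, it only produces a sub-extensive correction to the free energy and leaves the macroscopic saddle-point equations for $Q^\star$ and $m^\star$ unaffected at leading order in $N$. What it \emph{does} modify is the effective single-site measure that appears inside the saddle-point. After integrating out the random field $\mav\brc{\cdot}$ and applying a Hubbard--Stratonovich transform on the replica overlap quadratic forms, the RS ansatz collapses the replicated single-site action to an effective scalar Gaussian channel: the local field acting on the replica spin at site $k$ is Gaussian with mean $E_{1/\hat{\sigma}^2}\brc{Q^\star,m^\star}\, s_k$ and variance $L_{1/\hat{\sigma}^2}\brc{Q^\star,m^\star}$. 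Identifying this effective single-site measure with the posterior for $S$ in the channel $Y = S + \rho Z$ under the postulated noise variance $\hat{\rho}^2$, and using the definitions \eqref{eq:dec_eq}, one sees that the induced Boltzmann average of $u_k$ is exactly $\hat{S} = \tanh\brc{Y/\hat{\rho}^2}$. The joint law of the perturbation-derived pair thus coincides with that of $(S,\hat{S})$.

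The main obstacle is the usual one in replica-based decoupling arguments: rigorously controlling the interchange of the limits $n \downarrow 0$, $N \uparrow \infty$, and the source derivatives, and verifying that the single-site perturbation genuinely leaves the macroscopic saddle invariant. Both are standard heuristic ingredients in the spin-glass literature but not rigorous without additional technical work; here they are implicitly underwritten by the assumed validity of the RS ansatz together with the self-averaging property already invoked in the paper. Once the joint moments are matched, the boundedness of both coordinates immediately promotes convergence of moments to convergence in distribution, completing the argument.
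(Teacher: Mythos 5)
Your overall strategy coincides with the paper's: establish convergence of the joint moments $\Ex{s_k^\kappa \hat{s}_k^\tau}{}$ to those of $\brc{S,\hat{S}}$ via a source-perturbed replica computation under the \ac{rs} ansatz, and then upgrade moment convergence to convergence in distribution using determinacy of the moment problem on a bounded set (the paper invokes Carleman's theorem). However, the concrete generating-functional device you propose does not work. First, the coupling $-\lambda_2 s_k$ is vacuous for the cross-moments: $s_k$ is a quenched variable, constant under the Gibbs measure, so the factor $\exp\set{\beta \lambda_2 s_k}$ cancels between the numerator and denominator of every thermal average, and the mixed derivative $\partial_{\lambda_1}\partial_{\lambda_2} \Ex{\log \maz_\beta}{}$ is identically zero. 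It therefore cannot produce $\Ex{s_k \hat{s}_k}{} = m^\star \neq 0$, which is precisely the quantity the decoupling principle is about. Second, repeated differentiation with respect to the single linear source $\lambda_1$ yields the connected correlations (cumulants) of $u_k$ under the Gibbs measure, i.e.\ quantities like $\langle u_k^2\rangle - \langle u_k\rangle^2$, whereas what is needed is the quenched average of \emph{powers of the posterior mean}, $\hat{s}_k^\tau = \langle u_k\rangle^\tau$. Obtaining $\langle u_k\rangle^\tau$ requires $\tau$ independent copies of the Gibbs measure coupled through the product $\prod_{a=1}^\tau u_{a,k}$, not higher derivatives of a single-replica free energy.

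The paper resolves both issues by forming the $\tau$-fold replicated posterior $p_S^{\brc{\tau}}$, incorporating the ground truth $\bs$ as an additional $(\tau+1)$-th replica (possible because the prior is uniform), and coupling a single source $h$ to the monomial $f_0\brc{\bvv} = \vv_{\tau+1}^\kappa \prod_{a=1}^\tau \vv_a$. There is also a technical point you gloss over: a perturbation supported on a single site contributes only $\mao\brc{1}$ to $\log \maz_\beta$ and hence vanishes from the normalized free energy in the limit $N\uparrow\infty$; the paper therefore smears the source over an index set $\setI_k \ni k$ of extensive size $1+\alpha K$, carries $\alpha$ through the saddle-point analysis (where it enters via the Bernoulli variable $I$ in $\mam_h\brc{\mL}$), and only sends $\alpha \downarrow 0$ after $N\uparrow\infty$. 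Your claim that the single-site perturbation "modifies the effective single-site measure" while being sub-extensive is exactly the step that needs this smearing to be made meaningful. Once these repairs are made, the remaining steps you describe (invariance of the macroscopic saddle, RS collapse to the scalar Gaussian channel with parameters \eqref{eq:dec_eq}, and the moment-determinacy conclusion) do match the paper's argument.
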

\begin{proof}
	The proof is given via the method of moments in Appendix~\ref{app:dec}.
\end{proof}

A direct result of the decoupling principle is that the fixed-points $Q^\star$ and $m^\star$ in the \ac{rs} solution are given in terms of the decoupled setting as follows:
\begin{subequations}
	\begin{align}
		Q^\star&= \Ex{ \hat{S}^2  }{},\\
		m^\star &=  \Ex{ S \hat{S}}{}.
	\end{align}
\end{subequations}
The implicit use of these identities is to simplify the \ac{rs} solution for optimal Bayesian inference. Nevertheless, they further give operational meaning to $m^\star$ and $Q^\star$. In this respect, $m^\star$ is of particular interest: it determines the expected normalized inner product between the true symbols and the learned ones. It is hence often called the  \textit{overlap}\footnote{Remember the performance metric defined by Fyodorov in \cite{fyodorov2019spin}, i.e., \eqref{eq:m_Fyo}.}. There are in principle two extreme cases for $m^\star$: 
\begin{inparaenum}
	\item Perfect recovery in which $\hat{S} = S$, and hence $m^\star = 1$. 
	\item Independent inference, in which $\hat{S}$ and $S$ are statistically independent resulting in $m^\star = 0$. 
\end{inparaenum}
These extreme cases limit $m^\star \in \dbc{0,1}$ which is consistent with the constraint stated in the \ac{rs} solution. They moreover provide a comprehensive interpretation for $m^\star$: \textit{Larger $m^\star$ means more information}.

\begin{remark}
	One should note that the decoupling principle only describes the \textit{marginal} distribution of each pair of true label and its estimation. Hence, it does not determine the metrics which take into account correlation among different pairs, e.g., the cross entropy.
\end{remark}

\subsection{Optimal Bayesian Inference}
The mean squared error between the true labels and their recovery is minimized with respect to $\hat{\sigma}$, when we set the postulated noise variance equal to the true noise variance, i.e., when $\hat{\sigma} = \sigma$. Due to this property, this particular form of the Bayesian inference is referred to as the \textit{optimal} Bayesian inference.  This case describes the Nishimori line which lies in the \ac{rs} stability region \cite{antenucci2019approximate}. This means that the \ac{rs} ansatz, for this particular case, is valid and no further derivations under \ac{rsb} ans\"atze are required. In addition to \ac{rs} stability, the symmetry of matched estimator leads to this basic conclusion that $m^\star=Q^\star$: Following the decoupling principle, $m^\star$ determines the cross correlation between the learned symbol and the true one, and $Q^\star$ calculates the second moment of the learned symbol. As the estimator is optimal Bayesian, the orthogonality principle is satisfied. This means that the estimation error is orthogonal to the estimation, i.e., 
	\begin{align}
  \Ex{ \brc{S - \hat{S}} \hat{S} }{} = 0.
\end{align}
This identity concludes that
	\begin{align}
 m^\star=  \Ex{ S \hat{S}}{} = \Ex{ \hat{S}^2  }{} = Q^\star.
\end{align}
Following this straightforward justification, one can see that $m^\star = Q^\star$ is a fundamental property in the replica analysis of optimal Bayesian inference. This fundamental property is in fact the key observation used in \cite[Appendix II]{muller2004capacity} to show the initial form of the decoupling principle.

From these properties of the optimal Bayesian estimator, we can write
\begin{subequations}
	\begin{align}
	g_{1/\sigma^2} \brc{m} &= 1 + \frac{ \Phi \brc{1} - \Phi \brc{m}}{\sigma^2} \\
	f_{1/\sigma^2} \brc{m,m} &=  1 + \frac{\Phi \brc{1} - \Phi \brc{m} }{\sigma^2} = g_{1/\sigma^2}\brc{m}
\end{align}
\end{subequations}
This leads to this conclusion that 
\begin{align}
	L_{1/\sigma^2} \brc{m,m}= E_{1/\sigma^2} \brc{m,m} = \frac{1}{R\sigma^2} \frac{\Phi'\brc{m}}{g\brc{m}}.
\end{align}
Based on these observations, we can simplify the notation by dropping the inverse temperature index\footnote{Since we are only operating at $\beta = 1/\sigma^2$.}, and defining $g\brc{m} \coloneqq g_{1/\sigma^2} \brc{m}$ and
\begin{subequations}
	\begin{align}
	E\brc{m} \coloneqq E_{1/\sigma^2} \brc{m,m} &= \frac{1}{R\sigma^2}  \frac{\Phi'\brc{m}}{g\brc{m}} \\
	&= -\frac{1}{R}  \frac{g'\brc{m}}{g\brc{m}} = -\frac{1}{R} \frac{\dif}{\dif m} \log g\brc{m}. \label{E_m}
\end{align}
\end{subequations}

We now recall that the normalized cross entropy in the matched case determines the normalized differential entropy of the output; see \eqref{eq:H_Delta}. We hence replace the simplified definitions into the \ac{rs} solution and conclude the following result:
\begin{corollary}[Optimal Bayesian Inference]
	\label{cor:1}
	Let $\mah\brc{\sigma}$ denote the normalized differential entropy of the observations, i.e., 
		\begin{align}
		\mah\brc{\sigma} = \Delta\brc{\sigma,\sigma} = \lim_{N\uparrow \infty} \frac{1}{N} h\brc{\byy \vert \mav}.
	\end{align}
Define function $g\brc{\cdot}$ as
	\begin{align}
	g \brc{m} &= 1 + \frac{ \Phi \brc{1} - \Phi \brc{m}}{\sigma^2} ,
\end{align}
and let $E\brc{m}$ be defined in terms of $g \brc{m} $ as in \eqref{E_m}. Then, $\mah\brc{\sigma}$ is determined as
			\begin{align}
	\mah\brc{\sigma}  = H_\sigma\brc{m^\star},
	\end{align}
for $H_\sigma\brc{m}:\dbc{0,1}\mapsto \setR$ defined as
\begin{align}
 H_\sigma\brc{m} \coloneqq -R  \Ex{ \log  \cosh \brc{ \sqrt{E \brc{m}} Z +  E\brc{m} }  }{Z} +  \Pi \brc{m},
\end{align}
where $Z\sim \man\brc{0,1}$, $\Pi \brc{\cdot }$ is defined as
\begin{align}
	\Pi \brc{m} \coloneqq \frac{1}{2} \dbc{ R\brc{1+m} E \brc{m} + \log {2\pi e {\sigma}^2 }  + \log {g\brc{m}} },
\end{align}
and $m^\star \in \dbc{0,1}$ is the minimizer of $H_\sigma\brc{m}$ over the union of boundary points $\set{0,1}$ and the set of solutions to the following fixed point equation
	\begin{align}
		m&=  \Ex{ \tanh \brc{ \sqrt{E\brc{m}} Z + E\brc{m} }  }{Z}. %=  \Ex{ \tanh^2 \brc{ \sqrt{L\brc{Q}} Z + L\brc{Q} }  }{Z} 
	\end{align}
%In the case of multiple solutions, $m^\star$ is specified by minimizing $H_\sigma\brc{m}$ over $m\in\dbc{0,1}$.
\end{corollary}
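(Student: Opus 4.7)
The plan is to derive Corollary~\ref{cor:1} as a direct specialization of Proposition~\ref{proposition:RS} to the matched Bayesian setting $\hat{\sigma}=\sigma$ (so $\beta=1/\sigma^2$ with $p_S=q_S$), and to collapse the two-dimensional RS fixed-point problem onto a one-dimensional one by invoking the Nishimori symmetry $m^\star = Q^\star$ along the Bayes-optimal line. The identity $m^\star = Q^\star$ has already been argued in the discussion preceding the corollary: Proposition~\ref{Prop:2} gives the operational meanings $Q^\star = \Ex{\hat{S}^2}{}$ and $m^\star = \Ex{S\hat{S}}{}$ in the decoupled scalar channel, and the matched Bayes estimator $\hat{S}=\Ex{S\vert Y}{p_\rho}$ obeys the orthogonality principle $\Ex{(S-\hat{S})\hat{S}}{}=0$. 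The Nishimori line also lies in the RS-stable region \cite{iba1999nishimori,antenucci2019approximate,krzakala2021statistical}, so no RSB correction is needed to apply Proposition~\ref{proposition:RS}.

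Next, I would substitute $Q=m$ into the definitions of Proposition~\ref{proposition:RS}. A direct computation gives $f_{1/\sigma^2}(m,m) = 1 + \sigma^{-2}[\Phi(1)-\Phi(m)] = g_{1/\sigma^2}(m) = g(m)$, so the two variances collapse to a single quantity $L_{1/\sigma^2}(m,m)=E_{1/\sigma^2}(m,m)=E(m) = \Phi'(m)/(R\sigma^2 g(m))$; using $g'(m) = -\Phi'(m)/\sigma^2$ this equals $-R^{-1}\,\dif \log g(m)/\dif m$, matching \eqref{E_m}. Plugging into $\Pi_\beta$, the bracket $mE + (1-Q)L/2$ telescopes to $(1+m)E/2$, and $f/g$ contributes an additive $1$, yielding
\begin{align*}
\Pi_{1/\sigma^2}(m,m) = \tfrac{1}{2}\bigl[R(1+m)E(m) + 1 + \log g(m)\bigr].
\end{align*}

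Finally, I would apply \eqref{eq:Delta_F} with $\hat{\sigma}=\sigma$ together with $\maf(\beta) = F_\beta(m^\star,m^\star)/\beta$, and identify $\mah(\sigma) = \Delta(\sigma,\sigma)$ through \eqref{eq:H_Delta}. The $R\log 2$ contributions cancel, while the constant $1$ from the step above combines with $\tfrac{1}{2}\log 2\pi\sigma^2$ to produce $\tfrac{1}{2}\log 2\pi e\sigma^2$, delivering exactly the expression $H_\sigma(m^\star)$ stated in the corollary. It remains to check that the two RS fixed-point equations in Proposition~\ref{proposition:RS} collapse into the single equation stated: with $L=E$ the $m$-equation already has the desired form, and the $Q$-equation becomes $m = \Ex{\tanh^2(\sqrt{E}Z+E)}{Z}$, which coincides with the $m$-equation by the Nishimori identity $\Ex{\tanh(\sqrt{t}Z+t)}{Z} = \Ex{\tanh^2(\sqrt{t}Z+t)}{Z}$ (a one-line Gaussian integration by parts, equivalent to $\Ex{S\hat{S}}{}=\Ex{\hat{S}^2}{}$ in the scalar channel). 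The main obstacle is purely bookkeeping, namely tracking the half-factors and the additive $1$ from $f/g=1$ so that the constants merge cleanly into $\tfrac{1}{2}\log 2\pi e\sigma^2$; beyond that, no new analytical content is required past the two propositions and the Nishimori identity.
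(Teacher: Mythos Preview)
Your proposal is correct and follows essentially the same approach as the paper: the paper's proof reads simply ``follows from arguments on the stability of the RS ansatz in the matched Bayesian inference after setting $\hat{\sigma}=\sigma$ and $m=Q$ in the RS solution,'' and the algebraic reductions you spell out (namely $f_{1/\sigma^2}(m,m)=g(m)$, $L=E$, and the collapse of $\Pi_\beta$) are exactly those the paper carries out in the paragraphs immediately preceding the corollary statement. Your bookkeeping of the constants and the Nishimori identity for collapsing the two fixed-point equations is accurate and in fact more explicit than the paper's own sketch.
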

\begin{proof}
	The proof follows simply from arguments on the stability of \ac{rs} ansatz in the matched Bayesian inference\footnote{For instance what reported in \cite{antenucci2019approximate}.} after setting $\hat{\sigma} = \sigma$ and $m=Q$ in the \ac{rs} solution. 
\end{proof}

With optimal Bayesian estimation, we are often more interested on the \textit{information rate} of the model, i.e., the mutual information per dimension. This follows the fact that by matched Bayesian inference, there is no information loss at the learning stage, and thus $I\brc{\bs;\hat{\bs}\vert \mav } = I\brc{\bs;\byy\vert \mav }$. Using the identity in \eqref{eq:MI}, we have
	\begin{align}
		\mai\brc{\sigma} = \lim_{N\uparrow \infty}  \frac{1}{N}I\brc{\bs;\byy\vert \mav } = \mah \brc{\sigma} - \frac{1}{2} \log 2\pi e \sigma^2.\label{eq:MI_F}
	\end{align}
We use this metric through experimental investigations.

\section{Linear Models versus Nonlinear Models}
\label{Sec:LinvsNonLin}
Given the main results, we now consider some illustrative special cases to investigate the impacts of having a non-linear Gaussian generative model. To this end, we focus on the optimal Bayesian case and consider a \textit{pure order $\lambda$} polynomial covariance function, i.e., $\Phi \brc{x} = c \left. x^\lambda \right.$ for some constant $c\in \setR^+$. With this covariance function, the random field describes a $\brc{\lambda+1}$-dimensional tensor\footnote{Remember that a matrix is a \textit{two-dimensional} tensor.} whose components are generated at random with some Gaussian prior. This means that the $n$-th component of the random transform is given in terms of $\bs$ as
\begin{align}
	\mav_n \brc{\bs} = \sum_{\ell_1,\ldots,\ell_\lambda = 1}^{K} J^n_{\ell_1,\ldots,\ell_\lambda} \prod_{t=1}^\lambda s_{\ell_t},
\end{align}
for some $J^n_{\ell_1,\ldots,\ell_\lambda}$ which are \ac{iid} Gaussian random variables in $n,\ell_1,\ldots,\ell_\lambda$. 

To keep the comparison among random fields of different orders fair, we assume that all transformed symbol are unit variance, i.e., 
\begin{align}
	\Ex{\mav_n \brc{\bs}^2 }{}= 1.
\end{align}
This concludes that $J^n_{\ell_1,\ldots,\ell_\lambda} \sim \man\brc{0,1/\sqrt{K^\lambda}}$ for $n \in \dbc{N}$ and $\ell_1,\ldots,\ell_\lambda\in \dbc{K}$ and 
\begin{align}
\Phi \brc{x} = x^\lambda.
\end{align}

Noting that $\Phi\brc{1} = 1$, we have $g\brc{m} = 1+ \xi_m\brc{\sigma}$, where
\begin{align}
	\xi_m\brc{\sigma} = \frac{1-m^\lambda}{\sigma^2}.
\end{align}
From $g\brc{m}$, $E\brc{m}$ is determined via \eqref{E_m}. By substituting in Corollary~\ref{cor:1}, we conclude that the asymptotic information rate reads $\mai\brc{\sigma} = \mal_{m^\star}\brc{\sigma}$ with $\mal_m\brc{\sigma}$ given as follows:
\begin{align}
	\mal_m\brc{\sigma}  =  \frac{1}{2} \log \brc{1+\xi_m\brc{\sigma}} + \maq_m \brc{\sigma}, \label{eq:L_m}
\end{align}
where the latter term is determined for a given $m$ and noise variance $\sigma$ as
\begin{align}
	\maq_m\brc{\sigma}  =  \frac{ \brc{1  +  m} R }{ 2  \rho_m^2\brc{\sigma} }  -R  \Ex{ \log  \cosh \brc{  \frac{1  + \rho_m\brc{\sigma}Z }{   \rho_m^2\brc{\sigma} } } }{Z} ,
\end{align}
with $ \rho_m\brc{\sigma} $ being defined as
\begin{align}
 \rho_m\brc{\sigma} = 	 \sqrt{\frac{ R \sigma^2  }{\lambda m^{\lambda-1} } \brc{1 +\xi_m \brc{\sigma} } } .
\end{align}
We refer to $\mal_m\brc{\sigma}$ as the \textit{modified} free energy. This appellation becomes clear shortly. 

The fixed-point $m^\star$ is further determined by minimizing $\mal_m\brc{\sigma}$ over $m\in\dbc{0,1}$. The minimizer is readily found by first solving 
\begin{align}
	m  =    \Ex{ \tanh \brc{  \frac{1  + \rho_m\brc{\sigma}Z }{   \rho_m^2\brc{\sigma} } } }{Z} \label{eq:fix_m}
\end{align}
for $m$ and then minimizing $\mal_m\brc{\sigma}$ over the union of the solution set and the set of boundary points $\set{0,1}$.

The fixed-point equation describes the extreme points of the free energy; see Appendix~\ref{app:2}. Following the fact that the information rate and the free energy behave the same\footnote{Remember that the information rate is simply a shifted version of the normalized free energy; see \eqref{eq:MI_F}.} in $m$, it is concluded that $m^\star$ is the global minimizer\footnote{This further clarifies the appellation.} of $\mal_m\brc{\sigma}$ within interval $m\in \dbc{0,1}$. The information rate is plotted in terms of $m$ for $\lambda=1$ and $\lambda=2$ in Fig.~\ref{fig:1}, where we set $\sigma^2 = 0.1$ and $R=1.76$. The solution $m^\star$ is further shown for both curves in the figure.
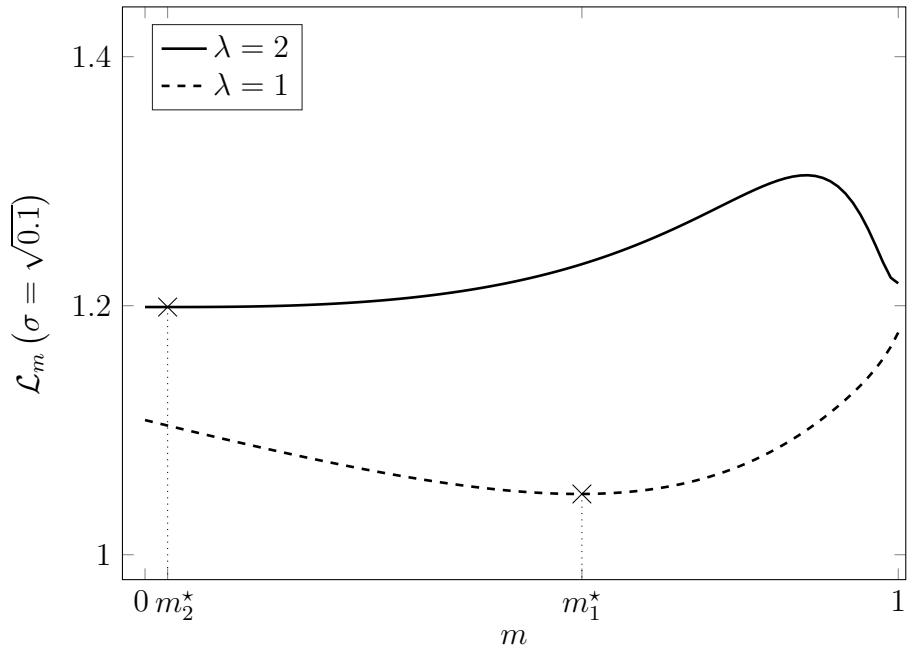
\begin{figure}
	\begin{center}
		% This file was created by matlab2tikz.
%
%The latest updates can be retrieved from
%  http://www.mathworks.com/matlabcentral/fileexchange/22022-matlab2tikz-matlab2tikz
%where you can also make suggestions and rate matlab2tikz.
%
\definecolor{mycolor1}{rgb}{0.00000,0.44700,0.74100}%
\definecolor{mycolor2}{rgb}{0.85000,0.32500,0.09800}%
\definecolor{mycolor3}{rgb}{0.92900,0.69400,0.12500}%
\definecolor{mycolor4}{rgb}{0.49400,0.18400,0.55600}%
\begin{tikzpicture}

\begin{axis}[%
width=4.1in,
height=3in,
at={(1.011in,0.642in)},
scale only axis,
xmin=-0.03,
xmax=1.01,
xlabel={$m$},
xtick={0,0.03,0.58,1},
xticklabels={{$0\hspace*{1mm}$},{$\hspace{2mm}m_2^\star$},{$m_1^\star$},{$1$}},
ymin=0.98,
ymax=1.44,
ytick={1,1.2,1.4},
yticklabels={{$1$},{$1.2$},{$1.4$}},
ylabel={$\mal_m\brc{\sigma = \sqrt{0.1}}$},
axis background/.style={fill=white},
legend style={at={(0.23,.97)},legend cell align=left, align=left, draw=white!15!black}
]
\addplot [color=black,line width=1.0pt]
  table[row sep=crcr]{%
0	1.19894763639919\\
0.01	1.19894645082403\\
0.02	1.19894411421491\\
0.03	1.19894235677482\\
0.04	1.19894278874813\\
0.05	1.19894691527363\\
0.06	1.19895614938491\\
0.07	1.19897182346773\\
0.08	1.1989951994189\\
0.09	1.1990274777036\\
0.1	1.1990698054715\\
0.11	1.19912328386384\\
0.12	1.19918897462156\\
0.13	1.19926790608689\\
0.14	1.19936107867643\\
0.15	1.19946946989232\\
0.16	1.19959403892839\\
0.17	1.19973573092022\\
0.18	1.19989548088125\\
0.19	1.20007421736163\\
0.2	1.20027286586142\\
0.21	1.20049235202556\\
0.22	1.20073360464466\\
0.23	1.20099755848206\\
0.24	1.20128515694505\\
0.25	1.20159735461541\\
0.26	1.20193511965209\\
0.27	1.20229943607678\\
0.28	1.20269130595077\\
0.29	1.20311175144979\\
0.3	1.20356181684141\\
0.31	1.20404257036753\\
0.32	1.20455510603265\\
0.33	1.2051005452963\\
0.34	1.20568003866588\\
0.35	1.20629476718377\\
0.36	1.20694594379974\\
0.37	1.20763481461691\\
0.38	1.2083626599962\\
0.39	1.20913079550042\\
0.4	1.20994057265503\\
0.41	1.21079337949756\\
0.42	1.21169064088263\\
0.43	1.21263381850237\\
0.44	1.21362441057546\\
0.45	1.21466395114881\\
0.46	1.21575400894618\\
0.47	1.21689618568652\\
0.48	1.21809211378088\\
0.49	1.21934345330125\\
0.5	1.22065188809564\\
0.51	1.22201912090213\\
0.52	1.22344686728869\\
0.53	1.22493684821531\\
0.54	1.2264907809791\\
0.55	1.22811036826086\\
0.56	1.22979728494148\\
0.57	1.23155316229717\\
0.58	1.23337956911226\\
0.59	1.23527798916437\\
0.6	1.23724979443735\\
0.61	1.23929621329828\\
0.62	1.24141829273313\\
0.63	1.24361685356566\\
0.64	1.24589243738118\\
0.65	1.24824524363301\\
0.66	1.25067505511663\\
0.67	1.25318114964528\\
0.68	1.25576219533785\\
0.69	1.25841612642108\\
0.7	1.26113999583618\\
0.71	1.26392980020335\\
0.72	1.26678027181321\\
0.73	1.26968463125314\\
0.74	1.27263429300851\\
0.75	1.27561851487057\\
0.76	1.27862398020337\\
0.77	1.28163430004476\\
0.78	1.28462941963422\\
0.79	1.28758491130008\\
0.8	1.29047113279599\\
0.81	1.29325222737448\\
0.82	1.29588493957175\\
0.83	1.29831721969778\\
0.84	1.30048659192703\\
0.85	1.30231826841949\\
0.86	1.30372300999759\\
0.87	1.30459477119203\\
0.88	1.30480823934976\\
0.89	1.30421651132369\\
0.9	1.30264939630643\\
0.91	1.29991327938631\\
0.92	1.29579429369624\\
0.93	1.29006804429376\\
0.94	1.28252191621922\\
0.95	1.27300131682815\\
0.96	1.26150166384055\\
0.97	1.24834947421594\\
0.98	1.23456338908919\\
0.99	1.22260174951572\\
0.995	1.220215835065590\\
1	1.21802992061546\\
};
\addlegendentry{$\lambda=2$}

\addplot [color=black, dotted, mark=x, mark options={ mark size = 5.0pt,solid, black},forget plot]
  table[row sep=crcr]{%
0.03	1.19894235677482\\
0.03	-5\\
};
%\addlegendentry{data2}

\addplot [color=black, dashed, line width = 1.0pt]
  table[row sep=crcr]{%
0	1.10816262622021\\
0.01	1.10668403217302\\
0.02	1.10521047826088\\
0.03	1.10374227056557\\
0.04	1.10227972693735\\
0.05	1.10082317748077\\
0.06	1.09937296506289\\
0.07	1.09792944584497\\
0.08	1.0964929898388\\
0.09	1.09506398148907\\
0.1	1.09364282028293\\
0.11	1.09222992138826\\
0.12	1.09082571632203\\
0.13	1.08943065365041\\
0.14	1.08804519972214\\
0.15	1.08666983943694\\
0.16	1.0853050770508\\
0.17	1.08395143702\\
0.18	1.08260946488584\\
0.19	1.08127972820234\\
0.2	1.07996281750887\\
0.21	1.07865934735036\\
0.22	1.07736995734723\\
0.23	1.07609531331782\\
0.24	1.07483610845591\\
0.25	1.07359306456623\\
0.26	1.07236693336084\\
0.27	1.07115849781953\\
0.28	1.06996857361742\\
0.29	1.06879801062322\\
0.3	1.0676476944715\\
0.31	1.06651854821281\\
0.32	1.06541153404521\\
0.33	1.06432765513135\\
0.34	1.06326795750487\\
0.35	1.06223353207046\\
0.36	1.06122551670165\\
0.37	1.06024509844071\\
0.38	1.05929351580494\\
0.39	1.05837206120373\\
0.4	1.05748208347069\\
0.41	1.05662499051501\\
0.42	1.05580225209631\\
0.43	1.05501540272659\\
0.44	1.05426604470315\\
0.45	1.05355585127521\\
0.46	1.05288656994714\\
0.47	1.0522600259199\\
0.48	1.05167812567167\\
0.49	1.05114286067736\\
0.5	1.0506563112653\\
0.51	1.05022065060765\\
0.52	1.04983814883881\\
0.53	1.04951117729355\\
0.54	1.04924221285327\\
0.55	1.04903384238506\\
0.56	1.04888876725361\\
0.57	1.04880980788008\\
0.58	1.04879990831604\\
0.59	1.04886214079215\\
0.6	1.0489997101922\\
0.61	1.04921595839194\\
0.62	1.04951436838932\\
0.63	1.04989856813724\\
0.64	1.05037233397252\\
0.65	1.0509395935138\\
0.66	1.05160442787778\\
0.67	1.05237107303567\\
0.68	1.05324392010171\\
0.69	1.054227514311\\
0.7	1.05532655240749\\
0.71	1.05654587812363\\
0.72	1.0578904753937\\
0.73	1.05936545890694\\
0.74	1.06097606157768\\
0.75	1.06272761849818\\
0.76	1.0646255469575\\
0.77	1.06667532217763\\
0.78	1.06888244856808\\
0.79	1.07125242657998\\
0.8	1.07379071572684\\
0.81	1.07650269514347\\
0.82	1.07939362435552\\
0.83	1.08246860899851\\
0.84	1.08573257947725\\
0.85	1.08919029564513\\
0.86	1.09284639853271\\
0.87	1.09670554256696\\
0.88	1.10077266118971\\
0.89	1.10505344940091\\
0.9	1.10955519527995\\
0.91	1.11428817003231\\
0.92	1.119267911102\\
0.93	1.12451893682835\\
0.94	1.13008076836473\\
0.95	1.1360177009505\\
0.96	1.1424347349719\\
0.97	1.14950376749417\\
0.98	1.15750716756249\\
0.99	1.16691142104047\\
1	1.17849412753497\\
};
\addlegendentry{$\lambda=1$}

\addplot [color=black, dotted, mark=x, mark options={mark size = 5.0pt,solid, black},forget plot]
  table[row sep=crcr]{%
0.58	1.04879990831604\\
0.58	-5\\
};
%\addlegendentry{data4}

\end{axis}
\end{tikzpicture}%
	\end{center}
	\caption{Information rate against $m$ for linear and quadratic random Gaussian fields. The overlap $m^\star$ is given by the minima of the information rate.}
	\label{fig:1}
\end{figure}

Given the above derivations, we now study the asymptotic characteristics of the optimal Bayesian learning algorithm for different choices of $\lambda$. Before, we start with the investigations, let us derive a simple lower- and an upper-bound on the information rate which is useful throughout the investigations.

\begin{corollary}
	Let $m^\star$ be the overlap. Then, there exists $0 \leq C_{m^\star} \leq 1$, such that
	\begin{align}
		\mai\brc{\sigma}  = \frac{1}{2} \dbc{ \log \brc{1+ \xi_{m^\star}\brc{\sigma} } + \frac{\brc{C_{m^\star}  + m^\star} R}{ \rho_{m^\star}^2 \brc{\sigma} }} - R \log \cosh \brc{  \frac{1}{  \rho_{m^\star}^2 \brc{\sigma} } }.
	\end{align}
\label{cor:2}
\end{corollary}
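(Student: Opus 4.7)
The plan is to start from the identity $\mai(\sigma) = \mal_{m^\star}(\sigma)$ already established via Corollary~\ref{cor:1} together with \eqref{eq:L_m}, and to rearrange it algebraically into the target form by extracting the single term $R \log\cosh(1/\rho_{m^\star}^2(\sigma))$ from the Gaussian average. Concretely, I would introduce the shorthand
\begin{align}
\Delta_{m^\star}(\sigma) \coloneqq \Ex{ \log\cosh\brc{ \frac{1 + \rho_{m^\star}(\sigma) Z}{\rho_{m^\star}^2(\sigma)} } }{Z} - \log\cosh\brc{ \frac{1}{\rho_{m^\star}^2(\sigma)} },
\end{align}
and \emph{define} $C_{m^\star}$ implicitly through the relation $\Delta_{m^\star}(\sigma) = (1 - C_{m^\star})/(2 \rho_{m^\star}^2(\sigma))$. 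Plugging this back into the expression of $\maq_{m^\star}(\sigma)$ and combining with $\tfrac12 \log(1+\xi_{m^\star}(\sigma))$ immediately produces the right-hand side claimed by the corollary; the entire content of the statement therefore reduces to verifying that the resulting $C_{m^\star}$ lies in $[0,1]$, i.e.\ that
\begin{align}
0 \;\leq\; \Delta_{m^\star}(\sigma) \;\leq\; \frac{1}{2 \rho_{m^\star}^2(\sigma)}.
\end{align}

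The lower bound is immediate from Jensen's inequality. The map $x \mapsto \log\cosh(x)$ is convex on $\setR$ (since $(\log\cosh)''(x) = \mathrm{sech}^2(x) \geq 0$), and the random argument $1/\rho_{m^\star}^2(\sigma) + Z/\rho_{m^\star}(\sigma)$ has mean $1/\rho_{m^\star}^2(\sigma)$, so Jensen gives $\Delta_{m^\star}(\sigma) \geq 0$, equivalently $C_{m^\star} \leq 1$.

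The upper bound, which is the only non-trivial step, comes from the fact that $\log\cosh$ is $1$-smooth, since $(\log\cosh)''(x) = \mathrm{sech}^2(x) \leq 1$ uniformly in $x$. This yields the quadratic majorization
\begin{align}
\log\cosh(a + t) \;\leq\; \log\cosh(a) + \tanh(a)\, t + \frac{t^2}{2}, \qquad \forall\, a,t \in \setR.
\end{align}
Specializing to $a = 1/\rho_{m^\star}^2(\sigma)$ and $t = Z/\rho_{m^\star}(\sigma)$ and taking the Gaussian expectation, the linear term vanishes since $\Ex{Z}{}=0$ and the quadratic term contributes $\Ex{Z^2}{}/(2\rho_{m^\star}^2(\sigma)) = 1/(2\rho_{m^\star}^2(\sigma))$. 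This gives $\Delta_{m^\star}(\sigma) \leq 1/(2 \rho_{m^\star}^2(\sigma))$, equivalently $C_{m^\star} \geq 0$.

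Combining the two inequalities establishes $0 \leq C_{m^\star} \leq 1$ and completes the proof. The main obstacle is essentially the right inequality; everything else is direct algebra or a one-line Jensen argument. It is worth noting that this proof is entirely non-constructive in the sense that $C_{m^\star}$ is defined \emph{a posteriori} from the Gaussian average, so that the corollary really packages the fixed-point expression of Corollary~\ref{cor:1} into a more transparent form that makes the dependence on $\log\cosh(1/\rho_{m^\star}^2(\sigma))$ explicit, which turns out to be convenient for the subsequent asymptotic study of the strictly nonlinear regime.
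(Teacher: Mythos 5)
Your proof is correct, and its overall architecture matches the paper's: both sandwich the Gaussian average $\Ex{\log\cosh\brc{(1+\rho_{m^\star}Z)/\rho_{m^\star}^2}}{Z}$ between $\log\cosh\brc{1/\rho_{m^\star}^2}$ and $\log\cosh\brc{1/\rho_{m^\star}^2}+1/(2\rho_{m^\star}^2)$, then define $C_{m^\star}$ a posteriori from wherever the average lands in that interval. The lower bound (your $\Delta_{m^\star}\geq 0$, i.e.\ $C_{m^\star}\leq 1$) is obtained identically in both proofs, by Jensen applied to the convex function $\log\cosh$. Where you genuinely diverge is the upper bound: the paper applies Jensen to the concave outer $\log$, writing $\Ex{\log\cosh(X)}{}\leq\log\Ex{\cosh(X)}{}$, and then evaluates $\Ex{\cosh(X)}{}$ in closed form via the Gaussian moment-generating function to get $\exp\set{1/(2\rho_{m^\star}^2)}\cosh\brc{1/\rho_{m^\star}^2}$; you instead invoke the uniform bound $(\log\cosh)''=\mathrm{sech}^2\leq 1$ to get a quadratic majorization whose linear term is killed by $\Ex{Z}{}=0$ and whose quadratic term contributes exactly $1/(2\rho_{m^\star}^2)$. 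The two routes deliver the same constant, but yours is slightly more robust in that it uses only the first two moments of the noise rather than full Gaussianity, while the paper's route yields the exact value of $\Ex{\cosh(X)}{}$ as a by-product. Either way the conclusion and the resulting interval for $C_{m^\star}$ are identical, so the proposal is a complete and valid proof of the corollary.
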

\begin{proof}
The proof follows Jensen's inequality. The functions $\log \brc\cdot : \left(0, \infty\right] \mapsto \setR$ and $\log \cosh \brc\cdot : \setR  \mapsto \left[0, \infty\right]$ are concave and convex, respectively. We hence can write
\begin{subequations}
	\begin{align}
		\Ex{\log  \cosh \brc{  \frac{1  + \rho_m\brc{\sigma}Z }{   \rho_m^2\brc{\sigma} } } }{} &\leq \log  \Ex{\cosh \brc{  \frac{1  + \rho_m\brc{\sigma}Z }{   \rho_m^2\brc{\sigma} } } }{},\\
		&= \log \exp\set{\frac{1}{  2 \rho_m^2\brc{\sigma} }} \cosh \brc{  \frac{1}{   \rho_m^2\brc{\sigma} } }, \\
		&=  \frac{1}{  2 \rho_m^2\brc{\sigma} } + \log \cosh \brc{  \frac{1}{   \rho_m^2\brc{\sigma} } },
	\end{align}
\end{subequations}
using the concavity of $\log\brc{\cdot}$, and 
\begin{subequations}
	\begin{align}
		\Ex{\log  \cosh \brc{  \frac{1  + \rho_m\brc{\sigma}Z }{   \rho_m^2\brc{\sigma} } } }{} &\geq \log  \cosh \brc{  \Ex{\frac{1  + \rho_m\brc{\sigma}Z }{   \rho_m^2\brc{\sigma} } }{}},\\
		&= \log \cosh \brc{  \frac{1}{   \rho_m^2\brc{\sigma} } },
	\end{align}
\end{subequations}
from the convexity of $\log \cosh \brc\cdot$. By substituting into \eqref{eq:L_m}, we can show that $\mal_m\brc{\sigma}$ is bounded from below as
\begin{align}
	\mal_m\brc{\sigma}  \geq \frac{1}{2} \dbc{\log \brc{1+ \xi_{m}\brc{\sigma} } + \frac{mR}{ \rho_m^2\brc{\sigma} }} - R \log \cosh \brc{  \frac{1}{   \rho_m^2\brc{\sigma} } },
\end{align}
and from above as
\begin{align}
	\mal_m\brc{\sigma}  \leq \frac{1}{2} \dbc{\log \brc{1+ \xi_{m}\brc{\sigma} } + \frac{\brc{1+m}R}{ \rho_m^2\brc{\sigma} }} - R \log \cosh \brc{  \frac{1}{   \rho_m^2\brc{\sigma} } },
\end{align}
We therefore conclude that there exists $0 \leq C_m \leq 1$, such that 
\begin{align}
	\mal_m\brc{\sigma}  = \frac{1}{2} \dbc{\log \brc{1+ \xi_{m}\brc{\sigma} } + \frac{\brc{C_m+m}R}{ \rho_m^2\brc{\sigma} }} - R \log \cosh \brc{  \frac{1}{   \rho_m^2\brc{\sigma} } }.
\end{align}
By setting $\mai\brc{\sigma} = \mal_{m^\star}\brc{\sigma}$, the proof is concluded.
\end{proof}

\subsection{Classical Case: Linear Generative Model}
The classical linear model is studied by setting $\lambda=1$ in the given results. This is a widely-studied setting whose replica analysis is given in various lines of work in the literature, e.g., \cite{tanaka2002statistical,guo2005randomly,tulino2013support,bereyhi1Extension,bereyhi4}. We start our investigation by considering the overlap for this field. An initial finding in this respect is given in the following lemma:

\begin{lemma}[Inference guarantee]
	The overlap of linear model is always non-zero.
	\label{lem:1}
\end{lemma}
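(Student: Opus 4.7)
The plan is to show $m^\star > 0$ by proving that the right-derivative of $H_\sigma$ at $m = 0$ is strictly negative, which rules out $m = 0$ as the minimizer of the continuous function $H_\sigma$ over $[0,1]$, and hence over the subset $\setF \subseteq [0,1]$ introduced in Corollary~\ref{cor:1}.

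The first step is to specialize Corollary~\ref{cor:1} to the linear case $\lambda = 1$. With $\Phi(m) = m$, we have $g(m) = 1 + (1-m)/\sigma^2$ and, by \eqref{E_m}, $E(m) = 1/\dbc{R(\sigma^2 + 1 - m)}$, so that $E(0)>0$ and $E'(m)>0$ on $[0,1]$. I then verify that $T(0)\coloneqq \Ex{\tanh(\sqrt{E(0)}\,Z + E(0))}{Z}$ is strictly positive. Using the symmetry of $Z \sim \man(0,1)$ and the identity $\tanh(x) + \tanh(y) = \sinh(x+y)/\dbc{\cosh(x)\cosh(y)}$, a short symmetrization yields
\[
2\,T(0) = \sinh\brc{2E(0)}\; \Ex{\frac{1}{\cosh\brc{\sqrt{E(0)}\,Z + E(0)}\;\cosh\brc{\sqrt{E(0)}\,Z - E(0)}}}{Z} > 0,
\]
so that $m = 0$ already fails to satisfy the fixed-point equation of Corollary~\ref{cor:1}.

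The core of the argument is a closed-form expression for $H_\sigma'(m)$. Differentiating $\Pi(m)$ and using $\frac{\dif}{\dif m}\log g(m) = -R\,E(m)$ gives $\Pi'(m) = \tfrac{R}{2}(1+m)\,E'(m)$. For the term $-R\,\Ex{\log\cosh(\sqrt{E(m)}Z + E(m))}{Z}$, Gaussian integration by parts, i.e.\ $\Ex{Z\,f(Z)}{Z} = \Ex{f'(Z)}{Z}$, produces a contribution depending on $Q(m) \coloneqq \Ex{\tanh^2(\sqrt{E(m)}Z + E(m))}{Z}$. The key point is that, in the Bayes-optimal regime of Corollary~\ref{cor:1}, the decoupled scalar channel is matched ($L=E$ at $Q=m$, equivalently $\rho = \hat\rho = 1/\sqrt{E(m)}$), so the Nishimori identity $Q(m) = T(m)$ holds as a functional identity in $m$---not just at the saddle point. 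Substituting this identity cancels the $Q(m)$-terms and yields
\[
H_\sigma'(m) = \frac{R\,E'(m)}{2}\,\brc{m - T(m)}.
\]

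Evaluating at $m = 0$ gives $H_\sigma'(0) = -\tfrac{R}{2}\,E'(0)\,T(0) < 0$, so $H_\sigma$ strictly decreases as $m$ leaves zero. By continuity on the compact interval $[0,1]$, the minimum of $H_\sigma$ is attained at some $m^\dagger > 0$ which is either the endpoint $m=1$ or an interior critical point; both lie in $\setF$. Since $H_\sigma(m^\dagger) < H_\sigma(0)$, the minimizer $m^\star$ over $\setF$ must satisfy $m^\star > 0$. The main delicate point I anticipate is justifying the use of the Nishimori identity $Q(m) = T(m)$ as a functional identity in $m$, rather than only at the stationary point; this is valid precisely because Corollary~\ref{cor:1} pins the decoupled noise variance to the matched value $1/\sqrt{E(m)}$ for every $m$, but needs to be spelled out carefully to make the cancellation in $H_\sigma'(m)$ rigorous.
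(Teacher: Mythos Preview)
Your proposal is correct and follows essentially the same route as the paper: show that the derivative of the free-energy functional at $m=0$ is strictly negative, so $m=0$ cannot be the minimizer. Your version is in fact tidier than the paper's: you invoke the Nishimori identity $\Ex{\tanh(\sqrt{E}Z+E)}{Z}=\Ex{\tanh^2(\sqrt{E}Z+E)}{Z}$ (which holds for every $E>0$, hence as a functional identity in $m$, exactly as you suspect) to obtain the clean formula $H_\sigma'(m)=\tfrac{R E'(m)}{2}\brc{m-T(m)}$, and you explicitly verify $T(0)>0$ by symmetrization, whereas the paper arrives at $\Ex{\tanh^2\tilde Z}{}-2\Ex{\tanh\tilde Z}{}\leq 0$ and asserts the sign without spelling out these two ingredients.
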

\begin{proof}
For $\lambda=1$, the term $\rho_m\brc{\sigma}$ reduces to
\begin{align}
	\rho_m\brc{\sigma} = 	 \sqrt{ R  \brc{\sigma^2 + 1 - m } }. \label{eq:rho_T_1}
\end{align}
By substituting in the fixed-point equation, it is observed that $m=0$ is not a solution for any $R$ and $\sigma$. This concludes that $m=0$ is not an saddle point of $\mal_m\brc{\sigma}$, or equivalently
\begin{align}
	\frac{\partial}{\partial m} \mal_m \brc{\sigma} \vert_{m=0} \neq 0, \label{der:1}
\end{align}
for any choice of $R$ and $\sigma$. We now show that there exists always a point $m\in \dbc{0,1}$ for which $\mal_m\brc{\sigma} $ is smaller than $\mal_0\brc{\sigma} $: Let us define the Gaussian random variable $\tilde{Z}$ as
\begin{align}
	\tilde{Z} \coloneqq \frac{1  + \rho_0\brc{\sigma}Z }{   \rho_0^2\brc{\sigma} } ,
\end{align}
for $Z\sim\man\brc{0,1}$. The derivative of $\mal_m\brc{\sigma}$ at $m=0$ is then determined in terms of $\tilde{Z}$ as
\begin{align}
	\frac{\partial}{\partial m} \mal_m \brc{\sigma} \vert_{m=0} = \frac{R^2}{2 \rho_0^2\brc{\sigma} } \brc{1 - \Ex{\tanh \tilde{Z} }{} - \rho_0^2\brc{\sigma} \Ex{\tilde{Z}\tanh \tilde{Z} }{} }.
\end{align}
Noting that $\tilde{Z}\sim\man\brc{1/\rho_0^2\brc{\sigma},1/\rho_0^2\brc{\sigma}}$, one can use integration by parts to show that
\begin{align}
 \Ex{\tilde{Z}\tanh \tilde{Z} }{} = \frac{1}{\rho_0^2\brc{\sigma}}  \Ex{1+ \tanh \tilde{Z} - \tanh^2 \tilde{Z} }{}.
\end{align}
As the result, we have 
\begin{align}
	\frac{\partial}{\partial m} \mal_m \brc{\sigma} \vert_{m=0} = \frac{R^2}{2 \rho_0^2\brc{\sigma} } \brc{\Ex{\tanh^2 \tilde{Z} }{} -2 \Ex{\tanh \tilde{Z} }{} } \leq 0. \label{der:2}
\end{align}
Considering \eqref{der:1} and \eqref{der:2}, we conclude that for any $R>0$ and $\sigma$,  $\mal_m \brc{\sigma} $ decreases in $m$ at $m=0$. This leads to this conclusion that there exists a point $\tilde{m}$ in the right neighborhood of $m=0$ for which we have $\mal_{\tilde{m}}\brc{\sigma} \leq \mal_0\brc{\sigma}$; therefore, $m=0$ cannot be the global minimizer of $\mal_m \brc{\sigma}$ on $\dbc{0,1}$.
\end{proof}

Non-zero overlap indicates that in the linear model the learned symbol are always correlated to the true labels. This is in fact an intuitive observation which confirms a rather known property: Linear models always carry information about the model parameters.

In general, the overlap is a decreasing function of the load $R$: For smaller $R$ the problem is better determined, and hence the learned labels are more correlated to the true symbols\footnote{This is in fact true for noisy observations. For the case of $\sigma^2 = 0$, however, the phase transition occurs with different scaling. Namely, to have an overlap smaller than one, one needs to let the load to be scaled logarithmically (or faster) with the dimension; see \cite{sedaghat2015optimum} for more details.}. This is shown in Fig.~\ref{fig:2}, where we plot the overlap $m^\star$ against $R$ for $\sigma^2 = 0.1$. 
\begin{figure}
	\begin{center}
		% This file was created by matlab2tikz.
%
%The latest updates can be retrieved from
%  http://www.mathworks.com/matlabcentral/fileexchange/22022-matlab2tikz-matlab2tikz
%where you can also make suggestions and rate matlab2tikz.
%
\begin{tikzpicture}
	
	\begin{axis}[%
		width=4.1in,
		height=3in,
		at={(1.262in,0.703in)},
		scale only axis,
		xmin=0,
		xmax=6,
		xlabel style={font=\color{white!15!black}},
		xlabel={$R $},
		ymin=0,
		ymax=1.02,
		ylabel style={font=\color{white!15!black}},
		ylabel={$m^\star$},
		axis background/.style={fill=white},
		legend style={legend cell align=left, align=left, draw=white!15!black}
		]
		
		\addplot [color=black, dashed, line width = 1.0pt, forget plot]
		table[row sep=crcr]{%
			0.3	0.999999987881823\\
			0.4	0.999999106893226\\
			0.5	0.999987948275008\\
			0.6	0.999930455224047\\
			0.7	0.999752212823603\\
			0.8	0.999342952388489\\
			0.9	0.998558529839791\\
			1	0.997206103221256\\
			1.1	0.994995966237075\\
			1.2	0.991414847344588\\
			1.3	0.985332548993985\\
			1.4	0.973119690313951\\
			1.5	0.790221176140756\\
			1.6	0.671110883155409\\
			1.7	0.606605265394437\\
			1.8	0.558565891496958\\
			1.9	0.519788701734557\\
			2	0.487220071002547\\
			2.1	0.459192127901564\\
			2.2	0.434663497995862\\
			2.3	0.412927769807898\\
			2.4	0.393478006492662\\
			2.5	0.375935360878735\\
			2.6	0.360007903581219\\
			2.7	0.345465235010545\\
			2.8	0.33212197140472\\
			2.9	0.319826579773078\\
			3	0.308453535198812\\
			3.1	0.2978976598236\\
			3.2	0.288069964043189\\
			3.3	0.278894508697322\\
			3.4	0.27030598465554\\
			3.5	0.262247848961189\\
			3.6	0.25467085342708\\
			3.7	0.247531854410122\\
			3.8	0.240792873222419\\
			3.9	0.234420319423557\\
			4	0.228384364187946\\
			4.1	0.222658404479089\\
			4.2	0.217218634898593\\
			4.3	0.21204367925827\\
			4.4	0.207114280114763\\
			4.5	0.202413044252614\\
			4.6	0.19792420774278\\
			4.7	0.193633453136597\\
			4.8	0.189527742655351\\
			4.9	0.185595176035379\\
			5	0.181824867211697\\
			5.1	0.178206836945606\\
			5.2	0.174731917603926\\
			5.3	0.171391676789781\\
			5.4	0.168178335742094\\
			5.5	0.165084712070095\\
			5.6	0.162104161582969\\
			5.7	0.159230528239632\\
			5.8	0.15645809964345\\
			5.9	0.153781567370452\\
			6	0.151195991523126\\
		};
		%\addlegendentry{$T=1$}
		
	\end{axis}
\end{tikzpicture}%
	\end{center}
	\caption{Overlap $m^\star$ versus $R$ for the linear field, i.e., $\lambda=1$, and $\sigma^2 = 0.1$. The overlap in this case never touches the line $m^\star = 0$.}
	\label{fig:2}
\end{figure}
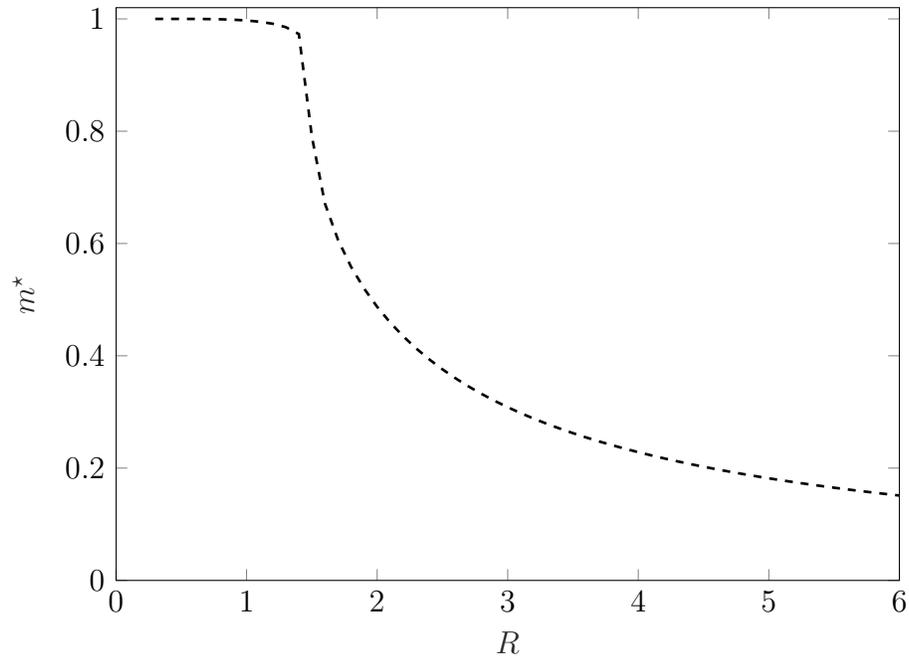

We now consider the information rate $\mai\brc{\sigma}$ and investigate its behavior. To this end, we plot the information rate against the load $R$ in Fig.~\ref{fig:3} for the same setting, i.e., $\sigma^2 = 0.1$. For sake of comparison, we further plot the upper-bound
\begin{align}
	\mai\brc{\sigma} = \lim_{N\uparrow \infty}  \frac{1}{N}I\brc{\bs;\byy\vert \mav } \leq   \frac{1}{N}H\brc{\bs\vert \mav } = R \log 2.
\end{align}

As the figure shows, the curves are closely consistent up to some critical load at which the information rates starts to saturate. The initial observation further depicts that the critical load corresponds to the point on Fig~\ref{fig:2} where $m^\star$ starts to deviate from the extreme case of $m^\star=1$. To further examine the generality of this finding, we use Corollary~\ref{cor:2}: Replacing $\rho_m\brc{\sigma}$ with \eqref{eq:rho_T_1}, and then substituting into Corollary~\ref{cor:2}, we have
\begin{align}
	\mai\brc{\sigma}  = \frac{1}{2} \dbc{\log \brc{1+\frac{1-{m^\star}}{\sigma^2}} + \frac{C_{m^\star}+{m^\star}}{ \sigma^2+1-{m^\star}}} - R \log \cosh \brc{  \frac{1}{  R \brc{\sigma^2 +1-{m^\star}} } }. \label{eq:Rate_1_analysis}
\end{align}
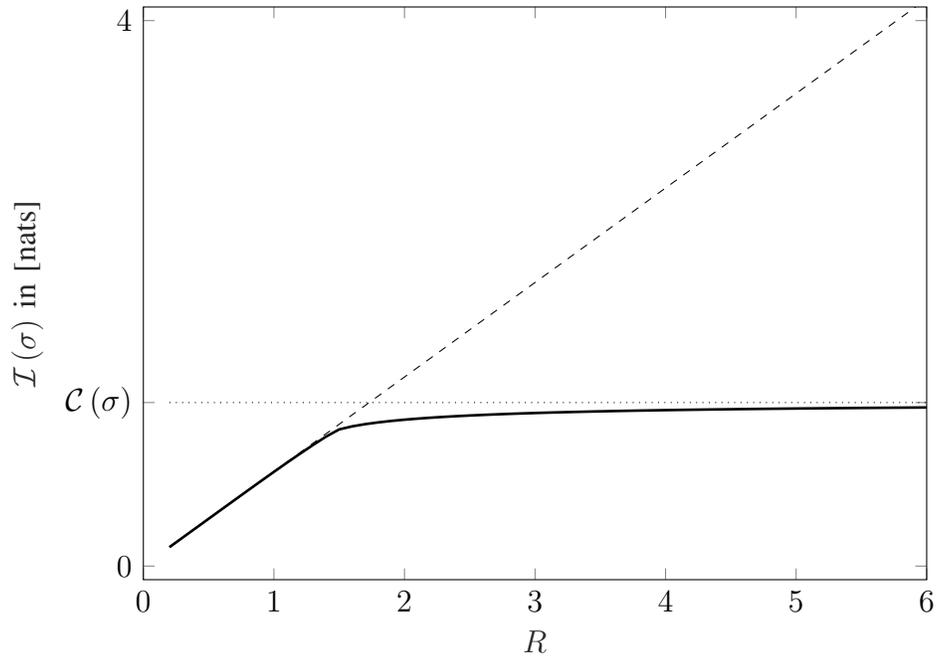
\begin{figure}
	\begin{center}
		% This file was created by matlab2tikz.
%
%The latest updates can be retrieved from
%  http://www.mathworks.com/matlabcentral/fileexchange/22022-matlab2tikz-matlab2tikz
%where you can also make suggestions and rate matlab2tikz.
%
\definecolor{mycolor1}{rgb}{0.00000,0.44700,0.74100}%
\definecolor{mycolor2}{rgb}{0.85000,0.32500,0.09800}%
\definecolor{mycolor3}{rgb}{0.92900,0.69400,0.12500}%
\begin{tikzpicture}

	\begin{axis}[%
	width=4.1in,
	height=3in,
	at={(1.262in,0.703in)},
	scale only axis,
	xmin=0,
	xmax=6,
	xlabel style={font=\color{white!15!black}},
	xlabel={$R $},
	xtick={0,1,2,3,4,5,6},
	xticklabels={{$0$},{$1$},{$2$},{$3$},{$4$},{$5$},{$6$}},
	ymin=-0.1,
	ymax=4.1,
	ytick={0,1.19894763639919,4},
	yticklabels={{$0$},{$\mac\brc{\sigma}$},{$4$}},
	ylabel style={at={(-0.03,0.5)},font=\color{white!15!black}},
	ylabel={$\mai\brc{\sigma}$ in [nats]},
	axis background/.style={fill=white},
	legend style={legend cell align=left, align=left, draw=white!15!black}
	]
	
\addplot [color=black, dashed, forget plot]
  table[row sep=crcr]{%
0.2	0.138629436111989\\
0.3	0.207944154167984\\
0.4	0.277258872223978\\
0.5	0.346573590279973\\
0.6	0.415888308335967\\
0.7	0.485203026391962\\
0.8	0.554517744447956\\
0.9	0.623832462503951\\
1	0.693147180559945\\
1.1	0.76246189861594\\
1.2	0.831776616671934\\
1.3	0.901091334727929\\
1.4	0.970406052783923\\
1.5	1.03972077083992\\
1.6	1.10903548889591\\
1.7	1.17835020695191\\
1.8	1.2476649250079\\
1.9	1.3169796430639\\
2	1.38629436111989\\
2.1	1.45560907917589\\
2.2	1.52492379723188\\
2.3	1.59423851528787\\
2.4	1.66355323334387\\
2.5	1.73286795139986\\
2.6	1.80218266945586\\
2.7	1.87149738751185\\
2.8	1.94081210556785\\
2.9	2.01012682362384\\
3	2.07944154167984\\
3.1	2.14875625973583\\
3.2	2.21807097779182\\
3.3	2.28738569584782\\
3.4	2.35670041390381\\
3.5	2.42601513195981\\
3.6	2.4953298500158\\
3.7	2.5646445680718\\
3.8	2.63395928612779\\
3.9	2.70327400418379\\
4	2.77258872223978\\
4.1	2.84190344029578\\
4.2	2.91121815835177\\
4.3	2.98053287640776\\
4.4	3.04984759446376\\
4.5	3.11916231251975\\
4.6	3.18847703057575\\
4.7	3.25779174863174\\
4.8	3.32710646668774\\
4.9	3.39642118474373\\
5	3.46573590279973\\
5.1	3.53505062085572\\
5.2	3.60436533891172\\
5.3	3.67368005696771\\
5.4	3.7429947750237\\
5.5	3.8123094930797\\
5.6	3.88162421113569\\
5.7	3.95093892919169\\
5.8	4.02025364724768\\
5.9	4.08956836530368\\
6	4.15888308335967\\
};
%\addlegendentry{data1}

\addplot [color=black, line width = 1.0pt, forget plot]
  table[row sep=crcr]{%
0.2	0.138629436111525\\
0.3	0.207944150627807\\
0.4	0.277258526954961\\
0.5	0.346567808970496\\
0.6	0.415848632511528\\
0.7	0.485040148269841\\
0.8	0.554033191815186\\
0.9	0.622668289330328\\
1	0.690733893987858\\
1.1	0.757954881494403\\
1.2	0.823958271235938\\
1.3	0.888182213208283\\
1.4	0.949579567363882\\
1.5	1.00197393364232\\
1.6	1.02455936361397\\
1.7	1.04076609414429\\
1.8	1.05362585501892\\
1.9	1.06425765235261\\
2	1.07327015988218\\
2.1	1.08104555028116\\
2.2	1.08784383349438\\
2.3	1.09385142925492\\
2.4	1.09920709716665\\
2.5	1.10401709613885\\
2.6	1.10836464372275\\
2.7	1.11231611643251\\
2.8	1.11592527391862\\
2.9	1.11923622806457\\
3	1.12228558449393\\
3.1	1.12510402115474\\
3.2	1.12771747384938\\
3.3	1.13014804110027\\
3.4	1.13241468467488\\
3.5	1.1345337787847\\
3.6	1.13651954551827\\
3.7	1.13838440358917\\
3.8	1.14013925023134\\
3.9	1.14179369097034\\
4	1.14335622834892\\
4.1	1.14483441803538\\
4.2	1.14623499879549\\
4.3	1.147564001359\\
4.4	1.14882684012167\\
4.5	1.15002839079484\\
4.6	1.15117305647928\\
4.7	1.15226482414824\\
4.8	1.15330731314128\\
4.9	1.15430381696885\\
5	1.15525733948963\\
5.1	1.15617062633212\\
5.2	1.1570461922802\\
5.3	1.15788634521937\\
5.4	1.15869320714102\\
5.5	1.15946873262079\\
5.6	1.16021472512102\\
5.7	1.16093285141219\\
5.8	1.16162465436353\\
5.9	1.16229156431518\\
6	1.16293490921312\\
};
%\addlegendentry{data2}

\addplot [color=black,dotted, forget plot]
  table[row sep=crcr]{%
0.2	1.19894763639919\\
0.3	1.19894763639919\\
0.4	1.19894763639919\\
0.5	1.19894763639919\\
0.6	1.19894763639919\\
0.7	1.19894763639919\\
0.8	1.19894763639919\\
0.9	1.19894763639919\\
1	1.19894763639919\\
1.1	1.19894763639919\\
1.2	1.19894763639919\\
1.3	1.19894763639919\\
1.4	1.19894763639919\\
1.5	1.19894763639919\\
1.6	1.19894763639919\\
1.7	1.19894763639919\\
1.8	1.19894763639919\\
1.9	1.19894763639919\\
2	1.19894763639919\\
2.1	1.19894763639919\\
2.2	1.19894763639919\\
2.3	1.19894763639919\\
2.4	1.19894763639919\\
2.5	1.19894763639919\\
2.6	1.19894763639919\\
2.7	1.19894763639919\\
2.8	1.19894763639919\\
2.9	1.19894763639919\\
3	1.19894763639919\\
3.1	1.19894763639919\\
3.2	1.19894763639919\\
3.3	1.19894763639919\\
3.4	1.19894763639919\\
3.5	1.19894763639919\\
3.6	1.19894763639919\\
3.7	1.19894763639919\\
3.8	1.19894763639919\\
3.9	1.19894763639919\\
4	1.19894763639919\\
4.1	1.19894763639919\\
4.2	1.19894763639919\\
4.3	1.19894763639919\\
4.4	1.19894763639919\\
4.5	1.19894763639919\\
4.6	1.19894763639919\\
4.7	1.19894763639919\\
4.8	1.19894763639919\\
4.9	1.19894763639919\\
5	1.19894763639919\\
5.1	1.19894763639919\\
5.2	1.19894763639919\\
5.3	1.19894763639919\\
5.4	1.19894763639919\\
5.5	1.19894763639919\\
5.6	1.19894763639919\\
5.7	1.19894763639919\\
5.8	1.19894763639919\\
5.9	1.19894763639919\\
6	1.19894763639919\\
};
%\addlegendentry{data3}

\end{axis}
\end{tikzpicture}%
	\end{center}
	\caption{Information rate versus $R$ for the linear field, i.e., $\lambda=1$, and $\sigma^2 = 0.1$. The upper-bound $R\log 2$ is shown by a dashed line. Shannon's limit on the information rate, i.e., the capacity of the Gaussian channel, is further denoted by $\mac\brc{\sigma}$; see the explicit definition in \eqref{eq:C_sig}.}
	\label{fig:3}
\end{figure}

We now consider the extreme cases: We start with the perfect recovery case in which $m^\star = 1$. In this case, we can write
\begin{align}
\mai\brc{\sigma}  =\frac{C_{1}+1}{ 2\sigma^2} - R \log \cosh \brc{  \frac{1}{  R \sigma^2 } }.
\end{align}
Assuming that $\sigma^2$ is small enough, we can further use the approximation
\begin{align}
\log \cosh \brc{  x } \approx \abs{x} -  \log 2,
\end{align}
for large $x$ and conclude that
\begin{align}
	\mai\brc{\sigma} \approx \frac{C_{1}-1}{ 2\sigma^2} + R \log 2.
\end{align}

It is easy to check that $C_{1} \approx 1$, and that $C_{m^\star}$ starts to decrease as $m^\star$ decreases. We hence heuristically conclude that in the perfect recovery region\footnote{Here, $\sigma^2$ is small but fixed. However, by setting $R$ small enough, we can send $C_{m^\star}$ very close to one.}
\begin{align}
	\mai\brc{\sigma} \approx R \log 2. \label{eq:Rate_2_analysis}
\end{align}
This is intuitive, since in this case we have
\begin{align}
 \frac{1}{N}H\brc{\bs\vert \byy , \mav } = 0,
\end{align}
due to perfect recovery, i.e., $m^\star = 1$, and hence the end-to-end model reduces to a revertible noise-free transform for which the information rate is equal to the entropy rate of the labels. 

Now, let us consider another extreme case: The information rate starts to deviate from the $R \log 2$ line, as $m^\star$ starts to decrease. Assume $R\gg 1$ which means that $m^\star = \epsilon$ for some small $\epsilon > 0$; remember Fig.~\ref{fig:2}. In this case, we can write
\begin{align}
\mai\brc{\sigma} = \frac{1}{2} \dbc{\log \brc{1+\frac{1-\epsilon}{\sigma^2}} + \frac{C_{\epsilon}+\epsilon}{ \sigma^2+1-\epsilon}} - R \log \cosh \brc{  \frac{1}{  R \brc{\sigma^2 +1-\epsilon} } }.
\end{align}
Noting that $\log \cosh \brc{  x } \approx 0$  for small $x$, we conclude that
\begin{align}
	\mai\brc{\sigma} \approx \frac{1}{2} \dbc{\log \brc{1+\frac{1-\epsilon}{\sigma^2}} + \frac{C_{\epsilon}+\epsilon}{ \sigma^2+1-\epsilon}}. \label{eq:I_R_large}
\end{align}
By taking the limit $R\uparrow \infty$, we can neglect $\epsilon$ and $C_\epsilon$ and write
\begin{align}
 \lim_{R\uparrow \infty}	\mai\brc{\sigma} = \frac{1}{2} \log \brc{1+\frac{1}{\sigma^2}},
\end{align}
which is also shown in Fig.~\ref{fig:3}. This limit is in fact the information capacity of the Gaussian channel which we denote in the sequel by 
\begin{align}
\mac\brc{\sigma} = \frac{1}{2} \log \brc{1+\frac{1}{\sigma^2}}. \label{eq:C_sig}
\end{align}
This conclusion recovers similar results developed earlier in the context of spectral efficiency analysis for randomly spread code division multiple access techniques; see for example \cite{verdu1999spectral,muller1999power,muller2001multiuser}.

The above finding is further intuitive: By increasing the entropy of the labels, i.e., increasing $R$, the information rate can increase up to the capacity of the end-to-end channel, following Shannon's channel coding theorem \cite{shannon1948mathematical}. By exceeding this limit, a fraction of the labels, i.e., 
\begin{align}
	 \frac{1}{N}H\brc{\bs\vert \mav } - \mac\brc{\sigma} = R \log 2 -  \mac\brc{\sigma} ,
\end{align}
are recovered wrongly, leading to $m^\star < 1$. 

The result in \eqref{eq:I_R_large} further shows a key finding: Since for the linear field, we have\footnote{See Lemma~\ref{lem:1}.} $m^\star \neq 0$, we can conclude that we never reach the upper limit $\mac\brc{\sigma}$ by the linear field. This is in fact the other side of the inference guarantee feature given by a linear model. We summarize the findings as follows:
\begin{conclusion}[Inference via the linear field]
	Observations from the linear field always contain some information about the true labels, i.e., $m^\star \neq 0$. On the other side, perfect inference, i.e., $m^\star \neq 1$,  is achieved at a rate $R^\star$ which is smaller than Shannon's limit, i.e., $R^\star < \mac\brc{\sigma}$.
\end{conclusion} 

\subsection{Quadratic Generative Model}
For nonlinear fields, i.e., $\lambda > 1$, it is straightforward to show that $m=0$ is always a solution of the fixed-point equation \eqref{eq:fix_m}. This means that $\mal_m\brc{\sigma}$ has a local optimum at $m=0$. In general, $m=0$ can be a local minimizer or maximizer. As a result, one conjectures that under some conditions, the overlap can be exactly zero for $\lambda>1$.  This is the extreme case of \textit{independent inference} which is not achievable in a linear model. To validate this conjecture, we start with the special case of purely quadratic Gaussian field in this section, i.e., we consider $\lambda=2$. The following lemma confirms the feasibility of $m^\star = 0$ for the quadratic field:

\begin{lemma}[Independent inference via the quadratic field]
\label{lem:2}
	Consider the purely quadratic field, i.e., $\lambda=2$, and define the threshold load
	\begin{align}
		R_{\mathrm{Th}} =  \frac{2}{\sigma^2 + 1}. \label{eq:Thr_R}
	\end{align}
For this field, the extreme point $m=0$ is a local minimizer of the modified free energy function $\mal_m\brc{\sigma}$, if $R>R_{\rm Th}$ and is a local maximizer if $R< R_{\rm Th}$.
\end{lemma}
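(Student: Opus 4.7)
The plan is to expand the modified free energy $\mal_m(\sigma)$ around $m=0$ and read off the sign of the quadratic coefficient: a positive coefficient yields a local minimum, a negative one a local maximum. Although $\rho_m(\sigma)$ diverges as $m \downarrow 0$ for $\lambda=2$, the natural parameter for the expansion is $E(m)=1/\rho_m^2(\sigma)$, which stays bounded. Using $E(m)$ as defined in \eqref{E_m}, $\mal_m(\sigma)$ coincides, up to the additive constant $\frac{1}{2}\log 2\pi e\sigma^2$, with the function $H_\sigma(m)$ of Corollary~\ref{cor:1}, so I work with
\begin{align}
H_\sigma(m) = -R\,\Ex{\log\cosh\bigl(\sqrt{E(m)}\,Z+E(m)\bigr)}{Z} + \frac{R(1+m)E(m)}{2} + \frac{1}{2}\log g(m) + \text{const.}
\end{align}

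First I would verify that $m=0$ is indeed a stationary point by checking that the fixed-point equation of Corollary~\ref{cor:1} is trivially satisfied there, which follows from $E(0)=0$ since $\Phi'(m)=2m$ for $\lambda=2$; this also implies $\sqrt{E(m)}Z+E(m)\to 0$ in probability as $m\downarrow 0$, justifying a Taylor expansion of $\log\cosh$.

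Next I would Taylor-expand $\log\cosh(x)=x^2/2-x^4/12+O(x^6)$, substitute $x=\sqrt{E}Z+E$, and take the Gaussian expectation using $\Ex{Z^2}{}=1$ and $\Ex{Z^4}{}=3$. A short computation gives
\begin{align}
\Ex{\log\cosh\bigl(\sqrt{E}Z+E\bigr)}{Z} = \frac{E}{2}+\frac{E^{2}}{4}+O(E^{3}).
\end{align}
Inserting this expansion into $H_\sigma(m)$, the linear-in-$E$ terms cancel exactly with $RE(m)/2$, leaving
\begin{align}
H_\sigma(m)-H_\sigma(0) = \frac{Rm\,E(m)}{2} - \frac{R\,E(m)^{2}}{4} + \frac{1}{2}\log\frac{g(m)}{g(0)} + O(E(m)^{3}).
\end{align}
For $\lambda=2$, $E(m)=2m/[R(\sigma^{2}+1-m^{2})]=\gamma m+O(m^{3})$ with $\gamma\coloneqq 2/[R(\sigma^{2}+1)]$, and $\tfrac{1}{2}\log[g(m)/g(0)]=-m^{2}/[2(\sigma^{2}+1)]+O(m^{4})$.

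Assembling the quadratic coefficient term by term, the $E$-dependent contributions combine into $\tfrac{R\gamma m^{2}}{4}(2-\gamma)=\tfrac{(2-\gamma)m^{2}}{2(\sigma^{2}+1)}$, and adding the contribution from $\log g(m)$ yields
\begin{align}
H_\sigma(m)-H_\sigma(0) = \frac{(1-\gamma)\,m^{2}}{2(\sigma^{2}+1)} + O(m^{4}).
\end{align}
Hence $m=0$ is a local minimizer of $\mal_m(\sigma)$ iff $\gamma<1$, i.e.\ iff $R>R_{\rm Th}=2/(\sigma^{2}+1)$, and a local maximizer iff $\gamma>1$, i.e.\ iff $R<R_{\rm Th}$, as claimed. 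The main obstacle I anticipate is the careful cancellation between the linear-in-$E$ terms (which arise both from $RE(m)/2$ and from the leading order of the Gaussian integral) and the need to keep the $E^{2}$ and $\log g$ corrections with consistent accuracy; once this bookkeeping is done, the rest is routine.
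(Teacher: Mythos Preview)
Your proposal is correct and takes essentially the same approach as the paper: both amount to determining the sign of the second-order coefficient of $\mal_m(\sigma)$ at $m=0$. The paper phrases this as computing the second derivative directly while invoking the Nishimori identity $m=\Ex{\tanh}{}=\Ex{\tanh^2}{}$ at extreme points to simplify, whereas you obtain the same coefficient $(1-\gamma)/[2(\sigma^2+1)]$ via an explicit Taylor expansion of $\log\cosh$; the resulting threshold $R_{\mathrm{Th}}=2/(\sigma^2+1)$ coincides.
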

\begin{proof}
	By taking the second derivative of $\mal_m\brc{\sigma}$ with respect to $m$ and using the fact that at an extreme point $m$
	\begin{align}
		m  =    \Ex{ \tanh \brc{  \frac{1  + \rho_m\brc{\sigma}Z }{   \rho_m^2\brc{\sigma} } } }{Z}  = \Ex{ \tanh^2 \brc{  \frac{1  + \rho_m\brc{\sigma}Z }{   \rho_m^2\brc{\sigma} } } }{Z},
	\end{align}
one can show that the second derivative at $m=0$ is positive, when $R>R_{\mathrm{Th}} $ and is negative when $R<R_{\mathrm{Th}} $. This concludes the proof.
\end{proof}

Considering the purely quadratic field, Lemma~\ref{lem:2} indicates that by passing the threshold load, $m=0$ becomes a candidate for overlap. We now examine this result by a simple experiment shown in Fig.~\ref{fig:4}. In this figure, the modified free energy term, i.e., $\mal_m\brc{\sigma}$, has been plotted for the purely quadratic field against $m$ at two different loads $R_{ \mathrm{low} } = R_{\mathrm{Th}} - \epsilon$ and $R_{ \mathrm{up} } = R_{\mathrm{Th}} + \epsilon$ for $\epsilon = 0.01$ assuming $\sigma^2 = 0.1$. As the figure shows, the extreme point $m=0$ changes its nature from a maximizer to a minimizer, as we pass the threshold load immediately. It is worth noting that variations on the vertical axis of the figure are in the order of $10^{-8}$. For instance, the lower label on the vertical axis of both sub-figures corresponds to the modified free energy at $m=0$ which for $\sigma^2$ is given by $\mal_0\brc{\sqrt{0.1}} = \mac\brc{ \sqrt{0.1} } = 1.989$. The upper label is only $\delta = 9.9 \times 10^{-8}$ larger than the lower one.

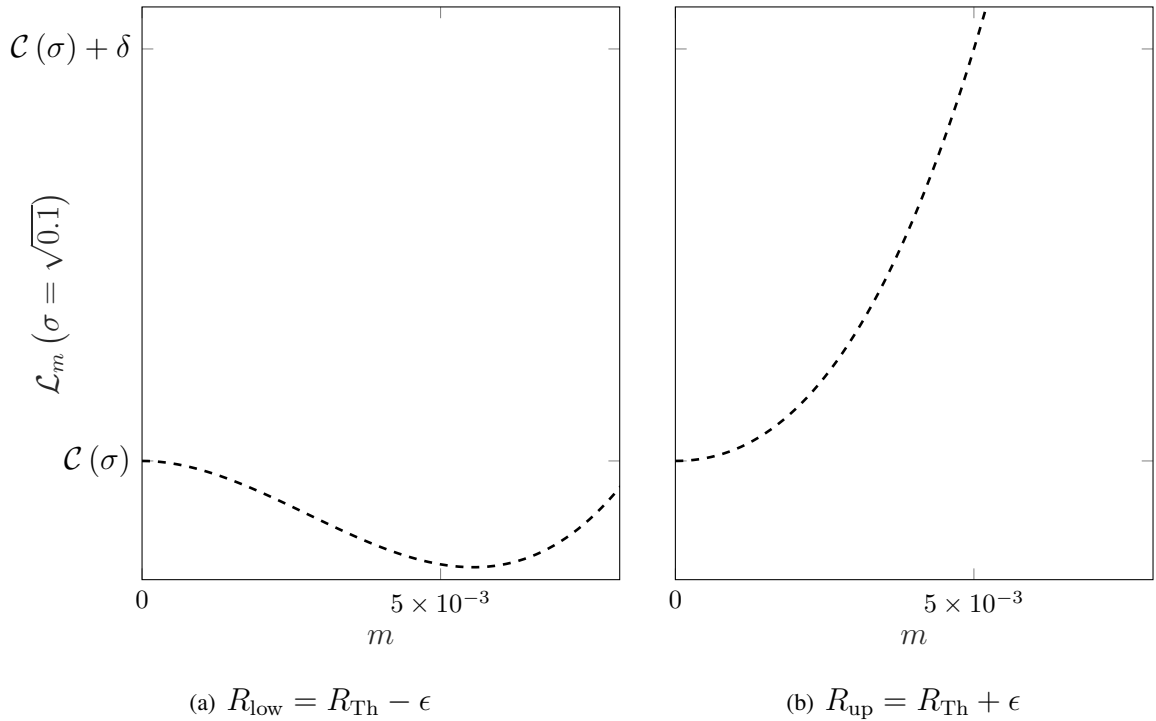
\begin{figure}
\begin{center}
	\subfigure[\normalsize$R_{ \mathrm{low} } = R_{\mathrm{Th}} - \epsilon$]{
			\begin{tikzpicture}
\begin{axis}[%
width=2.5in,
height=3in,
at={(2.6in,1.034in)},
scale only axis,
xmin=0,
xmax=.8,
xlabel style={font=\color{white!15!black}},
xlabel={$m$},
xtick={0,.5},
xticklabels={{\small$0$},{\small$5\times 10^{-3}$}},
ymin=-.3,
ymax=13.8-.3,
ytick={2.56048776209354,9.92693786323071+2.56048776209354},
yticklabels={{$\mac\brc{\sigma}$},{$\mac\brc{\sigma} + \delta$}},
ylabel style={font=\color{white!15!black}},
ylabel={$\mal_m\brc{\sigma = \sqrt{0.1} }$},
axis background/.style={fill=white},
legend style={legend cell align=left, align=left, draw=white!15!black}
]
\addplot [color=black, dashed, line width = 1.0pt, forget plot]
  table[row sep=crcr]{%
0	2.56048776209354\\
0.01	2.55800457298756\\
0.02	2.55067747831345\\
0.03	2.5386901050806\\
0.04	2.52222590148449\\
0.05	2.50146825611591\\
0.06	2.47660034894943\\
0.07	2.44780522584915\\
0.08	2.41526575386524\\
0.09	2.37916475534439\\
0.1	2.33968479931355\\
0.11	2.29700845479965\\
0.12	2.25131797790527\\
0.13	2.20279560983181\\
0.14	2.15162344276905\\
0.15	2.09798334538937\\
0.16	2.04205711185932\\
0.17	1.98402637243271\\
0.18	1.92407266795635\\
0.19	1.86237733066082\\
0.2	1.79912158846855\\
0.21	1.73448649048805\\
0.22	1.66865302622318\\
0.23	1.60180196166039\\
0.24	1.53411397337914\\
0.25	1.46576961874962\\
0.26	1.3969492316246\\
0.27	1.32783308625221\\
0.28	1.25860133767128\\
0.29	1.18943387269974\\
0.3	1.12051057815552\\
0.31	1.05201110243797\\
0.32	0.984115108847618\\
0.33	0.917001903057098\\
0.34	0.850850835442543\\
0.35	0.785841062664986\\
0.36	0.722151562571526\\
0.37	0.659961268305779\\
0.38	0.599448829889297\\
0.39	0.540792897343636\\
0.4	0.484171971678734\\
0.41	0.429764345288277\\
0.42	0.377748221158981\\
0.43	0.328301683068275\\
0.44	0.281602591276169\\
0.45	0.237828820943832\\
0.46	0.197158023715019\\
0.47	0.159767642617226\\
0.48	0.12583515048027\\
0.49	0.0955377370119095\\
0.5	0.0690525621175766\\
0.51	0.0465566068887711\\
0.52	0.028226763010025\\
0.53	0.0142396092414856\\
0.54	0.00477190315723419\\
0.55	0\\
0.56	0.000100240111351013\\
0.57	0.00524882972240448\\
0.58	0.0156217813491821\\
0.59	0.0313950479030609\\
0.6	0.0527443885803223\\
0.61	0.079845517873764\\
0.62	0.112873941659927\\
0.63	0.152005016803741\\
0.64	0.19741402566433\\
0.65	0.249276086688042\\
0.66	0.307766288518906\\
0.67	0.373059391975403\\
0.68	0.445330172777176\\
0.69	0.524753227829933\\
0.7	0.611503094434738\\
0.71	0.705754041671753\\
0.72	0.807680323719978\\
0.73	0.917456045746803\\
0.74	1.03525513410568\\
0.75	1.16125141084194\\
0.76	1.2956186234951\\
0.77	1.43853032588959\\
0.78	1.59015992283821\\
0.79	1.75068074464798\\
0.8	1.92026600241661\\
0.81	2.09908871352673\\
0.82	2.28732182085514\\
0.83	2.48513816297054\\
0.84	2.69271031022072\\
0.85	2.9102108925581\\
0.86	3.13781233131886\\
0.87	3.37568686902523\\
0.88	3.6240066587925\\
0.89	3.88294379413128\\
0.9	4.15267010033131\\
0.91	4.43335741758347\\
0.92	4.72517743706703\\
0.93	5.02830155193806\\
0.94	5.3429013043642\\
0.95	5.66914790868759\\
0.96	6.00721248984337\\
0.97	6.35726614296436\\
0.98	6.71947966516018\\
0.99	7.09402392804623\\
1	7.48106950521469\\
};

\end{axis}
\end{tikzpicture}%
		}
	\subfigure[\normalsize$R_{ \mathrm{up} } = R_{\mathrm{Th}} + \epsilon$]{
		\begin{tikzpicture}
	
	\begin{axis}[%
		width=2.5in,
		height=3in,
		at={(2.6in,1.034in)},
		scale only axis,
		xmin=0,
		xmax=.8,
		xlabel style={font=\color{white!15!black}},
		xlabel={$m$},
		xtick={0,.5},
		xticklabels={{\small$0$},{\small$5\times 10^{-3}$}},
		ymin=-2.86048776209354,
		ymax=13.8-2.86048776209354,
		ytick={0,9.92693786323071},
		yticklabels={{ },{ }},
		ylabel style={font=\color{white!15!black}},
		axis background/.style={fill=white},
		legend style={legend cell align=left, align=left, draw=white!15!black}
		]
		\addplot [color=black, dashed, line width = 1.0pt, forget plot]
		  table[row sep=crcr]{%
		0	0\\
		0.01	0.00251626968383789\\
		0.02	0.0101849883794785\\
		0.03	0.0231857150793076\\
		0.04	0.0416979491710663\\
		0.05	0.065901055932045\\
		0.06	0.0959743112325668\\
		0.07	0.132096707820892\\
		0.08	0.17444720864296\\
		0.09	0.223204627633095\\
		0.1	0.278547629714012\\
		0.11	0.340654745697975\\
		0.12	0.409704402089119\\
		0.13	0.485874831676483\\
		0.14	0.569344148039818\\
		0.15	0.660290375351906\\
		0.16	0.758891344070435\\
		0.17	0.865324735641479\\
		0.18	0.979768246412277\\
		0.19	1.10239923000336\\
		0.2	1.23339505493641\\
		0.21	1.37293283641338\\
		0.22	1.52118968963623\\
		0.23	1.678342461586\\
		0.24	1.8445680141449\\
		0.25	2.02004292607307\\
		0.26	2.20494371652603\\
		0.27	2.39944680035114\\
		0.28	2.60372841358185\\
		0.29	2.81796464323997\\
		0.3	3.04233147203922\\
		0.31	3.27700473368168\\
		0.32	3.52216023206711\\
		0.33	3.77797344326973\\
		0.34	4.04461984336376\\
		0.35	4.3222748041153\\
		0.36	4.611113473773\\
		0.37	4.91131091117859\\
		0.38	5.22304204106331\\
		0.39	5.54648162424564\\
		0.4	5.88180439174175\\
		0.41	6.22918483614922\\
		0.42	6.58879737555981\\
		0.43	6.96081624925137\\
		0.44	7.34541562199593\\
		0.45	7.74276950955391\\
		0.46	8.15305179357529\\
		0.47	8.57643620669842\\
		0.48	9.01309637725353\\
		0.49	9.46320579946041\\
		0.5	9.92693786323071\\
		0.51	10.4044657796621\\
		0.52	10.8959626555443\\
		0.53	11.401601433754\\
		0.54	11.9215550273657\\
		0.55	12.45599617064\\
		0.56	13.0050973594189\\
		0.57	13.5690311193466\\
		0.58	14.1479697972536\\
		0.59	14.7420855760574\\
		0.6	15.3515505641699\\
		0.61	15.9765366911888\\
		0.62	16.6172157973051\\
		0.63	17.2737596184015\\
		0.64	17.9463396370411\\
		0.65	18.6351273953915\\
		0.66	19.3402941823006\\
		0.67	20.0620111823082\\
		0.68	20.8004494905472\\
		0.69	21.5557799786329\\
		0.7	22.3281735926867\\
		0.71	23.1178008913994\\
		0.72	23.9248325526714\\
		0.73	24.7494389563799\\
		0.74	25.5917904376984\\
		0.75	26.4520571827888\\
		0.76	27.3304092735052\\
		0.77	28.2270166575909\\
		0.78	29.1420491337776\\
		0.79	30.0756763964891\\
		0.8	31.0280680209398\\
		0.81	31.9993934631348\\
		0.82	32.9898220747709\\
		0.83	33.9995230138302\\
		0.84	35.0286653637886\\
		0.85	36.077418088913\\
		0.86	37.1459500491619\\
		0.87	38.234429910779\\
		0.88	39.3430262804031\\
		0.89	40.4719076007605\\
		0.9	41.6212422251701\\
		0.91	42.791198387742\\
		0.92	43.9819442182779\\
		0.93	45.1936475932598\\
		0.94	46.4264764636755\\
		0.95	47.6805985271931\\
		0.96	48.9561813771725\\
		0.97	50.2533925324678\\
		0.98	51.5723993480206\\
		0.99	52.913369089365\\
		1	54.276468873024\\
	};
	\end{axis}
\end{tikzpicture}%
	}
\end{center}
	\caption{Modified free energy $\mal_m\brc{\sigma}$ versus $m$ for $\lambda = 2$ considering a small deviation $\epsilon=0.01$ below (figure (a)) and above (figure (b)) the threshold load $R_{\rm Th}$. At $m=0$, $\mal_m\brc{\sigma} = \mac\brc{\sigma = \sqrt{0.1}} = 1.989$. Variations on the vertical axis of the figure are in the order of $10^{-8}$; for instance, the difference between the upper and lower labels on the vertical axis is $\delta = 9.9 \times 10^{-8}$.}
	\label{fig:4}
\end{figure}
%
%\begin{figure}
%	\begin{center}
%		\input{Figs/fig_3_1.tex}
%	\end{center}
%	\caption{Information rate versus $R$ for the linear field and $\sigma^2 = 0.1$. The upper-bound $R\log 2$ and Shannon's limit $\mac\brc{\sigma}$ are shown with dashed and dotted lines, respectively.}
%	\label{fig:4}
%\end{figure}
%
%\begin{figure}
%	\begin{center}
%		\input{Figs/fig_3_2.tex}
%	\end{center}
%	\caption{Information rate versus $R$ for the linear field and $\sigma^2 = 0.1$. The upper-bound $R\log 2$ and Shannon's limit $\mac\brc{\sigma}$ are shown with dashed and dotted lines, respectively.}
%	\label{fig:41}
%\end{figure}

We now plot the modified free energy at multiple loads and take a look on the behavior of the global minimizer. Fig.~\ref{fig:5} shows the global minimizer for $0.8R_{\mathrm{Th}}$ and $1.2 R_{\mathrm{Th}}$. Interestingly, the global minimizer is jumping from $m^\star =1$ to a $m^\star = 0$ quickly. This behavior is depicted more clearly in Fig.~\ref{fig:6}, where we plot the overlap $m^\star$ against the load for the same noise variance $\sigma^2 = 0.1$. The figure shows three sets of results; namely,
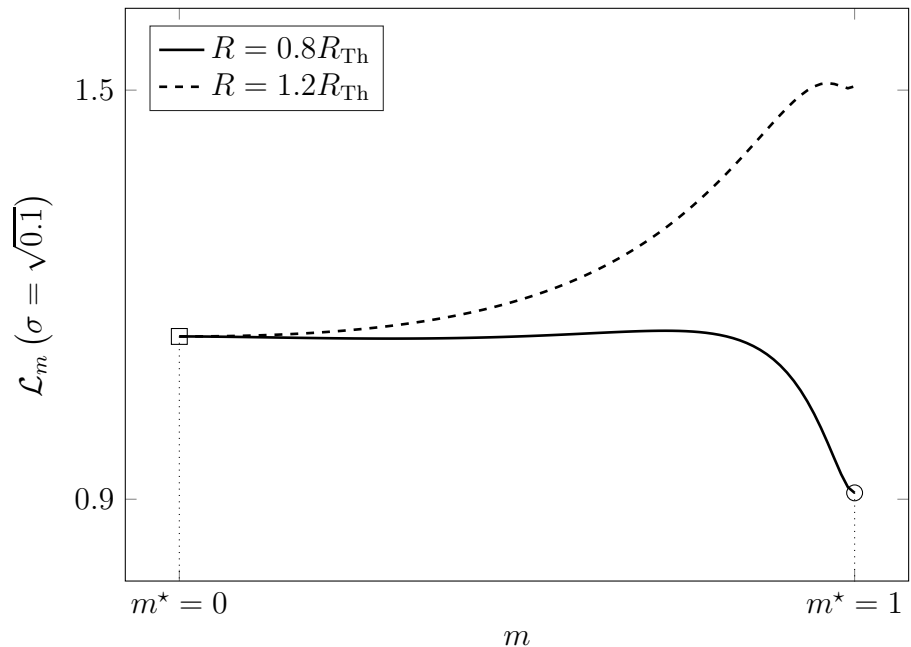
\begin{figure}
	\begin{center}
		% This file was created by matlab2tikz.
%
%The latest updates can be retrieved from
%  http://www.mathworks.com/matlabcentral/fileexchange/22022-matlab2tikz-matlab2tikz
%where you can also make suggestions and rate matlab2tikz.
%
\definecolor{mycolor1}{rgb}{0.00000,0.44700,0.74100}%
\definecolor{mycolor2}{rgb}{0.85000,0.32500,0.09800}%
\begin{tikzpicture}

\begin{axis}[%
	width=4.1in,
	height=3in,
	at={(1.011in,0.642in)},
	scale only axis,
	xmin=-0.08,
	xmax=1.08,
	xlabel={$m$},
	xtick={0,1},
	xticklabels={{$m^\star=0$},{$m^\star=1$}},
	ymin=.9,
	ymax=1.6,
	ytick={1,1.5},
	yticklabels={{$0.9$},{$1.5$}},
	ylabel={$\mal_m\brc{\sigma = \sqrt{0.1}}$},
	axis background/.style={fill=white},
	legend style={at={(0.33,.97)},legend cell align=left, align=left, draw=white!15!black}
	]
\addplot [color=black, line width = 1.0pt]
  table[row sep=crcr]{%
0	1.19894763639919\\
0.01	1.19893673502867\\
0.02	1.19890579491368\\
0.03	1.19885728511339\\
0.04	1.19879346171498\\
0.05	1.19871639284863\\
0.06	1.1986279800701\\
0.07	1.19852997679829\\
0.08	1.19842400432945\\
0.09	1.19831156583185\\
0.1	1.1981940586395\\
0.11	1.1980727850998\\
0.12	1.19794896218151\\
0.13	1.19782373001211\\
0.14	1.19769815948418\\
0.15	1.19757325904696\\
0.16	1.1974499807802\\
0.17	1.19732922583226\\
0.18	1.19721184929128\\
0.19	1.19709866454797\\
0.2	1.19699044719925\\
0.21	1.19688793853442\\
0.22	1.1967918486392\\
0.23	1.19670285914658\\
0.24	1.19662162565906\\
0.25	1.19654877986167\\
0.26	1.1964849313414\\
0.27	1.19643066912457\\
0.28	1.19638656294029\\
0.29	1.1963531642144\\
0.3	1.19633100679516\\
0.31	1.1963206074085\\
0.32	1.19632246583721\\
0.33	1.19633706481504\\
0.34	1.19636486962285\\
0.35	1.19640632737043\\
0.36	1.19646186594296\\
0.37	1.196531892587\\
0.38	1.19661679210559\\
0.39	1.19671692462689\\
0.4	1.19683262290432\\
0.41	1.19696418909952\\
0.42	1.19711189099183\\
0.43	1.19727595754887\\
0.44	1.19745657378341\\
0.45	1.19765387481002\\
0.46	1.19786793900242\\
0.47	1.19809878013741\\
0.48	1.19834633839472\\
0.49	1.19861047006242\\
0.5	1.19889093577527\\
0.51	1.19918738708765\\
0.52	1.19949935115275\\
0.53	1.19982621324554\\
0.54	1.20016719682677\\
0.55	1.20052134079917\\
0.56	1.20088747355316\\
0.57	1.20126418333682\\
0.58	1.20164978441193\\
0.59	1.2020422783734\\
0.6	1.20243930990979\\
0.61	1.20283811616772\\
0.62	1.2032354687476\\
0.63	1.20362760720059\\
0.64	1.20401016271256\\
0.65	1.20437807044545\\
0.66	1.20472546875547\\
0.67	1.20504558321456\\
0.68	1.20533059302109\\
0.69	1.20557147699037\\
0.7	1.20575783585929\\
0.71	1.20587768711523\\
0.72	1.20591722796338\\
0.73	1.20586056137567\\
0.74	1.20568937942469\\
0.75	1.20538259731012\\
0.76	1.20491593066667\\
0.77	1.20426140796183\\
0.78	1.20338680915776\\
0.79	1.20225502150868\\
0.8	1.20082330370398\\
0.81	1.19904245104467\\
0.82	1.19685585780243\\
0.83	1.19419847966412\\
0.84	1.19099571141635\\
0.85	1.18716221629454\\
0.86	1.18260077950603\\
0.87	1.17720131887779\\
0.88	1.17084028623626\\
0.89	1.16338086069482\\
0.9	1.15467461508974\\
0.91	1.14456580768341\\
0.92	1.13290026465879\\
0.93	1.1195421888811\\
0.94	1.10440477934235\\
0.95	1.08750507058949\\
0.96	1.06906237148565\\
0.97	1.04967817665729\\
0.98	1.03067774371495\\
0.99	1.01480263892113\\
1	1.00776916666469\\
};
\addlegendentry{$R = 0.8R_{\mathrm{Th}}$}

\addplot [color=black, draw=none, mark=o, mark options={mark size =  3pt, solid, black},forget plot]
  table[row sep=crcr]{%
1	1.00776916666469\\
};

\addplot [color=black, dashed, line width = 1.0pt]
  table[row sep=crcr]{%
0	1.19894763639919\\
0.01	1.19895541980712\\
0.02	1.19897957976809\\
0.03	1.19902128850421\\
0.04	1.19908166923278\\
0.05	1.19916180471973\\
0.06	1.19926274497162\\
0.07	1.19938551419858\\
0.08	1.19953111715729\\
0.09	1.19970054496553\\
0.1	1.19989478046558\\
0.11	1.20011480320251\\
0.12	1.20036159407408\\
0.13	1.20063613970168\\
0.14	1.20093943656538\\
0.15	1.20127249494095\\
0.16	1.20163634267284\\
0.17	1.2020320288129\\
0.18	1.20246062715224\\
0.19	1.2029232396706\\
0.2	1.20342099992573\\
0.21	1.20395507640341\\
0.22	1.20452667584707\\
0.23	1.20513704658496\\
0.24	1.20578748187137\\
0.25	1.20647932325775\\
0.26	1.20721396400872\\
0.27	1.20799285257717\\
0.28	1.20881749615225\\
0.29	1.20968946429355\\
0.3	1.21061039266407\\
0.31	1.21158198687476\\
0.32	1.21260602645231\\
0.33	1.21368436894225\\
0.34	1.21481895415836\\
0.35	1.21601180858964\\
0.36	1.21726504997496\\
0.37	1.21858089205541\\
0.38	1.21996164951342\\
0.39	1.22140974310686\\
0.4	1.22292770500526\\
0.41	1.22451818433371\\
0.42	1.22618395292858\\
0.43	1.22792791130651\\
0.44	1.22975309484582\\
0.45	1.23166268017571\\
0.46	1.23365999176452\\
0.47	1.23574850869301\\
0.48	1.23793187159212\\
0.49	1.24021388971652\\
0.5	1.24259854811555\\
0.51	1.24509001485084\\
0.52	1.24769264819518\\
0.53	1.25041100372862\\
0.54	1.25324984122584\\
0.55	1.25621413120082\\
0.56	1.25930906094162\\
0.57	1.26254003982657\\
0.58	1.26591270366252\\
0.59	1.26943291772389\\
0.6	1.27310677809454\\
0.61	1.276940610821\\
0.62	1.28094096826976\\
0.63	1.28511462193904\\
0.64	1.28946855079967\\
0.65	1.29400992402226\\
0.66	1.29874607667913\\
0.67	1.3036844766757\\
0.68	1.30883268075178\\
0.69	1.31419827687787\\
0.7	1.31978880972895\\
0.71	1.32561168511692\\
0.72	1.33167404825959\\
0.73	1.33798262951017\\
0.74	1.34454354959806\\
0.75	1.35136207446062\\
0.76	1.35844230727255\\
0.77	1.36578680218158\\
0.78	1.37339608038372\\
0.79	1.38126802434272\\
0.8	1.38939711997621\\
0.81	1.39777350929538\\
0.82	1.40638180712955\\
0.83	1.41519962515411\\
0.84	1.42419573473444\\
0.85	1.43332778802368\\
0.86	1.4425395065731\\
0.87	1.45175724337053\\
0.88	1.46088583791303\\
0.89	1.46980373514286\\
0.9	1.47835746868259\\
0.91	1.48635589797225\\
0.92	1.49356520039816\\
0.93	1.49970688349348\\
0.94	1.50446368168243\\
0.95	1.5075035849962\\
0.96	1.50854358858206\\
0.97	1.50749933163658\\
0.98	1.50482239878857\\
0.99	1.50226143219471\\
1	1.50464424517548\\
};
\addlegendentry{$R = 1.2R_{\mathrm{Th}}$}

\addplot [color=black, draw=none, mark=square, mark options={mark size = 3pt, solid, black}]
  table[row sep=crcr]{%
0	1.19894763639919\\
};

\addplot [color=black, dotted, forget plot]
table[row sep=crcr]{%
	0	1.19894763639919\\
	0	.1\\
};

\addplot [color=black, dotted, forget plot]
table[row sep=crcr]{%
	1	1.00776916666469\\
	1	.1\\
};

\end{axis}
\end{tikzpicture}%
	\end{center}
	\caption{Modified free energy $\mal_m\brc{\sigma}$ versus $m$ for $\lambda=2$ at load $0.8R_{\mathrm{Th}}$ (solid line) and $1.2 R_{\mathrm{Th}}$ (dashed line). The overlap jumps from $m^\star = 1$ to $m^\star = 0$ as the threshold load $R_{\mathrm{Th}}$ is passed.}
	\label{fig:5}
\end{figure}

\begin{itemize}
	\item The set of solutions to the fixed-point equations which are shown by the dotted-dashed line. This set includes both minimizers and maximizers of the modified free energy. As the figure shows, below a critical load $R_{\rm Alg}$, the modified free energy has only two extreme points, a maxima at $m=0$ and a minima at $m=1$. By passing $R_{\rm Alg}$, a local minimizer and a local maximizer start to appear at $m_1$ and $m_2$ respectively, where $0<m_1,m_2<1$. As the load grows, $m_1$ moves towards zero and meets $m=0$ as the rate passes the threshold load $R_{\rm Th}$, at which the curvature of the curve at $m=0$ changes and $m=0$ becomes a local minimizer.
	\begin{figure}
		\begin{center}
			\input{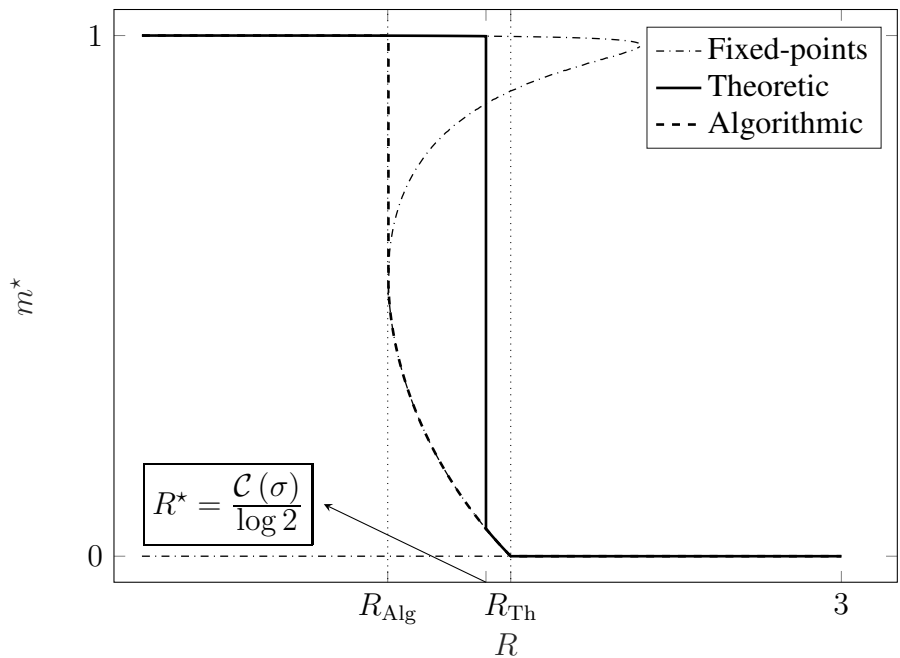}
		\end{center}
		\caption{Overlap against load for the purely quadratic field, i.e., $\lambda=2$. The dotted-dashed line shows the fixed-point solutions, the solid line represents the \ac{rs} solution given by the global minimizer of the modified free energy, and the dashed line denotes the overlap which can be achieved via the approximate message passing algorithm.}
		\label{fig:6}
	\end{figure}
	\item The second set of results show the theoretically achievable overlap as a function of the load $R$ which corresponds to the global minimizer of the modified free energy and is shown by the solid line. Here, the figure shows a \textit{first-order phase transition} at the load
	\begin{align}
		R^\star = \frac{\mac\brc{\sigma}}{\log 2} \label{eq:R_star}
	\end{align}
which is in fact the channel capacity given in bits. A \textit{second-order phase transition} is further observed at $R_{\rm Th}$. To confirm the validity of this experimental conclusion, we further plot the maximum convergence time to a local extreme point against the load in Fig.~\ref{fig:7}. In this figure, we iteratively solve the fixed-point equation \eqref{eq:fix_m}. The algorithm starts from an initial point $m_0\in\brc{0,1}$ and iterates till it converges to a solution. This solution is a local extreme point. For a given load, we sweep $m_0$ on a fine grid with 100 points on the interval $\brc{0,1}$ and find the maximum number of iterations required to converge, where we maximize over the grid. As the convergence criteria, we stop iterating at the iteration in which the value of $m$ is deviated from its value in the previous iteration less than $10^{-10}$, i.e., $\abs{m_t - m_{t-1}}<10^{-10}$ with $m_t$ denoting the value of $m$ in iteration $t$. 
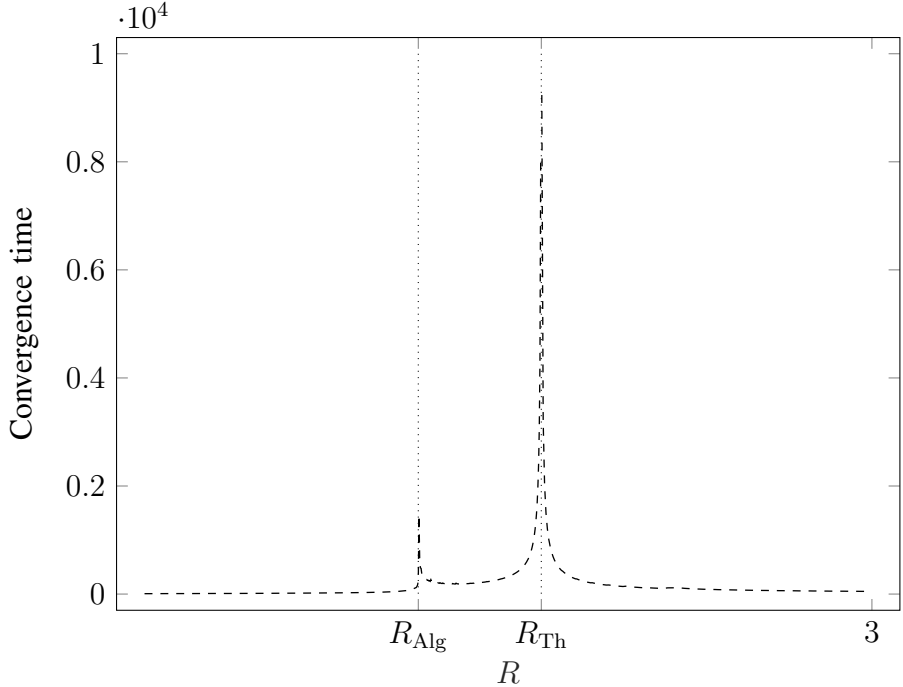
\begin{figure}
	\begin{center}
		% This file was created by matlab2tikz.
%
%The latest updates can be retrieved from
%  http://www.mathworks.com/matlabcentral/fileexchange/22022-matlab2tikz-matlab2tikz
%where you can also make suggestions and rate matlab2tikz.
%
\definecolor{mycolor1}{rgb}{0.00000,0.44700,0.74100}%
\begin{tikzpicture}

\begin{axis}[%
	width=4.1in,
height=3in,
at={(1.262in,0.703in)},
scale only axis,
xmin=0.3,
xmax=3.1,
xlabel style={font=\color{white!15!black}},
xlabel={$R$},
xtick={0,1.3785,1.81818181818182,3},
xticklabels={{$0$},{$R_{\rm Alg}$},{$R_{\mathrm{Th}}$},{$3$}},
ymin=-300,
ymax=10300,
ylabel={Convergence time},
ytick={0,2000,4000,6000,8000,10000},
yticklabels={{$0$},{$0.2$},{$0.4$},{$0.6$},{$0.8$},{$1$}},
axis background/.style={fill=white},
legend style={legend cell align=left, align=left, draw=white!15!black}
]
\addplot [color=black, dashed, line width = .5pt, forget plot]
  table[row sep=crcr]{%
1.82	9236\\
1.825	3185\\
1.83	2016\\
1.835	1499\\
1.84	1203\\
1.845	1010\\
1.85	880\\
1.855	777\\
1.86	696\\
1.865	632\\
1.87	580\\
1.875	536\\
1.88	499\\
1.885	467\\
1.89	439\\
1.895	422\\
1.9	398\\
1.905	378\\
1.91	360\\
1.915	344\\
1.92	329\\
1.925	316\\
1.93	304\\
1.935	293\\
1.94	282\\
1.945	280\\
1.95	269\\
1.955	260\\
1.96	252\\
1.965	244\\
1.97	237\\
1.975	230\\
1.98	224\\
1.985	218\\
1.99	213\\
1.995	208\\
2	210\\
2	210\\
2.01	198\\
2.02	189\\
2.03	181\\
2.04	174\\
2.05	167\\
2.06	168\\
2.07	160\\
2.08	154\\
2.09	149\\
2.1	144\\
2.11	140\\
2.12	145\\
2.13	137\\
2.14	132\\
2.15	128\\
2.16	125\\
2.17	121\\
2.18	118\\
2.19	123\\
2.2	118\\
2.21	114\\
2.22	111\\
2.23	109\\
2.24	106\\
2.25	114\\
2.26	108\\
2.27	115\\
%2.28	258\\
%2.29	148\\
%2.3	126\\
2.31	117\\
2.32	111\\
2.33	106\\
2.34	103\\
2.35	100\\
2.36	97\\
2.37	95\\
2.38	93\\
2.39	91\\
2.4	89\\
2.41	88\\
2.42	86\\
2.43	85\\
2.44	84\\
2.45	82\\
2.46	81\\
2.47	80\\
2.48	79\\
2.49	78\\
2.5	77\\
2.51	76\\
2.52	75\\
2.53	74\\
2.54	73\\
2.55	72\\
2.56	71\\
2.57	70\\
2.58	70\\
2.59	69\\
2.6	68\\
2.61	67\\
2.62	67\\
2.63	66\\
2.64	65\\
2.65	65\\
2.66	64\\
2.67	63\\
2.68	63\\
2.69	62\\
2.7	62\\
2.71	61\\
2.72	61\\
2.73	60\\
2.74	59\\
2.75	59\\
2.76	58\\
2.77	58\\
2.78	58\\
2.79	57\\
2.8	57\\
2.81	56\\
2.82	56\\
2.83	55\\
2.84	55\\
2.85	54\\
2.86	54\\
2.87	54\\
2.88	53\\
2.89	53\\
2.9	53\\
2.91	52\\
2.92	52\\
2.93	51\\
2.94	51\\
2.95	51\\
2.96	50\\
2.97	50\\
2.98	50\\
2.99	49\\
3	49\\
};

\addplot [color=black, dashed, line width = 0.5pt, forget plot]
  table[row sep=crcr]{%
0.4	7\\
0.45	8\\
0.5	9\\
0.55	9\\
0.6	10\\
0.65	11\\
0.7	11\\
0.75	12\\
0.8	13\\
0.85	14\\
0.9	16\\
0.95	17\\
1	19\\
1.05	21\\
1.1	23\\
1.15	26\\
1.2	30\\
1.25	36\\
1.3	46\\
1.35	72\\
1.371	121\\
1.3735	139\\
1.376	169\\
1.3785	242\\
1.381	1464\\
1.3835	647\\
1.386	487\\
1.3885	475\\
1.391	376\\
1.3935	349\\
1.396	332\\
1.3985	326\\
1.401	294\\
1.4035	290\\
1.406	271\\
1.4085	270\\
1.411	255\\
1.4135	257\\
1.416	244\\
1.4185	249\\
1.421	236\\
1.4235	266\\
1.426	231\\
1.4285	223\\
1.431	234\\
1.4335	220\\
1.436	214\\
1.4385	226\\
1.441	213\\
1.4435	208\\
1.446	204\\
1.4485	209\\
1.451	204\\
1.4535	200\\
1.456	210\\
1.4585	202\\
1.461	198\\
1.4635	195\\
1.466	204\\
1.4685	197\\
1.471	194\\
1.4735	191\\
1.476	205\\
1.4785	195\\
1.481	192\\
1.4835	190\\
1.486	188\\
1.4885	197\\
1.491	192\\
1.4935	189\\
1.496	187\\
1.4985	186\\
1.501	195\\
1.5035	191\\
1.506	188\\
1.5085	187\\
1.511	186\\
1.5135	201\\
1.516	192\\
1.5185	190\\
1.521	188\\
1.5235	187\\
1.526	186\\
1.5285	199\\
1.531	193\\
1.5335	191\\
1.536	190\\
1.5385	189\\
1.541	189\\
1.5435	190\\
1.546	196\\
1.5485	194\\
1.551	193\\
1.5535	193\\
1.556	193\\
1.5585	194\\
1.561	195\\
1.5635	200\\
1.566	198\\
1.5685	198\\
1.571	199\\
1.5735	200\\
1.576	201\\
1.5785	202\\
1.581	209\\
1.5835	205\\
1.586	205\\
1.5885	206\\
1.591	207\\
1.5935	209\\
1.596	210\\
1.5985	211\\
1.601	219\\
1.6035	215\\
1.606	216\\
1.6085	217\\
1.611	219\\
1.6135	220\\
1.616	222\\
1.6185	224\\
1.621	226\\
1.6235	231\\
1.626	229\\
1.6285	231\\
1.631	233\\
1.6335	235\\
1.636	237\\
1.6385	240\\
1.641	242\\
1.6435	244\\
1.646	247\\
1.6485	249\\
1.651	251\\
1.6535	254\\
1.656	257\\
1.6585	260\\
1.661	262\\
1.6635	265\\
1.666	268\\
1.6685	272\\
1.671	275\\
1.6735	280\\
1.676	282\\
1.6785	285\\
1.681	289\\
1.6835	293\\
1.686	297\\
1.6885	301\\
1.691	305\\
1.6935	310\\
1.696	315\\
1.6985	319\\
1.701	324\\
1.7035	330\\
1.706	335\\
1.7085	341\\
1.711	347\\
1.7135	353\\
1.716	359\\
1.7185	366\\
1.721	373\\
1.7235	381\\
1.726	388\\
1.7285	397\\
1.731	405\\
1.7335	414\\
1.736	424\\
1.7385	434\\
1.741	445\\
1.7435	456\\
1.746	468\\
1.7485	480\\
1.751	494\\
1.7535	509\\
1.756	524\\
1.7585	541\\
1.761	558\\
1.7635	578\\
1.766	598\\
1.7685	623\\
1.771	646\\
1.7735	674\\
1.776	706\\
1.7785	742\\
1.781	782\\
1.7835	828\\
1.786	880\\
1.7885	941\\
1.791	1012\\
1.7935	1097\\
1.796	1199\\
1.7985	1324\\
1.801	1483\\
1.8035	1690\\
1.806	1984\\
1.8085	2396\\
1.811	3064\\
1.8135	4347\\
1.816	8013\\
};

\addplot [color=black, dotted, forget plot]
table[row sep=crcr]{%
	1.81818181818182	10000\\
	1.81818181818182	0\\
};

\addplot [color=black, dotted, forget plot]
table[row sep=crcr]{%
	1.3785	10000\\
	1.3785	0\\
};

\end{axis}
\end{tikzpicture}%
	\end{center}
	\caption{Convergence time against load $R$ for the purely quadratic field, i.e., $\lambda=2$. The result shows a critical slow-down at $R_{\rm Th}$. This indicates a second-order phase transition at $R_{\rm Th}$ and a first-order phase transition at $R^\star$.}
	\label{fig:7}
\end{figure}
The result shows a \textit{critical slowing down} at load $R_{\rm Th}$. As the figure shows, the runtime shows a small spike at $R_{\rm Alg}$ where the local extreme points in the middle of the interval $\dbc{0,1}$ start to appear. The other spike occurs at $R_{\mathrm{Th}}$, where the curvature at $m=0$ changes. In the considered fine grid, the algorithm does not converge at $R_{\mathrm{Th}}$. This observation reports the well-known phenomenon of \textit{critical slowing down} which is a sign for a first-order phase transition at $R^\star$ and a second-order phase transition at $R_{\rm Th}$. In fact, the latter is due to the zero curvature of the curve which leads to an infinite number of iterations for convergence and describes a second-order phase transition. However, the former phase transition corresponds to this phenomenon that an already-existing local minimum passes the local minimum at $m=1$ and becomes the global minimum. More details on the critical slowing down phenomenon and its connection to the phase transition can be found in \cite{scheffer2009early,lade2012early}.
\item The last set of the results are represented by the dashed line in the figure and predict the overlap which is achieved by the an \ac{amp} algorithm \cite{bandeira2018notes,zdeborova2016statistical,aubin2020exact}. \ac{amp} algorithms are a family of first-order algorithms designed to efficiently approximate the posterior marginals, and have been shown to be optimal among all first order methods for certain classes of problems \cite{Celentano2020}. Due to this property, \ac{amp} algorithms provide a useful benchmark to probe the computational hardness of a problem. The asymptotic performance od \ac{amp} algorithms can be exactly traced by a set of scalar \emph{state evolution equations} \cite{javanmard2013state}, which are intimately connected with the expected cross entropy: the state evolution equations coincide with the fixed-point equations associated with the decoupled problem. Therefore, the overlap achieved by \ac{amp} from a uninformed initialization is given by the minima of the cross entropy which is closest to $m=0$. As the results show, in our problem, there is a load interval in which the closest minima to $m=0$ is a local minimum, showing that there exists a gap between the theoretically achievable overlap\footnote{That is given by the global minimum of the modified free energy.}, and the one achieved by \ac{amp} from a random initialization. This gap is often referred to as the \textit{algorithmic gap} and indicates the set of problem settings in which perfect inference is information-theoretically feasible however computationally hard to achieve.
\end{itemize}

Unlike the linear model, the \ac{rs} solution in the purely quadratic model shows both first-order and second-order \textit{phase transitions}. For this particular setting, a first-order phase transition occurs at Shannon's limit and a second-order phase transition happens at higher loads. These phase transitions divide the behavior of system into three states:
\begin{enumerate}
	\item The state which occurs at $R<R^\star$. In this state \textit{all} labels are recovered (perfect recovery).
	\item The second state is the state of \textit{some} labels being inferred which occurs at $R^\star \leq R \leq R_{\rm Th}$ (weak or partial recovery).
	\item Finally, the state which happens as the load passes the threshold $R_{\mathrm{Th}}$.  We call this new state, the state of \textit{nothing} being recovered in which the Bayesian algorithm results in \textit{zero-information learning}. This new state does not exist in linear models, as our conclusions illustrate that $m^\star\neq 0$ for the linear field. 
\end{enumerate}

Although the first-order phase transition is generic, the second-order might not be present. This is simply observed by comparing the two critical loads, namely $R^\star$ and $R_{\rm Th}$. In fact, it is easily shown that there exists a critical noise variance $\sigma_{\rm C}^2$ below which $R_{\rm Th} < R^\star$ and hence we observe only a first-order phase transition at which the overlap jumps from \textit{perfect inference}, i.e., $m^\star = 1$, to \textit{zero-information inference}, i.e., $m^\star = 0$. This critical noise variance for the considered setting in Fig.~\ref{fig:6} is $\sigma_{\rm C}^2 = 0.084$. For settings below this critical variance, the inference problem shows the so-called \textit{all-or-nothing} phenomenon\footnote{We discuss this phenomenon in details on the next subsection.} \cite{reeves2019all}. As we show later on, with higher order fields, the inference problem always show the \textit{all-or-nothing} behavior, regardless of the noise variance.

The transition from \textit{all-to-something} behavior, i.e., $\sigma^2 > \sigma_{\rm C}^2$, to \textit{all-or-nothing} case is demonstrated in Figs.~\ref{fig:6_algorithmic} and \ref{fig:6_algorithmic_2}, where we plot the theoretic and algorithmic results for $\sigma^2 = 0.2 > \sigma_{\rm C}^2$ and $\sigma^2 = 0.05 < \sigma_{\rm C}^2$, respectively. Similar to Fig.~\ref{fig:6}, the fixed-point solutions, the \ac{rs} solution and the algorithmically achievable overlap are shown by dotted-dashed, solid and dashed lines respectively. The algorithmic gap is also denoted by $\Delta_{\rm Alg}$ in the figures.  By comparing Figs.~\ref{fig:6_algorithmic} and \ref{fig:6_algorithmic_2}, one can further observe that the \textit{all-or-nothing} behavior\footnote{As we show in Section~\ref{sec:SecLearn}, this behavior is favorable for secure learning.} in this case is achieved at the expense of higher algorithmic gaps. Further investigations conjecture that this behavior is generic.

\begin{figure}
	\begin{center}
		\input{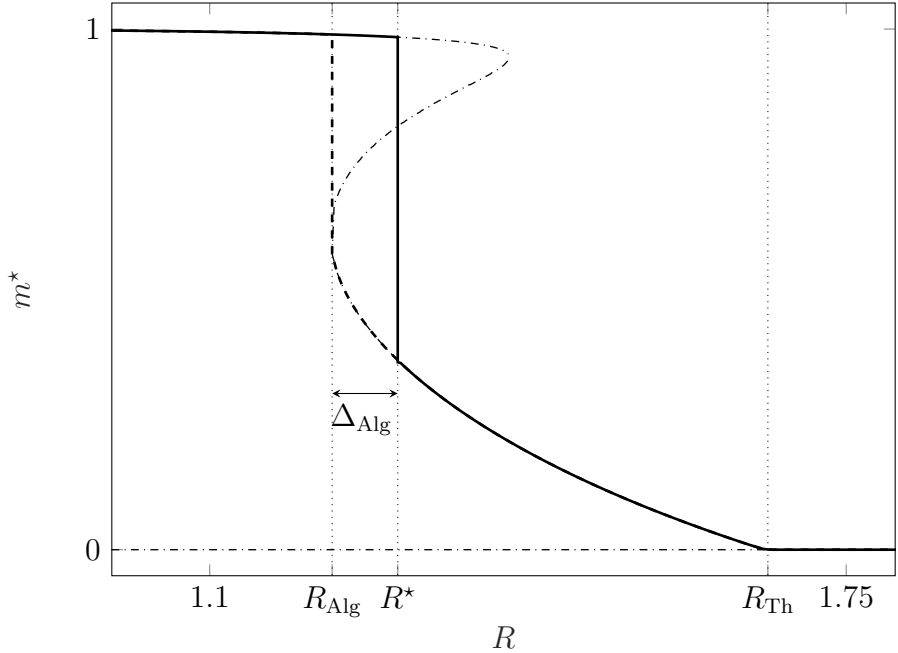}
	\end{center}
	\caption{Overlap $m^\star$ against load $R$ for the purely quadratic field, i.e., $\lambda=2$ considering $\sigma = 0.2 > \sigma_{\rm C}^2$. The dotted-dashed, solid and dashed lines represent the fixed-points, the \ac{rs} solution and the algorithmically feasible overlap respectively. The \ac{rs} solution in this case shows a first-order and a second-order phase transmission.}
	\label{fig:6_algorithmic}
\end{figure}

\begin{figure}
	\begin{center}
		\input{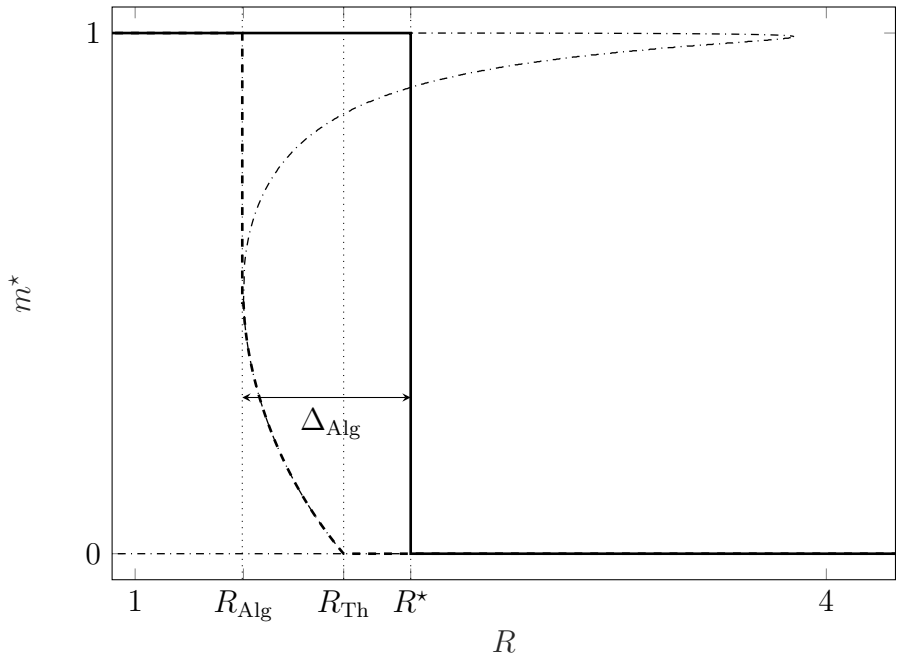}
	\end{center}
	\caption{Overlap $m^\star$ against load $R$ for the purely quadratic field, i.e., $\lambda=2$ considering $\sigma = 0.05 < \sigma_{\rm C}^2$. The dotted-dashed, solid and dashed lines represent the fixed-points, the \ac{rs} solution and the algorithmically feasible overlap respectively. The \ac{rs} solution in this case shows only a first-order phase transmission.}
	\label{fig:6_algorithmic_2}
\end{figure}

Although $R_{\mathrm{Th}}$ and $R^\star$ are the ones which are derived analytically, we are also interested on the critical load $R_{\rm Alg}$  from algorithmic viewpoint. In fact, $R_{\rm Alg}$ represents the load at which we transit from the initial state of \textit{perfect learning} of labels to the state of \textit{erroneous learning}, when we employ the approximate message passing algorithm to implement the Bayesian estimator. More precisely, it represents the maximum load at which perfect learning is feasible computationally. It is worth noting that, unlike the theoretical results, the algorithmic results always show the three states of \textit{all}, \textit{some} and \textit{nothing} being recovered. This is in fact the cost we pay to achieve the computationally tractability.

A summary of the findings for $\lambda=2$ is given below:
\begin{conclusion}[Inference on a quadratic generative model]
	Observations from a purely quadratic generative model show a first-order and a second-order phase transition with respect to the load: Below Shannon's limit, perfect inference is achieved. As we pass this limit, a first-order phase transition occurs, where the overlap jumps from $m^\star = 1$ to some $m^\star < 1$. At the threshold load $R_{\rm Th}$, the overlap shows a second-order phase transition and becomes exactly zero. This means that by surpassing the threshold load, the learned labels contain no information about the true labels anymore. The two critical loads, i.e., Shannon's limit and the threshold, become closer as the noise variance reduces. As the noise variance drops below a critical level, the overlap shows all-or-nothing behavior at Shannon's limit. The algorithmic performance however demonstrates always the \textit{all-to-something-to-nothing} behavior.
\end{conclusion} 

\subsection{Higher-order Nonlinear Generative Models}
For higher-order fields, i.e., $\lambda \geq 3$, the detailed analytic investigations of the first-order phase shift is not trivial. Nevertheless, the numerical investigations and intuition behind the problem suggest the following conjecture: For larger choices of $\lambda$, the problem shows the same behavior with three major changes
\begin{enumerate}
	\item The threshold load, at which the curvature of $\mal_m\brc{\sigma}$ at $m=0$ changes, reduces as $\lambda$ increases. The analytical derivation of the threshold load for higher-order fields is not trivial; however, numerical investigation suggests that either the threshold load drops significantly for $\lambda\geq 3$ or it is exactly zero, i.e., the curvature change happens at $R_{\mathrm{Th}} = \epsilon$, where either $\epsilon\approx0$ or $\epsilon = 0$ for higher-order fields. Our finite resolution searches failed to find this load for $\lambda\geq 3$. This means that for higher fields $m=0$ is (almost) always a local minimum.
	\item Since $R_{\rm Th} \rightarrow 0$, there exists no setting at which the Bayesian algorithms gets into the state with \textit{some} labels being correctly learned. This means that at higher-orders, the overlap jumps at the critical load $R^\star$ directly from \textit{all recovered} state, i.e., $m^\star = 1$, to the state of \textit{nothing} being recovered, i.e., $m^\star = 0$. In other words, the Bayesian inference always shows the \textit{all-or-nothing} phenomenon for $\lambda \geq 3$; the behavior which we see for $\sigma^2 \leq \sigma_{\rm C}^2$ with the purely quadratic field. The all-or-nothing phenomenon is a well-known asymptotic behavior being reported in various other inference problems; see for example \cite{gamarnik2017high,reeves2019all,zadik2019computational,niles2021all,barbier2020all,maillard2020phase}.
	\item The overlap achieved by AMP from uninformed initialization tends to be zero for almost all loads. This indicates that a computationally feasible implementation of the Bayesian algorithm via the approximate message passing is unable to perform a perfect learning, even though it is theoretically achievable. A known solution to this issue is to spatially couple the random field elements \cite{krzakala2012statistical,donoho2013information}. Investigations in this respect are, however, out of the scope of this study and are left as a potential direction for future work.
\end{enumerate}

The above discussions are summarized with some heuristics as follows: For $\lambda\geq 3$, the modified free energy $\mal_m\brc{\sigma}$ contains three local extreme points: $m=0$, $m=m_0$ for $0< m_0<1$ and $m=1$. The boundary points $m=0$ and $m=1$ are local minima, and $m=m_0$ is a local maximum. Starting with small loads, the local minimum at $m=1$ is smaller than the one at $m=0$ and hence the overlap reads $m^\star = 1$. By increasing the rate, $\mal_1\brc{\sigma}$ starts to increase while $\mal_0\brc{\sigma} = \mac\brc{\sigma}$ remains unchanged. As we surpass a critical load $\mal_1\brc{\sigma} >\mal_0\brc{\sigma}$ and hence $m=0$ becomes the global minimizer which concludes $m^\star = 0$. From the channel coding theorem of Shannon and our observations with the purely quadratic field, we expect that the critical load be $R^\star$ defined in \eqref{eq:R_star}. Numerical investigations confirm this conjecture; see Figs.\ref{fig:7b} and \ref{fig:8}. We give a heuristic proof to this conjecture later on. 

The all-or-nothing phenomenon, along with a numerical validation of our conjecture, is illustrated in Fig.~\ref{fig:7b} where we plot the modified free energy $\mal_m\brc{\sigma}$ against $m$ for small deviations below and above the conjectured critical load $R^\star=\mac\brc{\sigma}/\log 2$, i.e., we set $R = R^\star - \epsilon$ (solid line) and $R = R^\star + \epsilon$ (dashed line) for $\epsilon = 0.01$. As the figure shows by surpassing the critical load, $\mal_1\brc{\sigma} $ passes $\mac\brc{\sigma} = \mal_0\brc{\sigma}$, and hence the global minimum jumps from $m^\star = 1$ to $m^\star = 0$. This jump is shown from another perspective in Fig.~\ref{fig:8}, in which we plot the overlap against the load for $\lambda=3$ and $5$. As the figure shows, the phase transition happens at $\mac\brc{\sigma}/\log 2$ for both of them, and the curves are not numerically distinguishable. This result further explains the findings by Sourlas in \cite{sourlas1989spin}.

\begin{figure}
	\begin{center}
		\input{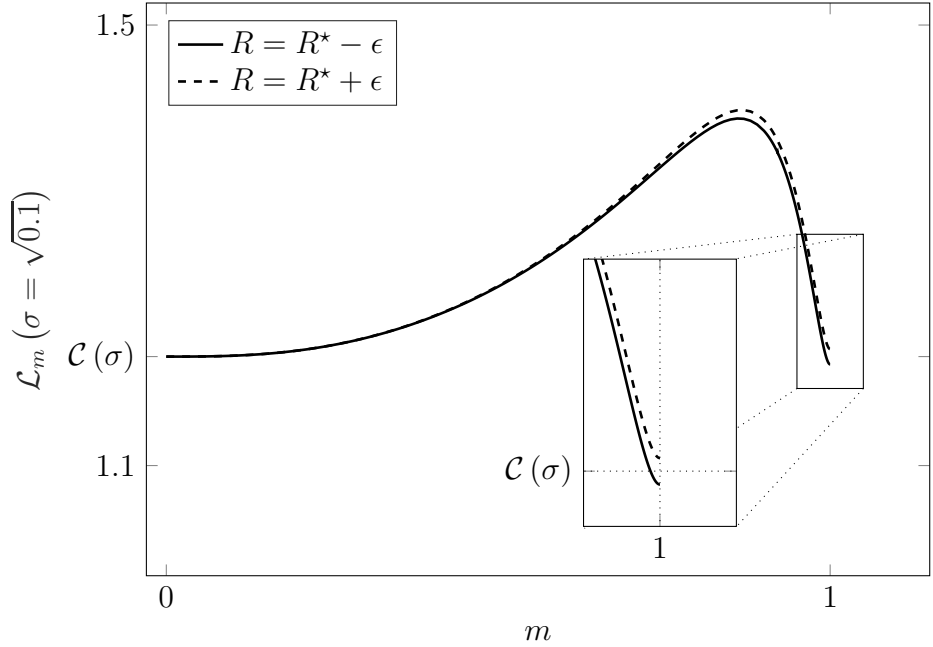}
	\end{center}
	\caption{Modified free energy $\mal_m\brc{\sigma}$ against $m$ for $\lambda=3$ considering $\sigma^2={0.1}$. The figure is plotted for $R=R^\star - \epsilon$ (solid line) $R=R^\star + \epsilon$ (dashed line), where we set $R^\star = \mac\brc{\sigma}/\log 2$ and $\epsilon= 0.01$. At the critical load, the values at $m=0$ and $m=1$ coincide up to numerical errors.}
	\label{fig:7b}
\end{figure}

\begin{remark}
	\label{remark:2}
	The Gaussian random field can be observed as the encoder of the random coding scheme introduced by Sourlas in \cite{sourlas1989spin}. A related side-note in this respect is the dimensional scaling of Sourlas' coding: Similar to Shannon's random coding approach, Sourlas' coding directly maps the message into a real-valued sequence\footnote{This is in contrast to known coding schemes whose operations are performed in a finite field.}. On the other hand, Fig.~\ref{fig:8} indicates that Sourlas' coding achieves the channel capacity with a pure cubic Gaussian random field. This observation demonstrates a significant drop in the number of random components in the coding scheme: Unlike Shannon's random coding, in which a codebook with an exponentially large number of random entries is required, this code requires a generator function whose number of random components scales only cubically with the message length. This is an interesting finding and is worth further investigations. 
\end{remark}

\begin{figure}
	\begin{center}
		\input{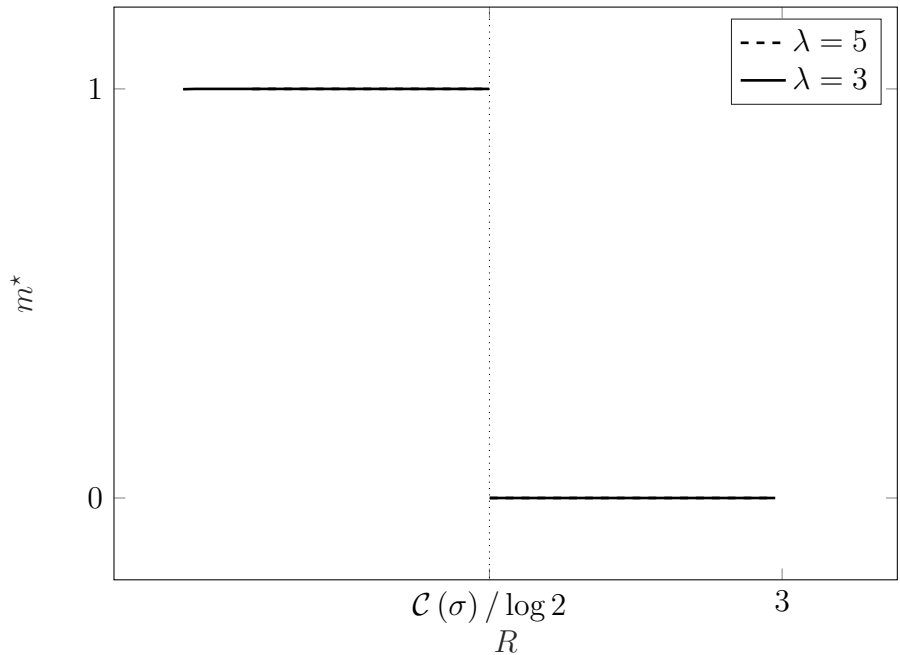}
	\end{center}
	\caption{Overlap $m^\star$ against load $R$ for $\lambda=3$ and $\lambda=5$. The results show a first-order phase transmission at the channel capacity from $m^\star = 1$ to $m^\star = 0$. This demonstrates the all-or-nothing phenomenon in higher order models. As one can observe, the curves for $\lambda=3$ and $\lambda=5$ are not numerically distinguishable.}
	\label{fig:8}
\end{figure}

We now focus on the information rate achieved by higher-order fields and try to derive heuristically the critical load: With higher order-fields, we roughly have two possible candidates for the global minimum; namely, $m=1$ and $m = 0$. For the former case, we can follow the same lines of justifications as in \eqref{eq:Rate_1_analysis}-\eqref{eq:Rate_2_analysis} to show that
\begin{align}
	\mal_1\brc{\sigma} \approx R \log 2. 
\end{align}
For the latter one, it is straightforwardly shown by substituting $m=0$ into \eqref{eq:L_m} that
\begin{align}
	\mal_0\brc{\sigma} = \mac\brc{\sigma}.
\end{align}
The global optimum is hence found by comparing these two values: We have $m^\star = 1$, if 
\begin{align}
	\mal_1\brc{\sigma} \leq  \mal_0\brc{\sigma}.
\end{align}
Using the approximated value for $\mal_1\brc{\sigma}$, we can conclude that $m^\star = 1$, if 
\begin{align}
R \leq \frac{\mac\brc{\sigma}}{\log 2}. 
\end{align}
Consequently, $m^\star = 0$, if $R$ exceeds this critical load.

%The above heuristic derivation indicates that with higher-order fields, we have always a local minimum at $m=0$ and another\footnote{There is in fact another local minimum popping up in between; however, it never passes those at $m=0$ and $m=1$.} at $m=1$. Starting from small loads, the one at $m=1$ is less than the local minimum at $m=0$. Nevertheless, as we pass the critical load the local minimum at $m=0$ passes the one at $m=1$ and becomes the global minimum. 
Based on this heuristic justification, as well as the numerical validations, we have the following conjecture for higher-order Gaussian fields:

\begin{conjecture}
There exists an order $\lambda_0\geq 3$, such that for $\lambda \geq \lambda_0$, the overlap reads
\begin{align}
	m^\star = \begin{cases}
		1 &R\leq  \dfrac{\displaystyle \mac\brc{\sigma}}{\log 2} \vspace{5mm}\\
		0 &R >  \dfrac{\displaystyle \mac\brc{\sigma}}{\log 2}\\
	\end{cases} ,
	\label{eq:Conj1}
\end{align}
and the information rate is given by
\begin{align}
	\mai\brc{\sigma} = \begin{cases}
		R \log 2 &R\leq  \dfrac{\displaystyle \mac\brc{\sigma}}{\log 2}\vspace{5mm}\\
		 \mac\brc{\sigma} &R >  \dfrac{\displaystyle \mac\brc{\sigma}}{\log 2}\\
	\end{cases} .
\label{eq:Conj}
\end{align}
\end{conjecture}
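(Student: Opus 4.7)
The plan is to formalize the heuristic already given in the paper in three steps: (i) evaluate $\mal_m\brc{\sigma}$ exactly at $m=0$, (ii) compute its large-$\lambda$ asymptotics at $m=1$, and (iii) show that for sufficiently large $\lambda$ no interior $m\in\brc{0,1}$ beats the two boundary values. Once the two boundary points are identified as the only candidates for the global minimizer, equating them pinpoints the critical load $R^\star = \mac\brc{\sigma}/\log 2$, from which \eqref{eq:Conj1} is immediate and \eqref{eq:Conj} follows via \eqref{eq:MI_F}.

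For step~(i), since $\lambda\geq 2$ the factor $\lambda m^{\lambda-1}$ in the denominator of $\rho_m^2\brc{\sigma}$ vanishes as $m\downarrow 0$, so $\rho_0\brc{\sigma}=\infty$. Using the Taylor expansion $\log\cosh\brc{x}=x^2/2+O\brc{x^4}$ and bounded convergence inside the Gaussian expectation shows $\maq_m\brc{\sigma}\to 0$, giving the exact identity $\mal_0\brc{\sigma}=\frac{1}{2}\log\brc{1+1/\sigma^2}=\mac\brc{\sigma}$. For step~(ii), at $m=1$ one has $\xi_1\brc{\sigma}=0$ and $\rho_1^2\brc{\sigma}=R\sigma^2/\lambda$, so $1/\rho_1^2\brc{\sigma}=\lambda/\brc{R\sigma^2}$ and $1/\rho_1\brc{\sigma}=\sqrt{\lambda/\brc{R\sigma^2}}$ both diverge. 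Since the drift-to-noise ratio $1/\rho_1\brc{\sigma}$ diverges, the argument of $\log\cosh$ is positive with probability $1-o\brc{1}$; inserting the asymptote $\log\cosh\brc{x}=\abs{x}-\log 2+o\brc{1}$ and controlling the rare event $\set{1+\rho_1\brc{\sigma} Z<0}$ via a Gaussian tail bound yields
\begin{align*}
	\Ex{\log\cosh\brc{\frac{1+\rho_1\brc{\sigma} Z}{\rho_1^2\brc{\sigma}}}}{Z}=\frac{1}{\rho_1^2\brc{\sigma}}-\log 2+o\brc{1}.
\end{align*}
Substituting into $\maq_1\brc{\sigma}$ cancels the divergent $\lambda/\sigma^2$ contributions and gives $\mal_1\brc{\sigma}=R\log 2+o\brc{1}$ as $\lambda\uparrow\infty$.

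Step~(iii) is the main obstacle. I would split $\brc{0,1}$ according to the scale of $\rho_m^2\brc{\sigma}$. On the bulk $\dbc{\epsilon,1-\epsilon}$, the factor $m^{\lambda-1}$ decays exponentially in $\lambda$, so $1/\rho_m^2\brc{\sigma}\to 0$ uniformly; the Taylor argument of step~(i) then gives $\maq_m\brc{\sigma}=o\brc{1}$ and $\mal_m\brc{\sigma}\to\mac\brc{\sigma}$ uniformly, matching $\mal_0\brc{\sigma}$. On the rescaled shell $m=1-t/\lambda$ with $t$ fixed, $\Phi\brc{m}\to e^{-t}$ and $1/\rho_m^2\brc{\sigma}\sim\lambda e^{-t}/\brc{R\sigma^2 K_t}$ with $K_t=1+\brc{1-e^{-t}}/\sigma^2$; repeating the estimate of step~(ii) uniformly in $t$ yields $\mal_m\brc{\sigma}\to\frac{1}{2}\log K_t+R\log 2\geq R\log 2$, with equality only at $t=0$. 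The delicate part is the transition window $t\sim\log\lambda$, where neither the small-$x$ nor the large-$x$ expansion of $\log\cosh$ is cleanly valid; there one must establish a uniform lower bound $\mal_m\brc{\sigma}\geq\min\set{\mac\brc{\sigma},R\log 2}+o\brc{1}$, either by showing strict positivity of $\partial_m\mal_m\brc{\sigma}$ across the crossover or by a Laplace-type interpolation between the two regimes. Once this is in place, comparing the affine function $R\mapsto R\log 2$ with the constant $\mac\brc{\sigma}$ immediately yields the critical load and the conjecture follows.
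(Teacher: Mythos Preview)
Note first that the statement is labeled a \emph{conjecture} in the paper, not a proved theorem. The paper does not give a proof; it gives a two-line heuristic plus numerical evidence. Specifically, the paper's entire supporting argument is: compute $\mal_0\brc{\sigma}=\mac\brc{\sigma}$ exactly (your step~(i)); approximate $\mal_1\brc{\sigma}\approx R\log 2$ by invoking Corollary~\ref{cor:2} and the earlier linear-case heuristic \eqref{eq:Rate_1_analysis}--\eqref{eq:Rate_2_analysis} (the content of your step~(ii), though the paper routes it through the bound with the auxiliary constant $C_{m^\star}$ rather than your direct Laplace-type expansion); and then simply \emph{assert}, on the basis of numerical plots (Figs.~\ref{fig:7b}--\ref{fig:9}), that for $\lambda\geq 3$ these two boundary points are the only candidates for the global minimizer. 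Comparing $\mal_0$ and $\mal_1$ then gives the threshold $R^\star=\mac\brc{\sigma}/\log 2$.

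Your step~(iii) --- the elimination of interior minima --- is therefore entirely absent from the paper; the paper replaces it with pictures and the phrase ``we roughly have two possible candidates.'' Your proposed decomposition into bulk, shell $m=1-t/\lambda$, and crossover $t\sim\log\lambda$ is a reasonable plan of attack, and you have correctly identified the crossover window as the genuine obstacle: neither the small-argument Taylor expansion nor the large-argument $\abs{x}-\log 2$ asymptote is uniformly valid there, and a lower bound of the form $\mal_m\brc{\sigma}\geq\min\set{\mac\brc{\sigma},R\log 2}+o\brc{1}$ across that window is exactly what would be needed to upgrade the conjecture to a theorem. In short, your steps~(i)--(ii) match the paper's heuristic, and your step~(iii), if completed, would go strictly beyond what the paper claims.
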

The evidence for \eqref{eq:Conj1} is simply what we observe in Fig~\ref{fig:8}. To examine \eqref{eq:Conj}, we further sketch $\mai\brc{\sigma}$ against the load for $\lambda=3$ in Fig.~\ref{fig:9}. The figure suggests that $\lambda_0 =3$ in the conjecture, since the curve for $\lambda = 3$ meets the conjectured curve almost perfectly.

\begin{figure}
	\begin{center}
		\input{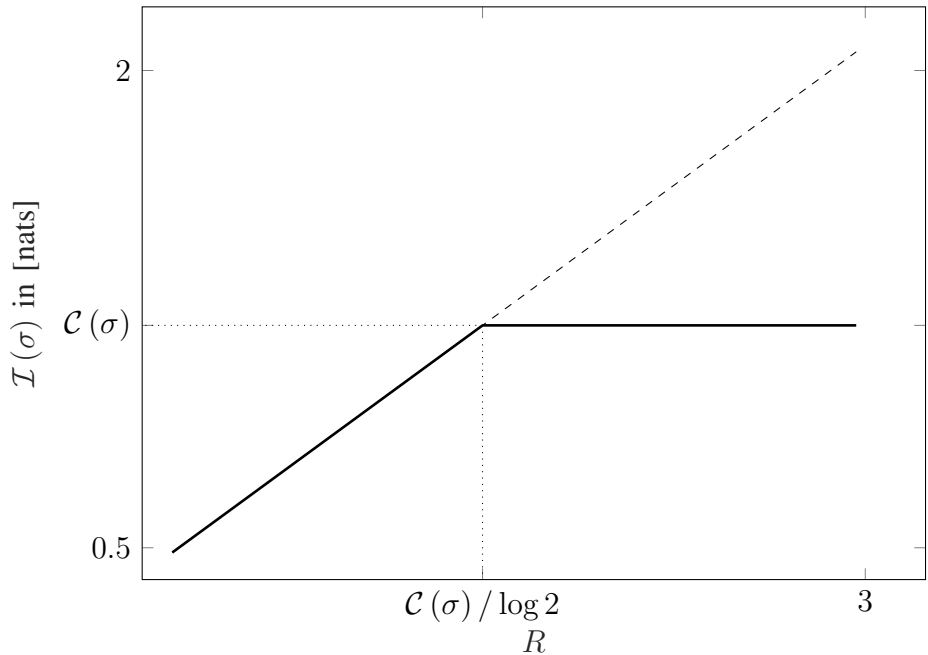}
	\end{center}
	\caption{Information rate against load for $\lambda=3$. Although $\lambda=3$, the figure tightly matches the conjecture.}
	\label{fig:9}
\end{figure}

We summarize the findings for $\lambda\geq 3$ as follows:
\begin{conclusion}[Inference on higher-order generative models]
	Bayesian estimation for nonlinear generative models of higher-order, i.e., $\lambda\geq 3$ experiences the all-or-nothing phenomenon: Below a critical load, perfect inference, i.e., $m^\star = 1$,  is achieved. By surpassing the critical load, the learned labels contain no information about the true labels anymore, i.e., the overlap jumps from $m^\star = 1$ to $m^\star = 0$. The critical load meets Shannon's limit. Nevertheless, for these fields, perfect inference cannot be achieved via the feasible realization of the Bayesian learning algorithm with approximate message passing.
\end{conclusion} 

\section{Comments on Secure Learning}
\label{sec:SecLearn}
Our findings lead us to an interesting conclusion: In settings in which we need a phase transition in the outcome of the learning algorithm, we can use nonlinear models. Instances of such settings are those with secrecy constraints. In these settings, the samples in $\byy$ are observed by two terminals: a legitimate terminal and a malicious one. The desired behavior in this setting is to enable the legitimate terminal to learn the labels perfectly while the malicious one receives zero information from its eavesdropping. 

The secure transmission setting in its generic form is studied by Wyner in \cite{wyner1975wire}. Wyner finds a maximum load for secure transmission, similar to the one given by Shannon for reliable transmission. The limit indicates that in a \textit{wiretap} setting, the maximum load for which a \textit{perfectly} secure transmission is achievable is given by the difference between Shannon's limits for the legitimate and the eavesdropping channels. A short introduction to Wyner's results and its extension to Gaussian channels is given in Appendix~\ref{app:wyner}.

Considering the conclusions in the previous section, one can conjecture that the same behavior as the one explained by Wyner is observed in our setting when the generative model is \textit{strictly nonlinear}: In this case, it is enough to degrade the channel to the malicious terminal, such that its corresponding threshold load falls below the system load. This guarantees that the overlap of this terminal is $m^\star = 0$ which corresponds to independent inference. 

In this section, we try to show that this conjecture is, in fact, correct. This leads to an interesting conclusion: For secure inference, it is enough to use higher-order Gaussian random fields. This not only re-discovers Wyner's result, but also specifies explicitly the model by which secure inference is enabled.

We start our investigations by considering a simple wiretap setting for our learning problem: A sequence of labels $\bs\in\set{\pm 1}^K$ is observed through a nonlinear generative Gaussian model $\mav\brc{\cdot}$ of order $\lambda$. The transformed sequence $\bxx = \mav\brc{\bs} \in \setR^N$ is then observed by two terminals which know the model, i.e., $\mav \brc{\cdot}$: A legitimate terminal  indexed by $i=1$ and a malicious terminal indexed by $i=2$. Terminal $i \in \set{1,2}$ observes
\begin{align}
	\byy_i = \bxx + \bww_i, \label{eq:y_i}
\end{align}
where $\bww_i\sim \mac\man\brc{\boldsymbol{0},\sigma^2_i \mI_N}$. To guarantee the feasibility of secure inference\footnote{Wyner's result indicates that with $\sigma_1^2<\sigma_2^2$, secure transmission is information-theoretically impossible.}, we set $\sigma_2^2 > \sigma_1^2$. Terminal $i$ employs the optimal Bayesian algorithm to recover the labels. For this setting, we adopt the notion of \textit{perfect} secrecy:
\begin{definition}[Perfectly secure and reliable inference]
The inference is said to be asymptotically reliable and secure if for $N\uparrow\infty$, we have
\begin{itemize}
	\item The overlap of the legitimate terminal is $m^\star_1 = 1$.
	\item The information rate to the malicious terminal is zero, i.e.,
	\begin{align}
		\lim_{N\uparrow \infty} \frac{1}{N} I\brc{\bs ; \byy_2 \vert \mav} = 0.
	\end{align}
\end{itemize}
\end{definition}
In the above definition, the first constraint restricts the legitimate terminal to recover the labels asymptotically perfectly. The second constraint moreover enforces the information leakage to the malicious terminal to be zero.

\subsection{Secure Inference via Nonlinear Generative Models}
\label{sec:Serecy1}
We now consider $\mav\brc{\cdot}$ to be a Gaussian random field with 
\begin{align}
	\Phi \brc{x} = x^\lambda.
\end{align}
Hence, the transformed sequence $\bxx$ contains unit-variance Gaussian symbols. We further assume that $\lambda$ is chosen such that \eqref{eq:Conj1} and \eqref{eq:Conj} are exact. To secure our learning model, we follow the strategy below:
\begin{itemize}
	\item Generate $K_2 = \lceil N \mac\brc{\sigma_2}/\log 2 \rceil$ \ac{iid} uniform labels 
	\begin{align}
		\br = \dbc{r_1, \ldots,r_{K_2}}^\trp.
	\end{align}
\item Add $\br$ as a prefix to $\bs$ and permute the extended vector via a randomly-generated permutation matrix $\mPi$, i.e., construct the new sequence of labels $\tilde{\bs}\in\set{\pm 1}^{K+K_2}$ as
	\begin{align}
	\tilde{\bs} = \mPi\dbc{\br^\trp, \bs^\trp}^\trp,
\end{align}
for a permutation matrix $\mPi\in \set{0,1}^{\brc{{K+K_2}} \times \brc{K+K_2} }$ which is known to all parties in the setting, and whose permuted vector $\tilde{\bs}$ has the following property: For $\ell \in [K_2]$, define the $\ell$-th \textit{bin}\footnote{We use this appellation due to its connection to the \textit{random linear binning} approach; see \cite{wyner1974recent}.} of $\tilde{\bs}$ as
\begin{align}
	\dbc{\tilde{\bs}}_\ell =  \set{ \tilde{s}_k: k\in \dbc{\brc{\ell-1} B +1 : \ell B } },
\end{align}
for the bin size 
\begin{align}
	B = \left\lceil1+\frac{K}{K_2} \right\rceil.
\end{align}
Permutation $\mPi$ guarantees that for every choice of $\ell \in [K_2]$, there exists only one symbol of $\br$ in $\dbc{\tilde{\bs}}_\ell$. 
\end{itemize} 

The randomized permutation applied by $\mPi$ can be observed as the random binning approach; see \cite{wyner1974recent,cover1975proof,yassaee2014achievability,muramatsu2012construction} for more details on the random binning technique and its applications in information theory. Interestingly, the described property of the random permutation implies that there existence \textit{exponentially} many permutations.

To secure the model, we pass $\tilde{\bs}$ through the model $\mav\brc{\cdot}: \set{\pm 1}^{K+K_2} \mapsto \setR^N$. The effective load of this model is defined by excluding the randomly generated labels, i.e., $\br$. We distinguish between the effective load and the system load by denoting the former with $R = K/N$ and the latter by $\tilde{R} = \brc{K+K_2}/N$. Note that $R$ and $\tilde{R}$ are related as
\begin{align}
	\tilde{R} = R + \frac{\mac\brc{\sigma_2}}{\log 2} + \epsilon_N \label{eq:R_tild}
\end{align} 
for some $\epsilon_N = \mao\brc{1/N}$ which tends to zero in the asymptotic regime.

We claim that using this simple modification, the inference in this model is perfectly secure and reliable in the large-system limit if the effective load satisfies
\begin{align}
R < \frac{\mac\brc{\sigma_1} - \mac\brc{\sigma_2}}{\log 2}. \label{eq:Wyner_bound}
\end{align} 
This upper-bound is in fact Wyner's asymptotic limit for the Gaussian wiretap channel \cite{leung1978gaussian}. 

In the sequel, we give a heuristic proof to this claim. We start the analysis by investigating the security constraint: By the chain-rule, we have
\begin{subequations}
	\begin{align}
I\brc{\tilde{\bs} ; \byy_2 \vert \mav} &= I\brc{\br,\bs ; \byy_2 \vert \mav},\\
&= I\brc{\bs ; \byy_2 \vert \mav} + I\brc{\br ; \byy_2 \vert \bs ,  \mav}.
\end{align}
\end{subequations}
Hence, the information leakage to the eavesdropper is given by
\begin{align}
	I\brc{\bs ; \byy_2 \vert \mav}  &= I\brc{\tilde{\bs} ; \byy_2 \vert \mav}-  I\brc{\br ; \byy_2 \vert \bs ,  \mav}. \label{eq:chain}
\end{align}
Noting that the system load $\tilde{R} \geq \mac\brc{\sigma_2}/\log 2$, we use \eqref{eq:Conj} and write 
\begin{align}
	\lim_{N\uparrow \infty} \frac{1}{N} I\brc{\tilde{\bs} ; \byy_2 \vert \mav} = \mac\brc{\sigma_2}.
\end{align}
This limit calculates the asymptotic information rate corresponding to the first term on the right hand side of \eqref{eq:chain}.

To calculate the information rate corresponding to the second term in \eqref{eq:chain}, we note that $I\brc{\br ; \byy_2 \vert \bs ,  \mav}$ calculates the mutual information in a modified model, where the recovery algorithm tries to recover $\br$ while the true labels in $\bs$ are revealed to it: Let $\bs = \bs_0$ be known at both sides. Since $\bs_0$ is known at the recovery algorithm, each bin of $\tilde{\bs}$, which is constructed from $\bs_0$ and $\br$ via the random permutation $\mPi$, can be seen as a uniform Bernoulli symbol\footnote{Note that the bins are discrete random variables, and hence the information rate only depends on their distribution.}. Since we have $K_2$ bins in this setting, the load in this new setting reads $K_2/N = \mac\brc{\sigma_2} / \log 2$, and thus we have
\begin{align}
	\lim_{N\uparrow \infty} \frac{1}{N} I\brc{\tilde{\bs} ; \byy_2 \vert \bs = \bs_0 ,  \mav} = \mac\brc{\sigma_2}.
\end{align}
As the right hand side does not depend on $\bs_0$, we can conclude that
\begin{align}
	\lim_{N\uparrow \infty} \frac{1}{N} I\brc{\tilde{\bs} ; \byy_2 \vert \bs  ,  \mav} = \mac\brc{\sigma_2}.
\end{align} 
Finally, using the fact that $I\brc{\br ; \byy_2 \vert \bs ,  \mav} = I\brc{\tilde{\bs} ; \byy_2 \vert \bs ,  \mav}$, we have
\begin{align}
	\lim_{N\uparrow \infty} \frac{1}{N} I\brc{\br ; \byy_2 \vert \bs ,  \mav} = \lim_{N\uparrow \infty} \frac{1}{N} I\brc{\tilde{\bs} ; \byy_2 \vert \bs ,  \mav} =  \mac\brc{\sigma_2}.
\end{align}
This concludes that 
\begin{align}
	\lim_{N\uparrow \infty} \frac{1}{N} I\brc{\bs ; \byy_2 \vert \mav} =0,
\end{align}
which means that the information rate to the malicious terminal is zero for any choice of $R$.

For reliable inference, the legitimate terminal needs to infer perfectly all labels of $\tilde{\bs}$ and remove the first $K_2$ labels. From the conjecture in \eqref{eq:Conj1}, we know that this is achievable, if
\begin{align}
	\tilde{R} \leq \frac{\mac\brc{\sigma_1}}{\log 2}. \label{eq:Upp2}
\end{align}
By plugging in \eqref{eq:R_tild} into \eqref{eq:Upp2} and sending $\epsilon_N$ to zero for asymptotically large $N$, we conclude that perfect inference is achieved if \eqref{eq:Wyner_bound} is satisfied. Hence, \eqref{eq:Wyner_bound} guarantees secure and reliable inference.

\begin{remark}
	Following Remark~\ref{remark:2}, one can have the same conclusion here: Our proposed secure coding scheme achieves Wyner's limit with a random code whose generator function is given by a Gaussian field. This is an alternative random coding approach whose number of randomly-generated components grows at most cubically with the message length. %compared with the random codebook of Wyner whose size scales exponentially.
\end{remark}

\subsection{Connecting Fyodorov's Results to Wyner's}
\label{sec:Secrecy2}
Although the model considered in \cite{fyodorov2019spin} is different from what we study, related behaviors are reported in both studies. Fyodorov reports a second-order phase transition with respect to $\sigma^2$ for fields which do not contain linear terms. This phase transition breaks the replica symmetry of the problem and hence is specified via the full \ac{rsb} solution. This symmetry breaking comes from the fact that the performance of the \ac{map} algorithm, used for learning by Fyodorov, is described by the free energy at zero temperature at which the symmetry of the corresponding spin glass can be broken\footnote{For the linear field, the corresponding spin glass shows \ac{rs}, even at zero temperature.}.

Given the fact that our considered learning algorithm is Bayesian optimal\footnote{And hence, it is information-theoretically optimal too.}, such an asymmetry does not occur.  We however observe phase transition with respect to the system setting, i.e., load. From the information-theoretic viewpoint, this is the same behavior: The input dimension, i.e., dimension of $\bs$, in our setting is controlled by the load\footnote{Since we have a finite-alphabet input.}, while it is controlled by the \ac{snr} in Fyodorov's work\footnote{Since the input in \cite{fyodorov2019spin} is continuous.}. Our final conclusion is hence similar to what Fyodorov concludes: In strictly nonlinear Gaussian models, we can achieve perfect secrecy.

Our approach sketches a direct connection between Wyner's result and Gaussian random fields, and recovers the same limit for the Gaussian model. We now try to connect Fyodorov's result to Wyner's limit by investigating the same wiretap model. This investigation not only clarifies the connection between the two results, but also introduces an alternative secure coding approach. An advantage of this alternative approach is that it is more intuitive from practical viewpoints; however, as we will see, a downside is that it does not achieve Wyner's limit. We finally discuss the key sub-optimal point of this approach and explain the solution to this point which can be considered in future work.

Let us now start again with the wiretap setting described at the beginning of this section and specified by \eqref{eq:y_i}. We intend to transfer securely $K = NR$  labels to terminal $1$ while keeping terminal $2$ blind. Let $\bs\in\set{\pm 1 }^K$ denote the vector of labels. We follow these steps:
\begin{itemize}
	\item We first apply \textit{spherical source coding}: In this step, we choose $2^{K}$ distinct points on a scaled $\brc{D-1}$-sphere, i.e., we find $\setC = \set{\tilde{\bs}_1, \ldots,\tilde{\bs}_{2^K}}$ on the $\brc{D-1}$-dimensional hypersphere with radius $\sqrt{D P_S}$. We denote the minimum normalized distance between distinct points by $2d$; this means
	\begin{align}
		2d = \min_{ i\neq j \in \dbc{2^K} }  \frac{\norm{\tilde{\bs}_i - \tilde{\bs}_j}^2}{D P_S}.
	\end{align}
The value of $d$ is related to the \textit{maximum normalized inner product} $p \in \dbc{0,1}$ as $d = 1-p$, where $p$ is given by \cite{sloane1982recent}
	\begin{align}
		p = \max_{ i\neq j \in \dbc{2^K} } \frac{\inner{\tilde{\bs}_i ; \tilde{\bs}_j } }{ P_S}.
	\end{align}
To each outcome of $\bs$, we allocate one distinct point $\tilde{\bs}_k$ and refer to it as the \textit{codeword}.
	\item In the second step, we select the codeword which corresponds to the given realization of $\bs$ and pass it through an order $\lambda$ Gaussian field $\mav\brc\cdot : \setS^{D-1} \mapsto\setR^N$. We then add artificial noise to it: Let the selected codeword be $\tilde{\bs}^\star \in \setC$, the secure transformed sequence is
		\begin{align}
			\bxx = \mav\brc{\tilde{\bs}^\star } + \bww_0,
		\end{align}
	where $\bww_0 \sim\man \brc{\boldsymbol{0}, \theta^2 \mI_N}$ for some noise variance $\theta^2$.
\end{itemize}

Terminal $i$ observes $\byy_i$ described in \eqref{eq:y_i}. To learn the labels, it uses the \ac{map} algorithm\footnote{In this particular case, the \ac{map} algorithm reduces to the least-squares method}: Terminal $i$ finds point $\hat{\bs}_i$ on the hypersphere whose transform by $\mav\brc{\cdot}$ is closest to its observations, i.e., it finds 
	\begin{align}
	\hat{\bs}_i &= \argmin_{ \frac{\bu}{\sqrt{{DP_S}}} \; \in \; \setS^{D-1}} \;  \norm{\byy_i - \mav\brc{\buu} }^2.  %\qquad \text{subject to} \; \frac{\buu}{\sqrt{D P_S}} \in \setS^{D-1}.
	\end{align}
It then decodes the labels by finding the codeword in $\setC$ which is closest to $\hat{\bs}_i$.

To achieve perfect secrecy and reliability, we follow these arguments:
\begin{itemize}
	\item If the overlap between the recovery $\hat{\bs}_i$ and the codeword, i.e., 
	\begin{align}
			m_i = \frac{\inner{\hat{\bs}_i ; \tilde{\bs}^\star } }{ P_S},
	\end{align}
	is zero, the recovery and the codeword are completely uncorrelated. This leads to an error probability of $1$.
	\item If the overlap is larger than $1-d/2$, the closest codeword to the recovery is the codeword. This leads to error probability of $0$.
\end{itemize}
Hence, to guarantee secure and reliable transform, we need to design the parameters such that the asymptotic overlap between $\hat{\bs}_1$ and the codeword, denoted by $m_1$, is larger than $1-d/2$, and the asymptotic overlap between $\hat{\bs}_2$ and the codeword, denoted by $m_2$, is zero. It is worth mentioning that these conclusions are given by a slight intuition whose proof of rigor is skipped for the sake of simplicity and can be considered in more detailed studies.

From Fyodorov's result, we can calculate the overlap explicitly. As a result, for reliability, we limit the load $R = K/N$ and $P_S$, such that the minimum distance of codewords is bounded by $2\brc{1-m_1}$ from above, and  for security we set $\theta^2$ and $P_S$, such that $m_2$ passes the phase transition and satisfies $m_2=0$. To this end, we consider the following results:
\begin{itemize}
	\item Finding the maximum load $R$ for a specific minimum distance reduces to the well-known problem of \textit{spherical covering} which is still an open problem. We hence approximate it with an available tight bound in the literature: We assume that Kabatiansky-Levenshtein's bound is tight in the large-system limit, and approximate maximum load as \cite{sloane1982recent}
\begin{align}
	R_{\max} \approx \mu \brc{ \frac{1+ q}{2 q} \log_2 \frac{1+ q}{2 q} - \frac{1- q}{2 q} \log_2 \frac{1- q}{2 q} } \label{eq:R_max}
\end{align}
where\footnote{Note that $p$ depends on $P_S$.} $q = \sqrt{1-p^2}$ setting $p=2m_1-1$, and $\mu = D/N$.
\item For the purely quadratic field with covariance function $\Phi\brc{x} = x^2/2$, the full \ac{rsb} solution for the asymptotic overlap between the recovery and codeword is given by Result~\ref{result:1} given in Appendix~\ref{app:Fyodorov}. We hence assume $\mav\brc{\cdot}$ is the purely quadratic field with the same covariance.
\item Considering the codeword $\tilde{\bs}^\star$, the power of the transformed symbol is given by
\begin{align}
	P_{\rm T} = \Ex{\xx_n^2}{ } =  \Phi\brc{\inner{\tilde{\bs}^\star;\tilde{\bs}^\star}}+ \theta^2 = \frac{P_S^2}{2} + \theta^2.
\end{align}
As a result, Wyner's limit on maximum load for secure and reliable inference is
\begin{align}
	C_{\rm W} = \frac{1}{2} \dbc{\log_2 \brc{1+ \frac{P_S^2}{2\sigma_1^2} + \frac{\theta^2}{\sigma_1^2}}
- \log_2 \brc{1+ \frac{P_S^2}{2\sigma_2^2} + \frac{\theta^2}{\sigma_2^2}}	
}.
\end{align}
Note that this bound is achievable when we use the strategy explained in Section~\ref{sec:Serecy1}.
\end{itemize}

Fig.~\ref{fig:10} shows the value of $R_{\max}$ against $\log\brc{ 1/ \sigma_1^2 }$ when we set $P_S =\sigma_2^2 = 1$. In this figure, for each choice of $\sigma_1^2$, we find $\theta^2$ such that $m_2 = 0$, i.e., we set
\begin{align}
	\theta^2 = \frac{P_S}{ \Gamma_{\rm Th} } - \sigma_2^2,
\end{align}
following Result~\ref{result:1}. We then use Result~\ref{result:1} once again to determine $m_1$ by setting
\begin{align}
	\Gamma = \frac{P_S}{\sigma_1^2+\theta^2}.
\end{align}
To fulfill the reliability constraint, we  set $p=2m_1-1$ and determine $R_{\max}$ from \eqref{eq:R_max}. Finally, $R_{\max}$ is maximized numerically over $\mu \leq 1$. For sake of comparison, Wyner's limit $C_{\rm W}$ is further plotted in the figure by a dashed line.
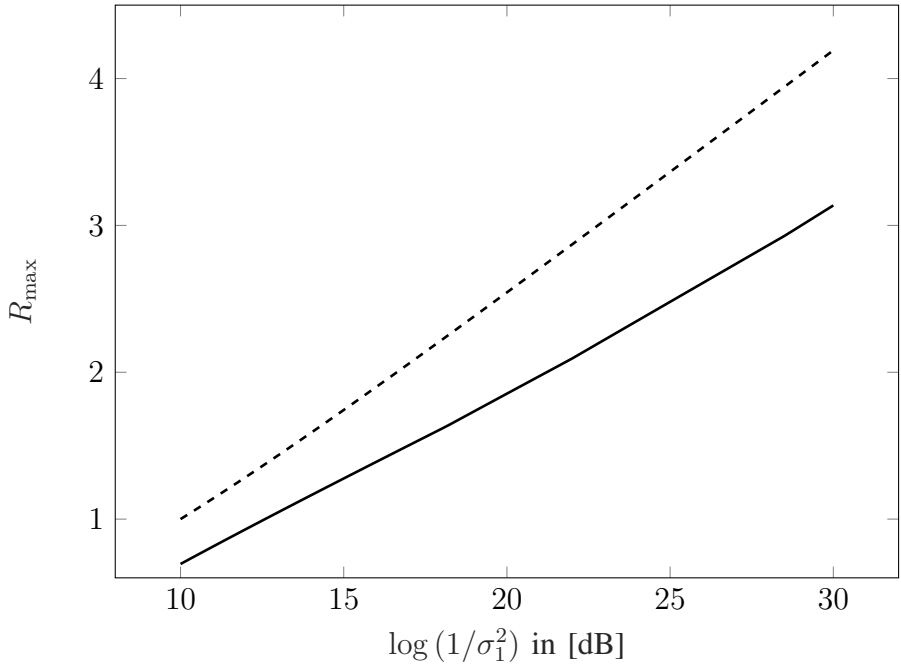
\begin{figure}
	\begin{center}
		% This file was created by matlab2tikz.
%
%The latest updates can be retrieved from
%  http://www.mathworks.com/matlabcentral/fileexchange/22022-matlab2tikz-matlab2tikz
%where you can also make suggestions and rate matlab2tikz.
%
\definecolor{mycolor1}{rgb}{0.00000,0.44700,0.74100}%
\begin{tikzpicture}

	\begin{axis}[%
	width=4.1in,
	height=3in,
	at={(1.262in,0.703in)},
	scale only axis,
	xmin=8,
	xmax=32,
	xlabel style={font=\color{white!15!black}},
	xlabel={$\log \brc{1/\sigma_1^2}$ in [dB]},
	xtick={10,15,20,25,30},
	xticklabels={{$10$},{$15$},{$20$},{$25$},{$30$}},
	ymin=.6,
	ymax=4.5,
	ylabel style={font=\color{white!15!black}},
	ylabel={$R_{\max}$},
	ytick={1,2,3,4},
	yticklabels={{$1$},{$2$},{$3$},{$4$}},
	axis background/.style={fill=white},
	legend style={legend cell align=left, align=left, draw=white!15!black}
	]
	
\addplot [color=black, line width = 1.0pt]
  table[row sep=crcr]{%
10	0.694342685118314\\
10.25	0.724404294728116\\
10.5	0.754327703905179\\
10.75	0.784120864679408\\
11	0.813791337106472\\
11.25	0.843346287027928\\
11.5	0.872792491210356\\
11.75	0.90213634744298\\
12	0.931383887864004\\
12.25	0.960540794279953\\
12.5	0.989612414597402\\
12.75	1.01860377974286\\
13	1.04751962063222\\
13.25	1.07636438488602\\
13.5	1.10514225308448\\
13.75	1.13385715442753\\
14	1.16251278171634\\
14.25	1.1911126056101\\
14.5	1.21965988813836\\
14.75	1.24815769546804\\
15	1.27660890993742\\
15.25	1.30501624137829\\
15.5	1.3333822377535\\
15.75	1.36170929514052\\
16	1.38999966709385\\
16.25	1.41825547341979\\
16.5	1.4464787083972\\
16.75	1.47467124847705\\
17	1.50283485949285\\
17.25	1.53097120341252\\
17.5	1.5590818446609\\
17.75	1.58716825604073\\
18	1.61523182427809\\
18.25	1.64405038436795\\
18.5	1.67416563630832\\
18.75	1.70425511619527\\
19	1.73432029880161\\
19.25	1.76436257290383\\
19.5	1.79438324645476\\
19.75	1.82438355143093\\
20	1.85436464837604\\
20.25	1.88432763066062\\
20.5	1.91427352847652\\
20.75	1.94420331258372\\
21	1.97411789782557\\
21.25	2.00401814642768\\
21.5	2.03390487109444\\
21.75	2.06377883791637\\
22	2.09383637890675\\
22.25	2.12605204714285\\
22.5	2.15825227812939\\
22.75	2.19043794695972\\
23	2.22260987853508\\
23.25	2.25476885050444\\
23.5	2.28691559602607\\
23.75	2.31905080636208\\
24	2.35117513331671\\
24.25	2.38328919152811\\
24.5	2.41539356062304\\
24.75	2.44748878724286\\
25	2.47957538694916\\
25.25	2.51165384601617\\
25.5	2.54372462311728\\
25.75	2.57578815091195\\
26	2.60784483753936\\
26.25	2.63989506802425\\
26.5	2.67193920560062\\
26.75	2.70397759295788\\
27	2.73601055341451\\
27.25	2.76803839202344\\
27.5	2.80006139661313\\
27.75	2.83207983876846\\
28	2.86409397475479\\
28.25	2.89610404638873\\
28.5	2.92811028185871\\
28.75	2.96249677982437\\
29	2.99719384054545\\
29.25	3.03188565117636\\
29.5	3.06657250668082\\
29.75	3.10125468538397\\
30	3.13593244991815\\
};
%\addlegendentry{$R_{\}

\addplot [color=black, dashed, line width = 1.0pt]
  table[row sep=crcr]{%
10	1\\
10.25	1.03476729650486\\
10.5	1.06985395500065\\
10.75	1.10524754284901\\
11	1.14093586044323\\
11.25	1.1769069618527\\
11.5	1.21314917264112\\
11.75	1.24965110498802\\
12	1.28640167025596\\
12.25	1.32339008915584\\
12.5	1.36060589966883\\
12.75	1.39803896288742\\
13	1.4356794669392\\
13.25	1.47351792915611\\
13.5	1.51154519664926\\
13.75	1.5497524454449\\
14	1.58813117833191\\
14.25	1.62667322156462\\
14.5	1.6653707205573\\
14.75	1.70421613469935\\
15	1.7432022314117\\
15.25	1.78232207955676\\
15.5	1.82156904230612\\
15.75	1.86093676956153\\
16	1.90041919001697\\
16.25	1.94001050294136\\
16.5	1.97970516975432\\
16.75	2.01949790545973\\
17	2.05938366999526\\
17.25	2.09935765954961\\
17.5	2.13941529789289\\
17.75	2.17955222776041\\
18	2.21976430232457\\
18.25	2.26004757678495\\
18.5	2.30039830010226\\
18.75	2.34081290689796\\
19	2.38128800953755\\
19.25	2.42182039041226\\
19.5	2.46240699443126\\
19.75	2.50304492173336\\
20	2.54373142062517\\
20.25	2.58446388075066\\
20.5	2.62523982649486\\
20.75	2.66605691062317\\
21	2.7069129081563\\
21.25	2.74780571047941\\
21.5	2.78873331968323\\
21.75	2.82969384313395\\
22	2.87068548826785\\
22.25	2.91170655760619\\
22.5	2.95275544398509\\
22.75	2.99383062599504\\
23	3.034930663624\\
23.25	3.07605419409816\\
23.5	3.11719992791392\\
23.75	3.1583666450547\\
24	3.19955319138611\\
24.25	3.24075847522303\\
24.5	3.28198146406188\\
24.75	3.32322118147188\\
25	3.36447670413873\\
25.25	3.40574715905453\\
25.5	3.44703172084767\\
25.75	3.48832960924687\\
26	3.52964008667318\\
26.25	3.57096245595456\\
26.5	3.61229605815721\\
26.75	3.6536402705285\\
27	3.69499450454609\\
27.25	3.73635820406847\\
27.5	3.77773084358181\\
27.75	3.81911192653878\\
28	3.86050098378454\\
28.25	3.90189757206593\\
28.5	3.94330127261953\\
28.75	3.98471168983467\\
29	4.02612844998783\\
29.25	4.06755120004448\\
29.5	4.10897960652522\\
29.75	4.15041335443284\\
30	4.19185214623703\\
};

\end{axis}
\end{tikzpicture}%
	\end{center}
	\caption{Maximum load for secure and reliable inference achieved by Fyodorov's result (solid line) compared with Wyner's limit $C_{\rm W}$ (dashed line). We set $\sigma_2^2 = 1$ and $P_S=1$.}
	\label{fig:10}
\end{figure}

As the figure shows, the achieved maximum load does not meet Wyner's limit. This follows the fact that the considered strategy is not information-theoretically optimal in various respects. We finalize our discussions in this section by pointing out few final remarks:
\begin{itemize}
	\item The parameters in Fig.~\ref{fig:10} are chosen such that the approximation in \eqref{eq:R_max} is tight. It is however worth mentioning that the right hand side of \eqref{eq:R_max} is in general an upper bound on $R_{\max}$. This means that the given approximation is slightly optimistic.
	\item There is a key source of sub-optimality in the strategy considered in this section: The spherical coding technique which maps the labels on the hypersphere uses spherical covering. This approach is known to be sub-optimal \cite{mackay2003information}. From the literature, we believe that the gap in Fig.~\ref{fig:10} is mainly due to this source of sub-optimality. In fact, it is widely known in the coding theory that minimum distance is not the key player in codebook design \cite{mackay2003information,richardson2008modern}. We therefore believe that the gap between the two curves shrinks by replacing the spherical covering codebook with a denser codebook\footnote{This could be simply done by random coding on hypersphere; see for example the recent work in \cite{muller2021approximation}.}. We leave this direction of study as a potential future work.
\end{itemize}

\section{Conclusions and Final Remarks}
Strictly nonlinear generative models show a first-order phase transition with respect to the load: At a threshold load, the overlap jumps from one to zero and the information rate changes from a linear function of load to a constant. This phase transition leads to the \textit{all-or-nothing} phenomenon for \textit{strictly} nonlinear fields. Numerical investigations show almost perfect match to the conjectured asymptotic behavior for orders more than two. This is a close match to the asymptotic behavior of random coding schemes described by the channel coding theorem of Shannon which cannot be seen by the linear field. This observation can be illustrated through the connection of our model to Sourlas' coding scheme. A key finding in this respect is that using a Gaussian random field, the channel capacity can be achieved by a random code whose generator function is described by a Gaussian field. The number of random components in this code grows in terms of the message length with a polynomial order. %This is a significant drop compared to the random coding approach of Shannon whose codebook size scales exponentially with the message length. %A natural direction for future work is hence to illustrate this connection and explain the behavior in the shadow  \ac{ldgm} codes.  

In strictly nonlinear models, the overlap can be set to zero. This is another behavior which is not observed in linear models. This enables us to secure the model from passive attacks in which malicious terminals try to learn the model parameters from overheard data. The heuristic derivation of maximum load, by which perfect secrecy is guaranteed, shows that achieved maximum load for large-order Gaussian fields matches the information-theoretic limit derived by Wyner. Hence, it proves Wyner's secure coding theorem via an alternative coding scheme whose encoder is a Gaussian random field. This finding depicts the efficiency of non-linear fields for secure coding and suggests further investigations in this respect through finite dimensional analyses and simulations. To this end, one can develop \ac{amp} algorithms for Bayesian inference on nonlinear generative Gaussian models. This is an interesting perspective  for future work.

The findings of this study are shown to be connected to what Fyodorov reports in \cite{fyodorov2019spin}. This is not surprising as the models, despite their differences, are related. Using spherical covering, we can invoke this connection and design a secure coding scheme based on Fyodorov's encryption. Nevertheless, as shown in the last section, it seems to be more challenging to achieve Wyner's limit by this scheme. The key challenge in this respect is the codebook design on the hypersphere. Investigation of the same strategy with denser randomly generated codebooks on the hypersphere is another interesting line for future work.

All above conclusions and directions for future work require the extension of the current study to a broader set of models and target problems. The work in this direction is currently ongoing. 

\section*{Acknowledgments}
The first author had the honor to discuss the results of this work and receive helpful feedback from Yan Fyodorov throughout this study. He is hence grateful to Yan Fyodorov for the helpful discussions and useful comments on the work. The first author would like to further thank Lenka Zdeborov\'a, Nicolas Macris, Saba Asaad and Rafael Schaefer for their comments on the work. The financial support by the ETI at FAU Erlangen-N\"urnberg entitled "Secure Coding via Gaussian Random Fields" is also acknowledged with thanks.

\newpage
\appendices

\section{Fyodorov's Nonlinear Encryption}
\label{app:Fyodorov}
In Section~\ref{sec:SecLearn}, we use a particular result of \cite{fyodorov2019spin} to illustrate our contribution to secure learning. We hence state this particular result in this appendix, and represent the core finding of Fyodorov in the shadow of this result: Let the \ac{snr} $\Gamma$ be\footnote{\footnotesize{Fyodorov defines the noise-to-signal ratio in his work and denote it with $\gamma$. In this respect, $\Gamma = \gamma^{-1}$.}}
\begin{align}
	\Gamma = \frac{P_S}{\sigma^2}.
\end{align}
Fyodorov determines the \textit{overlap} $m^\star$ which is defined in \eqref{eq:m_Fyo}. The main result of \cite{fyodorov2019spin} represents the following findings:
\begin{enumerate}
	\item For linear fields, the solution is \ac{rs}. In this case, $m^\star$ increases analytically with the \ac{snr}. Hence, $m^\star\downarrow0$ when $\Gamma \downarrow 0$, and $m^\star=0$ does not occur.
	\item For higher-order fields, the solution is \ac{rsb}. In such cases, $m^\star$ still increases analytically with the \ac{snr}, as long as the field contains a linear term.
	\item For \textit{purely quadratic fields}, $m^\star$ shows a second-order \textit{phase transition}. This means that there exists a \textit{threshold \ac{snr}} $\Gamma_{\rm Th}$ at which $m^\star$ becomes absolutely zero.
\end{enumerate}

For the particular case of purely quadratic fields, Fyodorov's result is given below:
\begin{result}[Full-RSB solution for quadratic fields]
	\label{result:1} 
	Consider the quadratic Gaussian field $\mav\brc{\cdot}$ whose covariance function is given by
	\begin{align}
		\Phi\brc{ x } = \frac{x^2}{2}. \label{eq:quad_phi}
	\end{align}
	Define $\Gamma^\star$ and $\Gamma_{\rm Th}$ as
	\begin{subequations}
		\begin{align}
			\Gamma^\star &= \frac{\mu}{\brc{\mu-1}^2},\\
			\Gamma_{\rm Th} &= \frac{2\mu}{2 - \mu},
		\end{align}
	\end{subequations}
	and let $\xx$ be the solution of 
	\begin{align}
		\mu \; \xx^3+ 3\brc{\frac{1}{2} - \mu } \xx^2 + 3\brc{\mu-1} \xx + \frac{1}{\Gamma} - \frac{1}{\Gamma^\star} = 0,
	\end{align}
	which lies in the interval $\dbc{0,1}$. Then, as $N,K\uparrow \infty$ with a fixed $\mu \leq 1$, we have
	\begin{align}
		m^\star = \begin{cases}
			\sqrt{1-\dfrac{\mu }{ \brc{1 -\mu} \Gamma} } & \Gamma^\star < \Gamma \\
			\sqrt{\mu {\brc{1-\xx}^3 } } & \Gamma_{\rm Th} <\Gamma \leq  \Gamma^\star \\
			0 & \hphantom{\Gamma_{\rm Th} < } \ \Gamma \leq \Gamma_{\rm Th}\\
		\end{cases}.
	\end{align}
\end{result}

\begin{figure}
	\begin{center}
		% This file was created by matlab2tikz.
%
%The latest updates can be retrieved from
%  http://www.mathworks.com/matlabcentral/fileexchange/22022-matlab2tikz-matlab2tikz
%where you can also make suggestions and rate matlab2tikz.
%
\definecolor{mycolor1}{rgb}{0.00000,0.44700,0.74100}%
\begin{tikzpicture}

\begin{axis}[%
	width=4.1in,
	height=3in,
	at={(1.011in,0.642in)},
	scale only axis,
	xmin=-11,
	xmax=21,
	xlabel={$\log \Gamma$ in [dB]},
	xtick={-10,0,10,20},
	xticklabels={{$-10$},{$\log \Gamma_{\rm Th}$},{$10$},{$20$}},
	ymin=-.1,
	ymax=1.1,
	ytick={0,1},
	yticklabels={{$0$},{$1$}},
	ylabel={Overlap $m^\star$},
	axis background/.style={fill=white},
	legend style={at={(0.23,.97)},legend cell align=left, align=left, draw=white!15!black}
	]
\addplot [color=black, line width=1.0pt,forget plot]
  table[row sep=crcr]{%
-10	0\\
-9.9	0\\
-9.8	0\\
-9.7	0\\
-9.6	0\\
-9.5	0\\
-9.4	0\\
-9.3	0\\
-9.2	0\\
-9.1	0\\
-9	0\\
-8.9	0\\
-8.8	0\\
-8.7	0\\
-8.6	0\\
-8.5	0\\
-8.4	0\\
-8.3	0\\
-8.2	0\\
-8.1	0\\
-8	0\\
-7.9	0\\
-7.8	0\\
-7.7	0\\
-7.6	0\\
-7.5	0\\
-7.4	0\\
-7.3	0\\
-7.2	0\\
-7.1	0\\
-7	0\\
-6.9	0\\
-6.8	0\\
-6.7	0\\
-6.6	0\\
-6.5	0\\
-6.4	0\\
-6.3	0\\
-6.2	0\\
-6.1	0\\
-6	0\\
-5.9	0\\
-5.8	0\\
-5.7	0\\
-5.6	0\\
-5.5	0\\
-5.4	0\\
-5.3	0\\
-5.2	0\\
-5.1	0\\
-5	0\\
-4.9	0\\
-4.8	0\\
-4.7	0\\
-4.6	0\\
-4.5	0\\
-4.4	0\\
-4.3	0\\
-4.2	0\\
-4.1	0\\
-4	0\\
-3.9	0\\
-3.8	0\\
-3.7	0\\
-3.6	0\\
-3.5	0\\
-3.4	0\\
-3.3	0\\
-3.2	0\\
-3.1	0\\
-3	0\\
-2.9	0\\
-2.8	0\\
-2.7	0\\
-2.6	0\\
-2.5	0\\
-2.4	0\\
-2.3	0\\
-2.2	0\\
-2.1	0\\
-2	0\\
-1.9	0\\
-1.8	0\\
-1.7	0\\
-1.6	0\\
-1.5	0\\
-1.4	0\\
-1.3	0\\
-1.2	0\\
-1.1	0\\
-1	0\\
-0.9	0\\
-0.799999999999999	0\\
-0.699999999999999	0\\
-0.6	0\\
-0.5	0\\
-0.399999999999999	0\\
-0.299999999999999	0\\
-0.199999999999999	0\\
-0.0999999999999996	0\\
0	7.64116718921488e-24\\
0.100000000000001	0.0368723323078728\\
0.200000000000001	0.0626766521139333\\
0.300000000000001	0.0855207494734927\\
0.4	0.106604013214469\\
0.5	0.126438433063333\\
0.600000000000001	0.145304689616108\\
0.700000000000001	0.163378494700828\\
0.800000000000001	0.180779499462677\\
0.9	0.197594097836368\\
1	0.213887436594333\\
1.1	0.229710315709995\\
1.2	0.245103421730988\\
1.3	0.260100061904965\\
1.4	0.27472800470505\\
1.5	0.289010761855218\\
1.6	0.302968507296747\\
1.7	0.316618752192529\\
1.8	0.32997685128706\\
1.9	0.343056389794106\\
2	0.355869483813745\\
2.1	0.368427016970534\\
2.2	0.380738829213149\\
2.3	0.39281386918862\\
2.4	0.404660318503713\\
2.5	0.416285694021807\\
2.6	0.427696932806825\\
2.7	0.438900463217426\\
2.8	0.449902264843853\\
2.9	0.460707919378885\\
3	0.471322654063662\\
3.1	0.481751379007313\\
3.2	0.491998719417434\\
3.3	0.502069043575929\\
3.4	0.511966487236641\\
3.5	0.521694974996851\\
3.6	0.531258239096141\\
3.7	0.540659836017341\\
3.8	0.549903161200997\\
3.9	0.558991462133546\\
4	0.567927850027677\\
4.1	0.57671531027921\\
4.2	0.585356711856724\\
4.3	0.593854815756873\\
4.4	0.602212282639048\\
4.5	0.610431679736832\\
4.6	0.618515487130199\\
4.7	0.626466103450967\\
4.8	0.634285851084364\\
4.9	0.641976980921396\\
5	0.649541676709693\\
5.1	0.656982059044565\\
5.2	0.664300189036871\\
5.3	0.671498071689908\\
5.4	0.678577659013695\\
5.5	0.685540852901795\\
5.6	0.692389507792881\\
5.7	0.699125433136848\\
5.8	0.705750395683037\\
5.9	0.712266121606277\\
6	0.718674298484777\\
6.1	0.724976577142435\\
6.2	0.731174573366833\\
6.3	0.737269869513085\\
6.4	0.743264016002647\\
6.5	0.749158532725365\\
6.6	0.75495491035219\\
6.7	0.760654611565337\\
6.8	0.76625907221199\\
6.9	0.771769702387149\\
7	0.777187887450657\\
7.1	0.782514988983045\\
7.2	0.787752345684417\\
7.3	0.792901274220222\\
7.4	0.797963070017453\\
7.5	0.802939008014528\\
7.6	0.807830343367824\\
7.7	0.812638312117612\\
7.8	0.8173632108876\\
7.9	0.821972007143926\\
8	0.826451064194231\\
8.1	0.830804836419182\\
8.2	0.835037574192179\\
8.3	0.839153336623503\\
8.4	0.843156003270341\\
8.5	0.847049284915258\\
8.6	0.850836733503804\\
8.7	0.854521751321562\\
8.8	0.858107599481975\\
8.9	0.861597405788442\\
9	0.864994172027284\\
9.1	0.868300780742206\\
9.2	0.871520001535546\\
9.3	0.874654496936987\\
9.4	0.877706827876269\\
9.5	0.880679458792816\\
9.6	0.883574762411966\\
9.7	0.886395024214644\\
9.8	0.889142446624735\\
9.9	0.891819152936204\\
10	0.894427190999916\\
10.1	0.896968536688349\\
10.2	0.899445097154747\\
10.3	0.901858713901796\\
10.4	0.904211165673604\\
10.5	0.906504171183592\\
10.6	0.90873939168982\\
10.7	0.910918433428331\\
10.8	0.913042849914212\\
10.9	0.91511414411929\\
11	0.917133770534672\\
11.1	0.919103137125677\\
11.2	0.921023607186137\\
11.3	0.92289650109848\\
11.4	0.924723098005561\\
11.5	0.926504637399712\\
11.6	0.928242320634118\\
11.7	0.929937312361217\\
11.8	0.931590741902516\\
11.9	0.933203704553871\\
12	0.934777262830008\\
12.1	0.936312447651799\\
12.2	0.937810259479565\\
12.3	0.939271669395432\\
12.4	0.940697620137613\\
12.5	0.942089027089229\\
12.6	0.943446779224178\\
12.7	0.944771740012343\\
12.8	0.946064748286313\\
12.9	0.947326619071652\\
13	0.948558144382592\\
13.1	0.949760093984955\\
13.2	0.950933216127964\\
13.3	0.952078238246501\\
13.4	0.953195867635307\\
13.5	0.954286792096489\\
13.6	0.955351680561649\\
13.7	0.956391183689855\\
13.8	0.957405934442613\\
13.9	0.958396548636929\\
14	0.959363625477483\\
14.1	0.960307748068891\\
14.2	0.961229483908961\\
14.3	0.962129385363822\\
14.4	0.96300799012572\\
14.5	0.963865821654282\\
14.6	0.964703389601951\\
14.7	0.965521190224305\\
14.8	0.966319706775911\\
14.9	0.96709940989232\\
15	0.967860757958826\\
15.1	0.968604197466503\\
15.2	0.969330163356098\\
15.3	0.970039079350246\\
15.4	0.970731358274506\\
15.5	0.971407402367673\\
15.6	0.972067603581786\\
15.7	0.972712343872258\\
15.8	0.973341995478512\\
15.9	0.973956921195503\\
16	0.974557474636467\\
16.1	0.975144000487261\\
16.2	0.975716834752589\\
16.3	0.976276304994442\\
16.4	0.976822730563046\\
16.5	0.977356422820576\\
16.6	0.977877685357943\\
16.7	0.97838681420487\\
16.8	0.978884098033531\\
16.9	0.979369818355972\\
17	0.979844249715547\\
17.1	0.980307659872572\\
17.2	0.98076030998442\\
17.3	0.981202454780227\\
17.4	0.981634342730428\\
17.5	0.982056216211283\\
17.6	0.982468311664561\\
17.7	0.982870859752575\\
17.8	0.983264085508694\\
17.9	0.983648208483511\\
18	0.984023442886793\\
18.1	0.984389997725368\\
18.2	0.984748076937079\\
18.3	0.985097879520932\\
18.4	0.985439599663562\\
18.5	0.985773426862149\\
18.6	0.986099546043878\\
18.7	0.986418137682072\\
18.8	0.986729377909096\\
18.9	0.987033438626138\\
19	0.987330487609958\\
19.1	0.987620688616714\\
19.2	0.987904201482943\\
19.3	0.988181182223791\\
19.4	0.988451783128576\\
19.5	0.988716152853771\\
19.6	0.98897443651347\\
19.7	0.989226775767441\\
19.8	0.989473308906805\\
19.9	0.98971417093744\\
20	0.989949493661167\\
};

\addplot [color=black, dotted, forget plot]
table[row sep=crcr]{%
	0	0\\
	0	-1\\
};

\addplot [color=black, dotted, forget plot]
table[row sep=crcr]{%
	-11	1\\
	21	1\\
};

\end{axis}
\end{tikzpicture}%
	\end{center}
	\caption{Overlap against the \ac{snr} for a pure quadratic Gaussian random field with covariance function $\Phi\brc{x} = x^2/2$ and $\mu = 2/3$. The curve is sketched using the result derived by Fyodorov in \cite{fyodorov2019spin}.}
	\label{fig:0}
\end{figure}
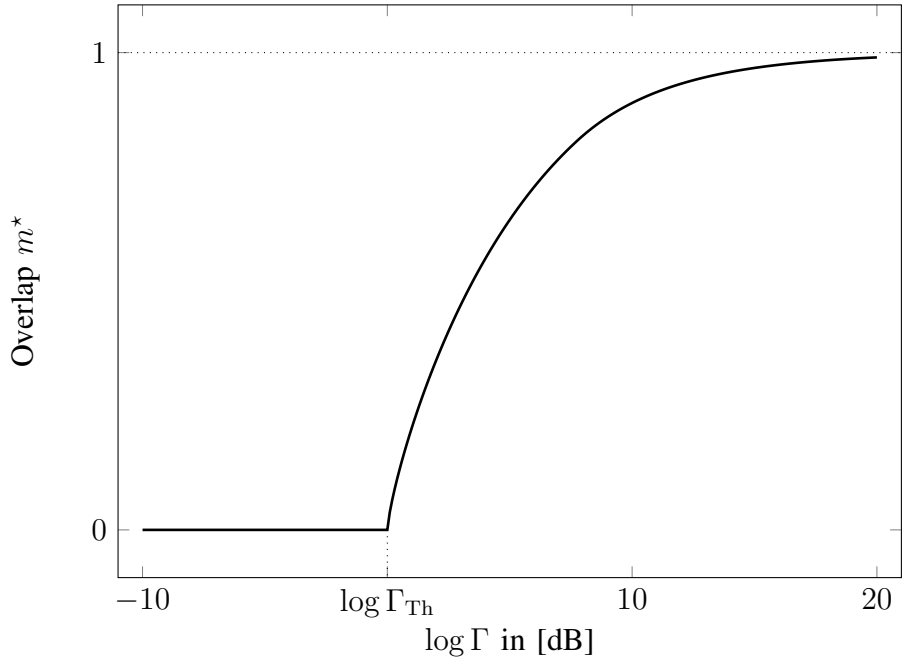

Using Result~\ref{result:1}, we plot $m^\star$ against the \ac{snr} in Fig.~\ref{fig:0}. It is observed that as the \ac{snr} goes below $\Gamma_{\rm Th}$, the overlap $m^\star$ becomes zero. This key observation makes the connection between the work by Fyodorov and Wyner's results on the wiretap setting; nevertheless, this connection was left unaddressed in \cite{fyodorov2019spin}.

\section{Derivations via the Replica Method}
\label{app:2}
We start the derivations by considering the scaled version of free energy, i.e., 
\begin{align}
	-\beta \maf \brc{\beta} = \frac{1}{N} \Ex{ \log \maz_{\beta} \brc{ \byy , \mav} }{}.
\end{align}
Using Riesz's identity \cite{riesz1930valeurs}, we have 
\begin{align}
	\frac{1}{N} \Ex{ \log \maz_{\beta} \brc{ \byy , \mav} }{} = \frac{1}{N}  \lim_{t\downarrow 0} \frac{\partial}{\partial t} \log \Ex{ \dbc{\maz_{\beta} \brc{ \byy , \mav} }^t }{}.
\end{align}

Riesz's identity reduces the logarithmic integration into the derivation of $t$-th moment, i.e., 
\begin{align}
	\maz_{\beta} \brc{t} &\coloneqq \Ex{ \dbc{\maz_{\beta} \brc{ \byy , \mav} }^t }{}.
\end{align}
Nevertheless, it is still a challenging task to determine a real-valued moment. We hence calculate the right hand side for integer $t$ as an analytic term in $t$ and continue it to the real axis by assuming the so-called \textit{replica continuity}. This is a classical practice in the replica analyses; see \cite{krzakala2021statistical,bereyhi2020thesis,merhav2010statistical,mezard1987spin,mezard2009information} for detailed discussions on the replica continuity assumption. The replica continuity assumption leads to
\begin{subequations}
	\begin{align}
	\maz_{\beta} \brc{t} 	&= \Ex{ \dbc{\sum_{\bu\in\set{\pm 1}^K} \exp\set{- \beta \mae\brc{\bu\vert \byy , \mav} } }^t }{},\\
	&= \Ex{ \sum_{\bu_1\in\set{\pm 1}^K} \ldots \sum_{\bu_t\in\set{\pm 1}^K} \prod_{a=1}^t \exp\set{- \beta \mae\brc{\bu_a\vert \byy , \mav} }  }{},\\
	&= \Ex{ \sum_{\set{\bu_a}} \exp\set{- \beta \sum_{a=1}^t  \mae\brc{\bu_a\vert \byy , \mav} }  }{},\\
	&= \Ex{ \sum_{\set{\bu_a}} \exp\set{-  \frac{\beta}{2}  \sum_{a=1}^t \norm{\byy - \mav\brc{\bu_a}}^2 }  }{\byy, \mav},\\
	&= \Ex{ \sum_{\set{\bu_a}} \exp\set{-  \frac{\beta}{2}  \sum_{a=1}^t \norm{\mav\brc{\bs} + \bww - \mav\brc{\bu_a} }^2 }  }{\bs, \bww, \mav},
\end{align}
\end{subequations}
where we define the notation 
	\begin{align}
\set{\bu_a} = \brc{\bu_1, \ldots,\bu_t} \in \prod_{a=1}^t \set{\pm 1}^K.
\end{align}

The sequence $\bs$ is a random sequence uniformly distributed on $\dbc{2^{NR}}$. Using the independency of $\bww$, $\mav$ and $\bs$, the expectation is written as
\begin{subequations}
	\begin{align}
	\maz_{\beta} \brc{t} &= \Ex{ \Ex{ \sum_{\set{\bu_a}} \exp\set{-  \frac{\beta}{2}  \sum_{a=1}^t \norm{\mav\brc{\bs} + \bww - \mav\brc{\bu_a} }^2 }  }{\bs}}{\bww, \mav}\\
	&= \Ex{  \sum_{\set{\bu_a}}  \sum_{ \bu_{t+1}\in\set{\pm 1}^K}  2^{-NR} \exp\set{-  \frac{\beta}{2}  \sum_{a=1}^t \norm{\mav\brc{\bu_{t+1}} + \bww - \mav\brc{\bu_a} }^2 }  }{\bww, \mav}\\
%	&= 2^{-NR}  \sum_{\set{\bu_{t+1},\bu_a}} \Ex{ \exp\set{-  \frac{\beta}{2}  \sum_{a=1}^t \norm{\mav\brc{\bu_{t+1}} + \bww - \mav\brc{\bu_a} }^2 }  }{\bww, \mav}\\
	&= 2^{-NR}  \sum_{\set{\bu_{t+1},\bu_a}} \Ex{ \exp\set{-  \frac{\beta}{2}  \sum_{a=1}^t \sum_{n=1}^{N} \brc{\mav_n\brc{\bu_{t+1}} + \rmw_n - \mav_n\brc{\bu_a} }^2 }  }{\bww, \mav}
\end{align}
\end{subequations}
We now use the fact that outputs of the random field and Gaussian noise entries are \ac{iid}. We hence can factorize the expectation as
\begin{align}
	\maz_{\beta} \brc{t} &= 2^{-NR}  \sum_{\set{\bu_{t+1},\bu_a}} \prod_{n=1}^{N} \Ex{ \exp\set{- \frac{\beta}{2}  \sum_{a=1}^t  \brc{\mav_n\brc{\bu_{t+1}} + \rmw_n - \mav_n\brc{\bu_a} }^2 }  }{\rmw_n, \mav_n}.
\end{align}
We note that the noise process and the field are both Gaussian. Using the standard Gaussian integration, it is shown that \cite{fyodorov2019spin}
\begin{align}
	\Ex{ \exp\set{- \frac{\beta}{2}  \sum_{a=1}^t  \brc{\mav_n\brc{\bu_{t+1}} + \rmw_n - \mav_n\brc{\bu_a} }^2 }  }{\rmw_n, \mav_n} = \frac{1}{\sqrt{\det\set{\mG\brc{\inner{\mU; \mU} }} }}
\end{align}
where $\mG\brc{\cdot} $ is a $t \times t$ matrix whose element on row $a$ and column $b$ is
\begin{align}
	\dbc{\mG\brc{ \inner{\mU; \mU} }}_{a,b} = \delta_{a,b} + \beta\dbc{ \sigma^2 
		+  \Phi \brc{1} \hspace*{-1mm} + \hspace*{-1mm} \Phi \brc{{\inner{\bu_a;\bu_b} }} \hspace*{-1mm} - \hspace*{-1mm} \Phi \brc{{\inner{\bu_a;\bu_{t+1}} }} \hspace*{-1mm} - \hspace*{-1mm} \Phi \brc{{\inner{\bu_b;\bu_{t+1}} }} }.\label{eq:G_U}
\end{align}
Here, we use the notation $\mU \in \set{\pm 1}^{K\times\brc{t+1}}$ to shorten the sequence of \textit{replicas}, i.e., $\bu_1, \ldots,\bu_{t+1}$, as follows
\begin{align}
	\mU = \dbc{\bu_1,\ldots , \bu_{t+1}}.
\end{align}

By substituting, the moment function is finally written as
\begin{align}
	\maz_{\beta} \brc{t} &= 2^{-NR}  \sum_{\mU} \brc{\det\set{\mG\brc{\inner{\mU;\mU} }}}^{-N/2}. \label{eq:Z_11}
\end{align}
We now rewrite the summation over matrices of replicas to a summation over \textit{replica correlation matrices} by dividing the space of all replicas into a set of \textit{sub-shells}: We define the sub-shell $\mas\brc{\mQ}$ for a \textit{correlation matrix} $\mQ \in \dbc{-1,1}^{\brc{t+1} \times \brc{t+1}}$ as
\begin{align}
	\mas\brc{\mQ} = \set{ \mU \in \set{\pm 1}^{K\times\brc{t+1}} : \inner{\mU;\mU} = \mQ}. \label{eq:Subshell}
\end{align}
Note that $\mQ$ is a symmetric matrix with unit diagonal, i.e.,
\begin{subequations}
	\begin{align}
	\dbc{\mQ}_{a,a} &= 1, \\
	\dbc{\mQ}_{a,b} &= \dbc{\mQ}_{b,a}.
\end{align}
\end{subequations}
We now can write
\begin{align}
	\maz_{\beta} \brc{t} &= 2^{-NR}  \int \dif \mQ \exp\set{-N\mai\brc{\mQ} } \brc{\det\set{\mG\brc{\mQ}}}^{-N/2}. \label{eq:Z_1}
\end{align}
where $\exp\set{-N\mai\brc{\mQ} }$ is the sub-shell density, defined as
\begin{align}
	\exp\set{-N\mai\brc{\mQ} } = \sum_{\mU} \prod_{a=1}^{t+1} \prod_{b=a+1}^{t+1} \delta \brc{\inner{\bu_a;\bu_b} - Q_{a,b}}
\end{align}
and
\begin{align}
	\dif \mQ = \prod_{a<b} \dif Q_{a,b}.
\end{align}
The moment is hence derived by finding the sub-shell density and substituting it in \eqref{eq:Z_1}.

\subsection{Deriving the Sub-Shell Density}
To calculate the sub-shell density, we utilize the inverse Laplace transform of the Dirac impulse function which indicates
\begin{align}
	\delta\brc{x} = \int_{\setJ} \frac{\dif L}{2\pi \rmj} \exp\set{Lx}
\end{align}
with $\setJ$ being the $\Re\set{L}=C$ line in the complex plane for some constant $C$. Substituting it into the definition of the sub-shell density, we have
\begin{subequations}
	\begin{align}
	\exp\set{-N\mai\brc{\mQ} } &= \sum_{\mU} \prod_{a<b}  \delta \brc{\inner{\bu_a;\bu_b} -  Q_{a,b}}\\
	&= \sum_{\mU} \prod_{a<b}  \int_{\setJ} \frac{\dif L_{a,b}}{2\pi \rmj} \exp\set{L_{a,b}\brc{\inner{\bu_a;\bu_b} -  Q_{a,b}}}\\
	&= \int \dif \mL \sum_{\mU}  \exp\set{\sum_{a<b} L_{a,b}\brc{\inner{\bu_a;\bu_b} -  Q_{a,b}}} \label{eq:subshell}
\end{align}
\end{subequations}
Here, we define the matrix $\mL\in \setR^{\brc{t+1} \times \brc{t+1}}$ to be a symmetric matrix with unit diagonal elements and off-diagonal elements $L_{a,b} = L_{b,a}$ and
\begin{align}
	\dif \mL = \prod_{a<b} \frac{\dif L_{a,b}}{2\pi \rmj}.
\end{align}
We refer to $\mL$ as the \textit{frequency-domain correlation matrix}.

Following the structure of matrices $\mL$ and $\mQ$, we have
\begin{align}
	\tr{\mL \mQ }= t+1 + 2 \sum_{a<b} L_{a,b} Q_{a,b}
\end{align}
By noting that $\mU^\trp \mU$ is a symmetric matrix with diagonals being all $K$, we can further write
\begin{align}
	\tr{\mL \mU^\trp \mU}  = K\dbc{\brc{t+1} + 2 \sum_{a<b} L_{a,b} \inner{\bu_a;\bu_b}}.
\end{align}
This concludes that
\begin{subequations}
	\begin{align}
	\sum_{a<b} L_{a,b}\brc{\inner{\bu_a;\bu_b} - Q_{a,b}} &= \frac{1}{2} \tr{\mL \mU^\trp \mU} - \frac{K}{2} \tr{\mL \mQ },\\
	&= \frac{1}{2} \tr{ \mU \mL \mU^\trp} - \frac{K}{2} \tr{\mL \mQ }.
\end{align}
\end{subequations}

By substituting into the final expression for the sub-shell density, i.e., \eqref{eq:subshell}, we have
\begin{subequations}
	\begin{align}
	\exp\set{-N\mai\brc{\mQ} } &= \int \dif \mL \sum_{\mU}  \exp\set{\frac{1}{2} \tr{ \mU \mL \mU^\trp} - \frac{K}{2} \tr{\mL \mQ } },\\
	&= \int \dif \mL \exp\set{ -\frac{K}{2} \tr{\mL \mQ } }   \sum_{\mU}  \exp\set{\frac{1}{2} \tr{ \mU \mL \mU^\trp}  }.
\end{align}
\end{subequations}

We now perform a simple variable exchange: Let $\mV = \mU^\trp$ and define $\bvv_k \in \set{\pm 1}^{t+1}$ to be the $k$-th column in $\mV$. We then have
\begin{subequations}
	\begin{align}
	\sum_{\mU}  \exp\set{\frac{1}{2} \tr{ \mU \mL \mU^\trp}  } &= \sum_{\mV}  \exp\set{\frac{1}{2} \tr{ \mV^\trp \mL \mV}  }, \\
	&= \sum_{\mV}  \exp\set{\frac{1}{2} \sum_{ k=1}^K  \bvv_k^\trp \mL \bvv_k}, \\
	&= \sum_{\mV} \prod_{k=1}^K  \exp\set{\frac{1}{2}  \bvv_k^\trp \mL \bvv_k}, \\
	&= \brc{\sum_{\bvv\in \set{\pm 1}^{t+1}}  \exp\set{\frac{1}{2}  \bvv^\trp \mL \bvv} }^K.
\end{align}
\end{subequations}
Defining $\mam \brc{\cdot}$ in terms of the frequency-domain correlation matrix as
\begin{align}
	\mam \brc{\mL} = \log \sum_{\bvv}  \exp\set{\frac{1}{2}  \bvv^\trp \mL \bvv},
\end{align}
we can finally write the summation over $\mU$ as
\begin{align}
	\sum_{\mU}  \exp\set{\frac{1}{2} \tr{ \mU \mL \mU^\trp}  } = \exp\set{ K  \mam \brc{\mL} }.
\end{align}
This concludes that
\begin{align}
	\exp\set{-N\mai\brc{\mQ} } = \int \dif \mL \exp\set{ -K \brc{\frac{1}{2} \tr{\mL \mQ }  -  \mam \brc{\mL} } }.
\end{align}

\subsection{Moment function in the Thermodynamic Limit}
Substituting the sub-shell density into the expression for the moment function $\maz_{\beta} \brc{t}$, we have
\begin{subequations}
	\begin{align}
	\maz_{\beta} \brc{t} &= 2^{-NR}  \int \dif \mQ \dif \mL \exp\set{ -K \brc{\frac{1}{2} \tr{\mL \mQ }  -  \mam \brc{\mL} } } \brc{\det\set{\mG\brc{\mQ}}}^{-N/2},\\
	&= 2^{-NR}  \int \dif \mQ \dif \mL \exp\set{ -K \brc{\frac{1}{2} \tr{\mL \mQ }  -  \mam \brc{\mL} } - \frac{N}{2} \log{ \det\set{\mG\brc{\mQ}} } }, \\
	&= 2^{-NR}  \int \dif \mQ \dif \mL \exp\set{ -N \brc{\frac{1}{2} \log{ \det\set{\mG\brc{\mQ}} } + \frac{R}{2} \tr{\mL \mQ }  - R \mam \brc{\mL} } }. 
\end{align}
\end{subequations}

We now send the spin glass to the thermodynamic limit, i.e., $N\uparrow \infty$. This follows from the slightly non-rigorous assumption of \textit{limit exchange}: We assume that the limits with respect to $N$ and $t$ exchange. Though limit exchange is not necessarily a valid assumption, it is a commonly accepted trick in the replica analysis to assume this particular exchange is valid \cite{mezard2009information}. After exchanging the limit, we deal with the following limit
\begin{align}
\lim_{N\uparrow \infty}	\int \dif \mQ \dif \mL \exp\set{ -N F\brc{\mQ,\mL} },
\end{align}
where the function $F\brc{\mQ,\mL}$ is 
\begin{align}
F\brc{\mQ,\mL} \coloneqq \frac{1}{2} \log{ \det\set{\mG\brc{\mQ}} } + \frac{R}{2} \tr{\mL \mQ }  - R \mam \brc{\mL}. \label{eq:F_QL}
\end{align}
This is a classical limit rise in large-deviations theory \cite{stroock2012introduction,varadhan1984large,guionnet2002large}. Using the saddle-point method and assuming that the exponent function has a unique minimum\footnote{It does not lose the generality of the derivations.}, we have
\begin{align}
\int \dif \mQ \dif \mL \exp\set{ -N F\brc{\mQ,\mL} }
	\doteq  \exp\set{ -N F\brc{\mQ^\star,\mL^\star} },
\end{align}
where $\doteq$ denotes the equivalence in logarithmic scale, and $\brc{\mQ^\star,\mL^\star}$ is the \textit{saddle-point} of the exponent function.

We now find the saddle-point of $F\brc{\mQ,\mL}$: To this end, we first let
\begin{align}
	\frac{\partial }{ \partial \mQ } \set{ {\frac{1}{2} \log{ \det\set{\mG\brc{\mQ}} } + \frac{R}{2} \tr{\mL \mQ }  - R \mam \brc{\mL} } } = \boldsymbol{0}_{t+1},
\end{align}
which concludes that
\begin{align}
	2 R \mL = -	\frac{\partial }{ \partial \mQ } \log{ \det\set{\mG\brc{\mQ}} }.
\end{align}
Using the fact that $\mG\brc{\mQ}$ is symmetric for a symmetric $\mQ$, we can write that
\begin{align}
	\frac{\partial }{ \partial \mQ } \log{ \det\set{\mG\brc{\mQ}} } = 2 \beta \Phi'\brc{\mQ} \odot \tilde{\mG}^{-1}\brc\mQ
\end{align}
where $\odot$ is the Hadamard product and $ \tilde{\mG}^{-1}\brc\mQ$ is a $\brc{t+1} \times \brc{t+1}$ matrix defined in terms of the inverse of $\mG \brc\mQ$ as
\begin{align}
	\tilde{\mG}^{-1}\brc\mQ = \begin{bmatrix}
		\mG^{-1} \brc\mQ + c \mI_t &-\mG^{-1} \brc\mQ \mone_{t\times 1} \\
		-\mone_{1\times t} \mG^{-1} \brc\mQ &\hat{c}
	\end{bmatrix},
\end{align}
for some $c$ and $\hat{c}$. This concludes that
\begin{align}
	\mL^\star = -\frac{\beta}{R} \Phi'\brc{\mQ^\star} \odot \tilde{\mG}^{-1}\brc{\mQ^\star},
\end{align}
where we choose $c$ and $\hat{c}$, such that $L_{a,a}=1$ for $a=1,\ldots,t+1$. We further let
\begin{align}
	\frac{\partial }{ \partial \mL } \set{ {\frac{1}{2} \log{ \det\set{\mG\brc{\mQ}} } + \frac{R}{2} \tr{\mL \mQ }  - R \mam \brc{\mL} } } = \boldsymbol{0}_{t+1}
\end{align}
which concludes that
\begin{align}
	\mQ^\star %&= 	\frac{\partial }{ \partial \mL } \mam \brc{\mL} 
	= \sum_{\bvv} \frac{ \displaystyle \bvv \bvv^\trp \exp\set{\frac{1}{2}  \bvv^\trp \mL^\star \bvv} }{ \displaystyle \sum_{\bvv}  \exp\set{\frac{1}{2}  \bvv^\trp \mL^\star \bvv} }.
	\label{eq:Q_fix}
\end{align}

The fixed-point equation \eqref{eq:Q_fix} indicates that $\mQ^\star$ is in fact the average correlation among the decoupled replicas\footnote{This is why $\mQ^\star$  is often called the replica correlation matrix.}. To see this point clearly, let us define the \textit{decoupled distribution of replicas} as 
\begin{align}
	q_V\brc{\bvv \mL^\star} \coloneqq \frac{ \displaystyle \exp\set{\frac{1}{2}  \bvv^\trp \mL^\star \bvv} }{ \displaystyle \sum_{\bvv}  \exp\set{\frac{1}{2}  \bvv^\trp \mL^\star \bvv} }. \label{eq:q_v}
\end{align}
This is a Boltzmann distribution which is defined for a thermodynamic system with reduced dimension $t+1$. Using this notation, we can write
\begin{align}
	\mQ^\star %&= 	\frac{\partial }{ \partial \mL } \mam \brc{\mL} 
	= \Ex{ \bvv \bvv^\trp }{q_V}
\end{align}
which is the correlation matrix of the vector of replicas $\bvv$.

\subsection{Determining the Free Energy}
%We finally can write in the large system limit that
%Remember that
As we exchange the limit, we can write
\begin{align}
	- \beta \maf \brc{\beta} =  \lim_{t\downarrow 0} \frac{\partial}{\partial t}  \lim_{N\uparrow \infty} \frac{1}{N} \log \maz_{\beta} \brc{t}.
\end{align}
We now replace the final expression for the moment function, i.e., 
\begin{align}
	\maz_{\beta} \brc{t} &\doteq 2^{-NR}  \exp\set{ -N \min_{\mL, \mQ^\star} \brc{\frac{1}{2} \log{ \det\set{\mG\brc{\mQ^\star}} } + \frac{R}{2} \tr{\mL^\star \mQ^\star }  - R \mam \brc{\mL^\star} } },
\end{align}
and use the asymptotic equivalence in the logarithmic scale to conclude 
%Since we have already assumed limit exchange, we can write
\begin{align}
	\beta \maf \brc{\beta} = \lim_{t\downarrow 0} \frac{\partial}{\partial t}  \brc{ R \log 2 +  \frac{1}{2} \log{ \det\set{\mG\brc{\mQ^\star}} } + \frac{R}{2} \tr{\mL \mQ^\star }  - R \mam \brc{\mL^\star} }
\end{align}
for $\mL^\star$ and $\mQ^\star$ which are given by the saddle-point equations.%whose diagonal entries are one and whose off-diagonal entries are coupled via
%\begin{align}
%	\mL &= - \frac{\beta}{R} \Phi'\brc{\mQ} \odot \tilde{\mG}^{-1}\brc\mQ\\
%	\mQ &= \sum_{\bvv} \frac{ \bvv \bvv^\trp \exp\set{\frac{1}{2}  \bvv^\trp \mL \bvv} }{ \sum_{\bvv}  \exp\set{\frac{1}{2}  \bvv^\trp \mL \bvv} }
%\end{align}

\subsection{Replica Symmetric Solution: Fixed-Point Equations}
We now assume replica symmetry. This means that the saddle-point matrix $\mQ^\star$ is a symmetric matrix of the following form
\begin{subequations}
	\begin{alignat}{2}
	&\dbc{\mQ^\star}_{a,a} =1  && 1\leq a \leq t+1 ,\\
	&\dbc{\mQ^\star}_{a,b}  = Q \qquad&&  1\leq a < b \leq t,\\
	&\dbc{\mQ^\star}_{a,t+1} =m &&  1\leq a \leq t,
\end{alignat}
\end{subequations}
where $m,Q\in \dbc{0,1}$.  In this case, $\mG\brc{\mQ^\star}$ is of the following form
\begin{align}
	\mG\brc{\mQ^\star} = g\brc{Q} \mI_t + f\brc{Q,m} \mone_t,
\end{align}
where we define
\begin{subequations}
	\begin{align}
	g\brc{Q} &= 1+\beta \dbc{ \Phi \brc{1} - \Phi \brc{Q}},\\
	f\brc{Q,m} &= \beta \dbc{ \sigma^2 + \Phi \brc{1} + \Phi \brc{Q} - 2 \Phi\brc{m} }.
\end{align}
\end{subequations}
As a result, the inverse matrix is given by
\begin{align}
	\mG^{-1}\brc{\mQ} = \frac{1}{g\brc{Q}} \brc{ \mI_t - \frac{f\brc{Q,m}}{ g\brc{Q} + t f\brc{Q,m}  } \mone_t}.
\end{align}

From the saddle-point equations, we conclude that the frequency-domain matrix at the saddle-point exhibits the same symmetry. Namely, its entries are given by 
\begin{subequations}
	\begin{alignat}{2}
		&\dbc{\mL^\star}_{a,a}=1  && 1\leq a \leq t+1,\\
	&\dbc{\mL^\star}_{a,b} = \frac{\beta}{R} \Phi'\brc{Q} \frac{f\brc{Q,m}}{g\brc{Q} \brc{ g\brc{Q} + t f\brc{Q,m} } }  \qquad &&1\leq a < b \leq t,\\
	&\dbc{\mL^\star}_{a,t+1} =\frac{\beta}{R} \Phi'\brc{m} \frac{1}{g\brc{Q} + t f\brc{Q,m}}    &&1\leq a \leq t.
\end{alignat}
\end{subequations}
By sending $t \downarrow 0$, the frequency-domain correlation matrix reduces to an \ac{rs} matrix with unit diagonal and non-diagonal entries 
\begin{align}
 L = \frac{\beta}{R} \Phi'\brc{Q} \frac{f\brc{Q,m}}{g^2\brc{Q}}
\end{align}
for first $t$ columns and
\begin{align}
 E =  \frac{\beta}{R} \Phi'\brc{m} \frac{1}{g\brc{Q}} 
\end{align}
for the last column.

To specify $Q$ and $m$, we derive the second saddle-point equation in terms of $Q$ and $m$. We start by considering the fixed-point equation, i.e., \eqref{eq:Q_fix}. 
%\begin{align}
%	\mQ^\star = \sum_{\bvv} \frac{ \displaystyle \bvv \bvv^\trp \exp\set{\frac{1}{2}  \bvv^\trp \mL^\star \bvv} }{ \displaystyle \sum_{\bvv}  \exp\set{\frac{1}{2}  \bvv^\trp \mL^\star \bvv} }
%\end{align}
From the structure of $\mQ^\star$, we conclude that
\begin{subequations}
	\begin{align}
	t\brc{t-1}Q + t &= \sum_{\bvv} \frac{ \displaystyle \brc{\sum_{a=1}^t \vv_a }^2 \exp\set{\frac{1}{2}  \bvv^\trp \mL \bvv} }{ \displaystyle \sum_{\bvv}  \exp\set{\frac{1}{2}  \bvv^\trp \mL \bvv} }\\
	m t &= \sum_{\bvv} \frac{\displaystyle \brc{\sum_{a=1}^t \vv_a }\vv_{t+1} \exp\set{\frac{1}{2}  \bvv^\trp \mL \bvv} }{\displaystyle \sum_{\bvv}  \exp\set{\frac{1}{2}  \bvv^\trp \mL \bvv} }
\end{align}
\end{subequations}
where the above equations are derived by summing up the $t\times t$ inner block of $\mQ$ and the last column. From the structure on $\mL$, we further have
\begin{align}
	\bvv^\trp \mL \bvv = t \brc{1-L} + 1 + L\brc{\sum_{a=1}^t \vv_a}^2 + 2 E  \brc{\sum_{a=1}^t \vv_a} \vv_{t+1}
\end{align}
This is then straightforward to see that 
\begin{subequations}
	\begin{align}
	\sum_{\bvv} \frac{ \displaystyle \brc{\sum_{a=1}^t \vv_a }^2 \exp\set{\frac{1}{2}  \bvv^\trp \mL \bvv} }{ \displaystyle \sum_{\bvv}  \exp\set{\frac{1}{2}  \bvv^\trp \mL \bvv} } &=  t + 2 \frac{\partial}{\partial L}  \log \sum_{\bvv}  \exp\set{\frac{1}{2}  \bvv^\trp \mL \bvv}, \\
	\sum_{\bvv} \frac{ \displaystyle \brc{\sum_{a=1}^t \vv_a }\vv_{t+1} \exp\set{\frac{1}{2}  \bvv^\trp \mL \bvv} }{ \displaystyle \sum_{\bvv}  \exp\set{\frac{1}{2}  \bvv^\trp \mL \bvv} } &=  \frac{\partial}{\partial E}  \log \sum_{\bvv}  \exp\set{\frac{1}{2}  \bvv^\trp \mL \bvv}.
\end{align}
\end{subequations}
By substituting, we conclude the following two fixed-point equations
\begin{subequations}
	\begin{align}
	t\brc{t-1}Q &=2  \frac{\partial}{\partial L}  \log \sum_{\bvv}  \exp\set{\frac{1}{2}  \bvv^\trp \mL \bvv} \\
	m t &= \frac{\partial}{\partial E}  \log \sum_{\bvv}  \exp\set{\frac{1}{2}  \bvv^\trp \mL \bvv}
\end{align}
\label{eq:Fix_Final_1}
\end{subequations}

We proceed the derivations by calculating the right hand side function. Namely, we write
\begin{subequations}
	\begin{align}
	\mam_{\rm e}\brc{L,E} &\coloneqq   \sum_{\bvv}  \exp\set{\frac{1}{2}  \bvv^\trp \mL \bvv}\\
	&=   \sum_{\bvv}  \exp\set{\frac{t \brc{1-L} + 1}{2} + \frac{L}{2}\brc{\sum_{a=1}^t \vv_a}^2 +  E \vv_{t+1} \sum_{a=1}^t \vv_a}
\end{align}
\end{subequations}
By using Gaussian integration technique, we can further write
\begin{align}
	\exp\set{ \frac{L}{2}\brc{\sum_{a=1}^t \vv_a}^2 } = \int \md z \exp\set{ \sqrt{L} \brc{\sum_{a=1}^t \vv_a } z },
\end{align}
where we define the notation
\begin{align}
	\int \md z  = \int_{-\infty}^{+\infty} \frac{\dif z}{\sqrt{2\pi} } \exp\set{ -\frac{z^2}{2} }.
\end{align}
This concludes that
\begin{subequations}
	\begin{align}
	\mam_{\rm e}\brc{L,E} &=   \sum_{\bvv} \int \md z \exp\set{\frac{t \brc{1-L} + 1}{2} + \sqrt{L} \brc{\sum_{a=1}^t \vv_a } z +  E \vv_{t+1} \sum_{a=1}^t \vv_a}\\
	&=   {
		\exp\set{\frac{t \brc{1-L} + 1}{2}} \int \md z  \sum_{\bvv} \exp\set{ {\sum_{a=1}^t \vv_a } \brc{ \sqrt{L} z +  E \vv_{t+1} } }
	}\\
	&=  {
		\exp\set{\frac{t \brc{1-L} + 1}{2}} \int \md z \sum_{ \vv_{t+1}} \sum_{\bvv \backslash \vv_{t+1}}  \exp\set{ {\sum_{a=1}^t \vv_a } \brc{ \sqrt{L} z +  E \vv_{t+1} } }
	}\\
	&=  {
		\exp\set{\frac{t \brc{1-L} + 1}{2}} \int \md z \sum_{ \vv_{t+1}} \sum_{\bvv \backslash \vv_{t+1}}  \prod_{a=1}^t \exp\set{ \vv_a  \brc{ \sqrt{L} z +  E \vv_{t+1} } }
	}\\
	&=   {
		\exp\set{\frac{t \brc{1-L} + 1}{2}} \int \md z \sum_{ \vv_{t+1}} \brc{\sum_{\vv} \exp\set{ \vv  \brc{ \sqrt{L} z +  E \vv_{t+1} } }}^t
	}\\
	&=  {
		\exp\set{\frac{t \brc{1-L} + 1}{2}} \int \md z \sum_{ \vv_{t+1}} \brc{2 \cosh \brc{  \sqrt{L} z +  E \vv_{t+1} } }^t }
\end{align}
\label{eq:M_deriv}
\end{subequations}

We now define the random variable $S$ to be uniform on $\set{\pm 1}$, and the random variable $Z$ to be zero-mean unit-variance Gaussian. We then have
\begin{subequations}
	\begin{align}
	\log \mam_{\rm e}\brc{L,E} &=  \log \set{ 2 \exp\set{\frac{t \brc{1-L} + 1}{2}} \Ex{ \brc{2 \cosh  \brc{ \sqrt{L} Z +  E S } }^t}{Z,S}}\\
	&=  \log 2 + \frac{t \brc{1-L} + 1}{2}  + \log\Ex{ \brc{2 \cosh \brc{ \sqrt{L} Z +  E S } }^t}{Z,S}\\
	&=  \brc{t+1}  \log 2 + \frac{t \brc{1-L} + 1}{2}  + \log\Ex{  \cosh^t \brc{ \sqrt{L} Z +  E S } }{Z,S}
\end{align}
\end{subequations}
Using the symmetry of $\cosh{\cdot}$, we can write
\begin{align}
	\Ex{  \cosh^t \brc{ \sqrt{L} Z +  E S } }{Z,S} = \Ex{  \cosh^t \brc{ \sqrt{L} Z +  E } }{Z}.
\end{align}
Defining function $W_t\brc{\cdot}$ as
\begin{align}
	W_t\brc{L,E} \coloneqq \log \Ex{  \cosh^t \brc{ \sqrt{L} Z +  E } }{Z},
\end{align}
we finally conclude that
\begin{align}
	\log \mam_{\rm e}\brc{L,E} 
	&=  \brc{t+1}  \log 2 + \frac{t \brc{1-L} + 1}{2}  + W_t\brc{L,E}.
\end{align}

We now calculate the derivatives of $\log \mam_{\rm e}\brc{L,E}$. To this end, we note that
\begin{align}
	\frac{\partial}{\partial E } W_t\brc{L,E} = t \frac{ \Ex{  \cosh^{t-1} \brc{ \sqrt{L} Z +  E } \sinh \brc{ \sqrt{L} Z +  E } }{Z}}{ \Ex{  \cosh^t \brc{ \sqrt{L} Z +  E } }{Z}}.
\end{align}
Moreover, for the derivative with respect to $L$, we have
\begin{align}
	\frac{\partial}{\partial L } W_t\brc{L,E} = \frac{t}{2\sqrt{L}} \frac{ \Ex{  \cosh^{t-1} \brc{ \sqrt{L} Z +  E } \sinh \brc{ \sqrt{L} Z +  E } Z }{Z}}{ \Ex{  \cosh^t \brc{ \sqrt{L} Z +  E } }{Z}}
\end{align}
Using integration by part, one can further show that for $X\sim \man\brc{\eta , \rho^2}$, we have
\begin{align}
\Ex{  \cosh^{t-1} \hspace*{-.5mm}\brc{ X} \sinh \brc{ X } \frac{X-\eta}{\rho^2} }{ } \hspace*{-1mm}= \hspace*{-1mm} \Ex{  \cosh^{t} \brc{ X}  }{ }  \hspace*{-1mm} + \hspace*{-1mm} \brc{t\hspace*{-.5mm}-\hspace*{-.5mm}1} \Ex{  \cosh^{t-2} \brc{X } \sinh^2 \brc{ X }  }{ }.
\end{align}
Hence, we conclude that
\begin{subequations}
	\begin{align}
	\frac{\partial}{\partial L}\log \mam_{\rm e}\brc{L,E} &= \frac{ t \brc{t-1} }{2}  \frac{\Ex{  \cosh^{t-2} \brc{ \sqrt{L} Z +  E } \sinh^2 \brc{ \sqrt{L} Z +  E }  }{Z}}{\Ex{  \cosh^{t} \brc{ \sqrt{L} Z +  E }  }{Z} } \\
	\frac{\partial}{\partial E} \log \mam_{\rm e}\brc{L,E} &=   t \frac{ \Ex{  \cosh^{t-1} \brc{ \sqrt{L} Z +  E } \sinh \brc{ \sqrt{L} Z +  E } }{Z}}{ \Ex{  \cosh^t \brc{ \sqrt{L} Z +  E } }{Z}}
\end{align}
\end{subequations}

By substituting into \eqref{eq:Fix_Final_1}, we finally have
\begin{subequations}
	\begin{align}
	Q&=  \frac{\Ex{  \cosh^{t-2} \brc{ \sqrt{L} Z +  E } \sinh^2 \brc{ \sqrt{L} Z +  E }  }{Z}}{\Ex{  \cosh^{t} \brc{ \sqrt{L} Z +  E }  }{Z} } ,\\
	m &=  \frac{ \Ex{  \cosh^{t-1} \brc{ \sqrt{L} Z +  E } \sinh \brc{ \sqrt{L} Z +  E } }{Z}}{ \Ex{  \cosh^t \brc{ \sqrt{L} Z +  E } }{Z}}.
\end{align}
\end{subequations}
whose limits as $t\downarrow 0$  read
\begin{subequations}
	\begin{align}
	Q&=   \Ex{ \tanh^2 \brc{ \sqrt{L} Z +  E } }{Z}, \\
	m &=  \Ex{ \tanh \brc{ \sqrt{L} Z +  E }  }{Z}.
\end{align}
\end{subequations}
This concludes the fixed-point equations.

\subsection{Replica Symmetric Free Energy}
We now calculate the free energy for the \ac{rs} structure. We start with noting that with \ac{rs} structured $\mQ^\star$, matrix $\mG\brc{\mQ^\star} $ has two eigenvalues; namely, \begin{itemize}
	\item $g\brc{Q}$ with multiplicity $t-1$, and
	\item $g\brc{Q} + t f\brc{Q,m}$ with multiplicity $1$.
\end{itemize}
We therefore have
\begin{align}
	\log \det \mG\brc{\mQ} = \brc{t-1} \log g\brc{Q}  + \log\brc{g\brc{Q} + t f\brc{Q,m}}.
\end{align}
By replacing the \ac{rs} structures, we further have
\begin{align}
	\tr{ \mL \mQ} = \brc{t+1} + t\brc{t-1} Q L  + 2 t E m
\end{align}
and finally using the Riesz identity, we have for $t$ in the vicinity of $t=0$, 
\begin{align}
	\mam\brc{\mL} &= \brc{t+1} \log 2 + \frac{t \brc{1-L} + 1}{2}  + t\Ex{ \log  \cosh \brc{ \sqrt{L} Z +  E }  }{Z}.
\end{align}

We now apply the replica continuity assumption and write
\begin{align}
	\beta \maf \brc{\beta} = \lim_{t\downarrow 0} \frac{\partial}{\partial t}  \brc{ R \log 2 +  \frac{1}{2} \log{ \det\set{\mG\brc{\mQ}} } + \frac{R}{2} \tr{\mL \mQ }  - R \mam \brc{\mL} }
\end{align}
which %leads to
%\begin{align}
%	\beta \maf \brc{\beta} = &-R \log 2 + \frac{1}{2} \log g\brc{Q} - R \Ex{ \log  \cosh \brc{ \sqrt{L} Z +  E }  }{Z} \nonumber \\
%	&+R\brc{E m + \frac{\brc{1-Q}L}{2}} + \frac{f\brc{Q,m}}{2 g\brc{Q}}
%\end{align}
%This 
concludes the \ac{rs} solution given in Proposition~\ref{proposition:RS}.

\section{Decoupling Principle}
\label{app:dec}

This appendix aims to give a proof for Proposition~\ref{Prop:2}. To this end, we use the multivariate form of Carleman's theorem which indicates the determinacy of the moment problem with uniformly bounded moments \cite{aheizer1965classical}.

Let us start with the decoupled setting described in Proposition~\ref{Prop:2}. For this setting, the $\brc{\kappa,\tau}$ joint moment of the label $S$ and recovery $\hat{S}$ is given by
\begin{align}
	M_{\kappa,\tau} &= \Ex{S^\kappa \hat{S}^\tau }{} \\
	&=\Ex{S^\kappa \tanh^\tau\brc{\frac{S+\rho Z}{\hat{\rho}^2} } }{}
\end{align}
Using the odd symmetry of the $\tanh\brc{\cdot}$, we can conclude that
\begin{align}
	M_{\kappa,\tau} = \brc{1+\brc{-1}^{\kappa+\tau}} \Ex{ \tanh^\tau\brc{\frac{1+\rho Z}{\hat{\rho}^2} } }{}.
\end{align}

Since $M_{\kappa,\tau} $ is bounded for all non-negative integers $\kappa$ and $\tau$, Carleman's theorem indicate that the mapping from corresponding distribution, i.e., the joint distribution of $S$ and $\hat{S}$, to the joint moments is unique. This means that any other random pairs with same joint moments have the same distributions. Therefore, it is sufficient to show that for any $k\in\dbc{K}$, the joint $\brc{\kappa,\tau}$ moment asymptotically converges to $M_{\kappa,\tau} $.

Consider index $k\in\dbc{K}$. The joint $\brc{\kappa,\tau}$ moment of the label $s_k$ and its corresponding recovery is given by 
\begin{subequations}
	\begin{align}
	\Ex{ s_k^\kappa \hat{s}_k^\tau}{} &= \Ex{ s_k^\kappa \brc{\sum_{\bu}    p_S\brc{\bu\vert \byy,\mav} u_k }^\tau  }{}\\
	&= \Ex{s_k^\kappa \sum_{\set{\bu_a}} \prod_{a=1}^\tau  u_{a,k}    p_S\brc{\bu_a\vert \byy,\mav}   }{}
\end{align}
\end{subequations}
where $p_S\brc{\bu\vert \byy,\mav }$ is the posterior distribution of the labels given by
\begin{align}
	p_S\brc{\bu\vert \byy,\mav }= \frac{\displaystyle  q_Y\brc{\byy\vert \bu,\mav} }{\displaystyle \sum_{\bu} q_Y\brc{\byy\vert \bu,\mav} },
\end{align}
with $q_Y\brc{\byy\vert \bu,\mav} $ denoting the likelihood of $\bu$ determined for the postulated noise variance $\hat{\sigma}^2$.

To determine the joint moment, we define the \textit{replicated} posterior distribution 
\begin{align}
	  p_S^{\brc{\tau}}\brc{\set{\bu_a}\vert \byy,\mav}  \coloneqq  \prod_{a=1}^t  p_S\brc{\bu_a\vert \byy,\mav} \label{eq:BoltzT},
\end{align}
which is a Boltzmann distribution with Hamiltonian
\begin{align}
\mae^{\brc{\tau}}\brc{\set{\bu_a}\vert \byy , \mav} \coloneqq  \sum_{a=1}^\tau  \mae\brc{\bu_a\vert \byy , \mav}
\end{align}
at inverse temperature $\beta = 1/\hat{\sigma}^2$. Using this definition, the joint moment is represented as
\begin{align}
	\Ex{ s_k^\kappa \hat{s}_k^\tau}{} = \Ex{s_k^\kappa \prod_{a=1}^\tau  u_{a,k}   }{}
\end{align}
where $\brc{\bs,\set{\bu_a}}$ are jointly distributed with $2^{-K} p_S^{\brc{\tau}}\brc{\set{\bu_a}\vert \byy,\mav} $ conditioned by $\by$ and $\mav$. 

Noting the fact that the joint distribution is a Boltzmann distribution, we follow the classical averaging trick in statistical mechanics. Namely, we define the function
\begin{align}
f\brc{\bs,\set{\bu_a}} \coloneqq \frac{1}{\abs{\setI_k }}\sum_{ j\in \setI_k} s_j^\kappa  \prod_{a=1}^\tau u_{a,j} 
\end{align}
for some index set $\setI_k \subset \dbc{K}$ which contains $k$ and is of width $\abs{ \setI_k } = 1+ \alpha K$ for some $\alpha \leq 1$. We then construct the modified partition function for the given to Boltzmann distribution as
\begin{align}
	\maz_{\beta,h} \brc{\tau} 	&= \Ex{ \sum_{\set{\bu_a}} \exp\set{- \beta \mae^{\brc{\tau}}\brc{\bu_a\vert \byy , \mav} + h f\brc{\bs,\set{\bu_a}}  }  }{}.
\end{align}
It is then easy to show that at inverse temperature $\beta$, we have
\begin{align}
 \Ex{f\brc{\bs,\set{\bu_a}}}{} =\lim_{h\downarrow 0}  \frac{\partial }{\partial h} \log \maz_{\beta,h} \brc{\tau} .
\end{align}
We now note that the asymptotic joint moment is given at the limit $K\uparrow\infty$ and $\alpha\downarrow 0$. We hence conclude that
\begin{align}
	\Ex{s_k^\kappa \hat{s}_k^\tau }{} &= \lim_{\alpha\downarrow 0}  \lim_{K\uparrow \infty} \lim_{h\downarrow 0}  \frac{\partial }{\partial h} \log \maz_{\beta,h} \brc{\tau} 	
\end{align}
when we set $\beta = 1/\hat{\sigma}^2$.

The modified partition function is closely similar to the replicated partition function $\maz_{\beta}\brc{t}$ determined in Appendix~\ref{app:2}. We hence follow the same lines of derivations to conclude that
\begin{align}
	\maz_{\beta,h} \brc{\tau} %&= 2^{-NR}  \sum_{\set{\bu_{t+1},\bu_a}} \Ex{ \exp\set{ h f\brc{\bu_{t+1},\set{\bu_a}} -  \frac{\beta}{2}  \sum_{a=1}^t \sum_{n=1}^{N} \brc{\mav_n\brc{\bu_{t+1}} + \rmw_n - \mav_n\brc{\bu_a} }^2 }   }{\bww, \mav}\\
	&= 2^{-K}  \sum_{\mU\in\set{\pm 1}^{K\times\brc{\tau+1}} }  \exp\set{ h f\brc{\mU } } {\det\set{\mG\brc{\inner{\mU; \mU} }} }^{-N/2}
\end{align}
for $\mG\brc\cdot$ defined in \eqref{eq:G_U}, matrix $\mU = \dbc{\bu_1,\ldots,\bu_\tau, \bu_{\tau+1}}$ and 
\begin{align}
f\brc{\mU } \coloneqq f\brc{\bu_{\tau+1},\set{\bu_a}}.
\end{align}

The current form is a simple modification of the derivation in \eqref{eq:Z_11}. We hence follow the same technique and exchange the summation variable with respect to the sub-shells defined in \eqref{eq:Subshell}. Using the Laplace representation of the Dirac impulse, we finally conclude that
\begin{align}
	\maz_{\beta, h} \brc{\tau} 	&= 2^{-K}  \int \dif \mQ \dif \mL \exp\set{ -N \brc{\frac{1}{2} \log{ \det\set{\mG\brc{\mQ}} } + \frac{R}{2} \tr{\mL \mQ }  - R \mam_h \brc{\mL} } }. 
\end{align}
for $\bv\in\set{\pm 1}^{\tau+1}$, some integral measures $\dif\mQ$ and $\dif \mL$ and
\begin{align}
	\mam_h \brc{\mL} = \Ex{\log \sum_{\bvv}  \exp\set{h I f_0\brc{\bvv} + \frac{1}{2}  \bvv^\trp \mL \bvv} }{I} ,
\end{align}
where function $f_0\brc{\bvv}$ is defined as
\begin{align}
	f_0\brc{\bvv} = \vv_{\tau+1}^\kappa \prod_{a=1}^\tau \vv_a,
\end{align}
and $I \in \set{0,1}$ is a binary random variable with 
\begin{align}
	\Pr\set{I=1} = 1-\Pr\set{I=0} = \frac{\abs{\setI_k}}{K}.
\end{align}

By standard derivation, we have
\begin{align}
 \lim_{h\downarrow 0}  \frac{\partial }{\partial h} \log \maz_{\beta,h} \brc{\tau} 	= \int \dif \mQ \dif \mL \frac{\exp\set{ -N F\brc{\mQ,\mL} } }{\int \dif \mQ \dif \mL \exp\set{ -N F\brc{\mQ,\mL} }}  \sum_{\bvv} f_0\brc{\bvv}  q_V^{\brc{\tau}}\brc{ \bvv} 
\end{align}
where $F\brc{\mQ,\mL}$ is defined as in \eqref{eq:F_QL}, and
\begin{align}
	q^{\brc{\tau}}_V \brc{ \bvv\vert\mL}  = \frac{\displaystyle \exp\set{ \frac{1}{2}  \bvv^\trp \mL \bvv} }{\displaystyle \sum_{\bvv}  \exp\set{ \frac{1}{2}  \bvv^\trp \mL \bvv} }. \label{eq:q_tau}
\end{align}
By standard deviations arguments, we can further determine the asymptotic limit as
\begin{align}
\lim_{K\uparrow\infty}	\lim_{h\downarrow 0}  \frac{\partial }{\partial h} \log \maz_{\beta,h} \brc{\tau} 	=  \sum_{\bvv} f_0\brc{\bvv}  q_V^{\brc{\tau}} \brc{ \bvv\vert \mL^\star} 
\end{align}
where $\brc{\mQ^\star , \mL^\star}$ is the saddle point\footnote{Note that this is the same saddle point as the one derived for the replica solution in Appendix~\ref{app:2}.} of $F\brc{\mQ,\mL}$. As the right-hand side does not depend on $\alpha$, we finally have
\begin{align}
	\Ex{s_k^\kappa \hat{s}_k^\tau }{} &= \sum_{\bvv} f_0\brc{\bvv}  q_V^{\brc{\tau}}\brc{ \bvv\vert \mL^\star} 
\end{align}
for $\beta = 1/\hat{\sigma}^2$.

Comparing \eqref{eq:q_tau} and \eqref{eq:q_v}, it is observed that the joint moment is given by the same replicated distribution. This is a proof for the generic form of the decoupling principle; see \cite{bereyhi2020thesis,bereyhi1}. We however need to show that this decoupled form reduces under \ac{rs} to the same form given by Proposition~\ref{Prop:2}. We hence write the derived joint distribution in terms of the \textit{decoupled distribution of replicas}, i.e., the one given in \eqref{eq:q_v}, and use the replica continuity assumption to conclude that
\begin{align}
	\Ex{s_k^\kappa \hat{s}_k^\tau }{} &= \lim_{t\downarrow 0} \sum_{\bvv} f_0\brc{\bvv}  q_V \brc{ \bv , \tilde{\bv} \vert \mL^\star} 
\end{align}
where $\tilde{\bv} = \dbc{\vv_{\tau+2},\ldots,\vv_{t+1}}$. Here, $t$ denotes the number of replicas which is continued to the real axis via the replica continuity assumption\footnote{Remember that unlike $t$, $\tau$ is a fixed integer.}. We now set the replica correlation matrix to be \ac{rs} and follow the same lines of derivations as those taken in \eqref{eq:M_deriv} to show that
	\begin{align}
	 \Ex{s_k^\kappa \hat{s}_k^\tau }{} &= \lim_{t\downarrow 0}  \Ex{S^\kappa \frac{ \displaystyle  \int \md z  \cosh^{t-\tau} \brc{  \frac{\displaystyle \rho z +  S}{\displaystyle \hat{\rho}^2} } \sinh^{\tau} \brc{  \frac{\displaystyle \rho z +  S}{\displaystyle \hat{\rho}^2} } }{ \displaystyle \Ex{\int \md z  \cosh^{t} \brc{  \frac{\displaystyle \rho z +  S}{ \displaystyle \hat{\rho}^2} }}{S}  }}{S},
\end{align}
for some uniform random variable $S$ on $\set{\pm 1}$. Using symmetry of the hyperbolic functions, we can further simplify the expression as
\begin{subequations}
	\begin{align}
	\Ex{s_k^\kappa \hat{s}_k^\tau }{} &=  \brc{-1}^{\kappa+\tau}  \lim_{t\downarrow 0} {\frac{ \displaystyle\int \md z  \cosh^{t-\tau} \brc{  \frac{\displaystyle \rho z +  1}{ \displaystyle \hat{\rho}^2} } \sinh^{\tau} \brc{  \frac{\displaystyle \rho z +  1}{ \displaystyle \hat{\rho}^2} } }{ \displaystyle {\int \md z  \cosh^{t} \brc{ \frac{\displaystyle \rho z +  1}{ \displaystyle \hat{\rho}^2}  }}  }}\\
	&= \brc{-1}^{\kappa+\tau}  \int \md z  \tanh^{\tau} \brc{  \frac{ \displaystyle \rho z +  1}{ \displaystyle \hat{\rho}^2}  }\\
	&=M_{\kappa,\tau}.
\end{align}
\end{subequations}
The proof is concluded by applying Carleman's theorem.

\section{The Wiretap Channel}
\label{app:wyner}
Consider a Gaussian wiretap channel with two receivers: a legitimate receiver and an eavesdropper. In this channel, the transmitter uses encoder
\begin{align}
	f_N \brc\cdot : \dbc{2^{NR}}\mapsto \setR^N
\end{align}
to encode its message $M\in\dbc{2^{NR}}$ into the sequence $\bxx = \dbc{\xx_1, \ldots , \xx_N}^\trp$, such that
\begin{align}
	\frac{1}{N} \sum_{n=1}^N \abs{\xx_n}^2 \leq P
\end{align}
for some \textit{average transmit power} $P$. It then transmits this sequence over an \ac{awgn} channel using $N$ subsequent transmission time intervals. The legitimate receiver therefore receives sequence $\byy = \dbc{\yy_1, \ldots,\yy_N}^\trp$, where $\yy_n$ reads
\begin{align}
	\yy_n = \xx_n + \xi_n,
\end{align}
with $n\in \dbc{N}$ and $\xi_n \sim \man \brc{0, \sigma^2}$. Using the decoder 
\begin{align}
	g_N \brc\cdot : \setR^N \mapsto \dbc{2^{NR}}
\end{align}
the legitimate receiver decodes the transmitted message as $\hat{M} = g_N\brc{\byy}$.

The eavesdropper overhears the transmitted sequence $\bxx$ over an independent \ac{awgn} channel and receives sequence $\bzz = \dbc{\zz_1, \ldots,\zz_N}^\trp$ with $\zz_n$ being
\begin{align}
	\zz_n = \xx_n + \zeta_n
\end{align}
for $n\in \dbc{N}$ and noise term $\zeta_n \sim \man \brc{0, \rho^2}$.

Noting that each realization of $M$ is encoded via $NR$ bits and transmitted within $N$ time slots, it is concluded that the \textit{transmission rate} in this setting is $R$ bits per channel use.

\subsection{Reliable and Secure Transmission}
From information-theoretic viewpoints, we ensure reliability and security of transmission by imposing the following constraints \cite{liang2009information}:
\begin{itemize}
	\item To have a \textit{reliable} communication link between the transmitter and the legitimate receiver, we require to have
	\begin{align}
		\Pr\set{ M \neq \hat{M} } \leq \epsilon_N,
	\end{align}
	for some $\epsilon_N$ tending to zero as $N$ grows large.
	\item The \textit{secrecy} constraint is imposed under the assumption of \textit{infinite} computational capacity of the eavesdropper. In this regards, we desire that information leakage $L_N$, defined as
	\begin{align}
		L_N = \frac{1}{N} I \brc{ M ; \bzz } ,
	\end{align}
	be bounded by a vanishing upper bound, i.e., 
	\begin{align}
		L_N \leq \delta_N \label{eq:Sec_Const}
	\end{align}
	for some sequence $\delta_N$ tending to zero as $N$ grows large. This constraint guarantees that even with unlimited computational power, the eavesdropper is not capable of recovering the transmitted message from its observation.
\end{itemize}

Based on these two constraints, we define an \textit{achievable secrecy rate} as
\begin{definition}[Achievable secrecy rate]
	A given value for $R$ is called an \textit{achievable secrecy rate}, if there exists a sequence of encoders and decoders indexed by $N$, such that both the constraints are satisfied asymptotically.
\end{definition}
The \textit{secrecy capacity} is then defined as follows:
\begin{definition}[Secrecy capacity]
	The \textit{maximum} achievable secrecy rate, often denoted by $C$, is further called \textit{secrecy capacity} of this channel.
\end{definition}

Wyner in \cite{wyner1975wire} determined the secrecy capacity for a generic discrete-alphabet channel. The result was later extended to the Gaussian wiretap channel in \cite{leung1978gaussian} which indicates
\begin{align}
	C = \frac{1}{2}\dbc{ \log \brc{1+\frac{P}{\sigma^2}} - \log \brc{1+\frac{P}{\rho^2}}}^{+}
\end{align}
where $\dbc{x}^+ \coloneqq \max\set{0,x}$. The result is intuitive: The transmitter can transmit data confidentially to the legitimate receiver, if and only if there is a channel with better conditions between him and the receiver compared to the overhearing link. The secrecy capacity is moreover the difference between the channel capacity terms for the legitimate and overhearing channels.

\bibliography{ref}
\bibliographystyle{IEEEtran}
\end{document}